\newcommand{\margus}[1]{\textcolor{red}{\ifmmode \text{[Margus: #1]}\else [Margus: #1] \fi}}
\newcommand{\olli}[1]{\textcolor{blue}{\ifmmode \text{[Olli: #1]}\else [Olli: #1] \fi}}
\newcommand{\tom}[1]{\textcolor{purple}{\ifmmode \text{[Tom: #1]} \else [Tom: #1] \fi}}
\newcommand{\gabriel}[1]{\textcolor{magenta}{\ifmmode \text{[Gabriel: #1]} \else [Gabriel: #1] \fi}}
\newcommand{\ERE}[1][{}]{\mathcal{R}_{#1}}
\newcommand{\RE}[1][{}]{\mathcal{R}^{\textsc{std}}_{#1}}
\newcommand{\eps}{(\hspace{-.12em})}
\newcommand{\inter}{\land}
\newcommand{\union}{\lor}
\newcommand{\compl}{\texttt{\char`~}}
\newcommand{\fuse}{\oplus}
\newcommand{\conc}{\cdot}
\newcommand{\st}{{\ast}}
\newcommand{\plus}{{+}}
\newcommand{\One}[2][{}]{\mathit{One}_{#1}(#2)}
\newcommand{\DFA}[1][{}]{\mathit{DFA}_{#1}}
\newcommand{\NullName}{\mathit{Null}}
\newcommand{\Null}[1]{\NullName(#1)}
\newcommand{\BC}[1]{\mathbb{B}(#1)}
\newcommand{\BCp}[1]{\mathbb{B}^+\!(#1)}
\newcommand{\DNF}[1]{\mathrm{DNF}(#1)}
\newcommand{\A}{\mathcal{A}}
\newcommand{\PA}{\Psi_{\!\A}}
\newcommand{\den}[2][{}]{\llbracket #2 \rrbracket_{#1}}
\newcommand{\andA}{\mathrel{\&}}
\newcommand{\orA}{\mathrel{|}}
\newcommand{\notA}{{\mathrm{!}}}
\newcommand{\D}{\mathbb{D}}
\newcommand{\Do}{\D^\omega}
\newcommand{\Ds}{\D^*}
\newcommand{\Dp}{\D^+}
\newcommand{\Compl}[1]{\complement\!\left(#1\right)}
\newcommand{\bD}{\mathfrak{D}}
\newcommand{\TT}[2][{}]{{\ifthenelse{\equal{#2}{}}{\mathcal{T}_{#1}\,}{\mathcal{T}_{#1}\langle #2\rangle}}}
\newcommand{\Leaves}{\Phi}
\newcommand{\leafof}[2]{#1{\downarrow}#2}
\newcommand{\LeavesOf}[1]{\textit{Leaves}(#1)}
\newcommand{\CondOf}[1]{\textit{Conds}(#1)}
\newcommand{\LangName}{\mathscr{L}}
\newcommand{\Lang}[2][{}]{\ifthenelse{\equal{#1}{}}{\ifthenelse{\equal{#2}{}}{\LangName}{\LangName(#2)}}{\LangName_{#1}(#2)}}
\newcommand{\FLangName}{\mathcal{L}}
\newcommand{\FLang}[2][{}]{\ifthenelse{\equal{#1}{}}{\ifthenelse{\equal{#2}{}}{\FLangName}{\FLangName(#2)}}{\FLangName_{#1}(#2)}}
\newcommand{\DERIV}[2][{}]{\mathbf{D}_{#1}(#2)}
\newcommand{\derName}{\partial}
\newcommand{\der}[1]{\derName(#1)}
\newcommand{\derivName}{\delta}
\newcommand{\deriv}[2][{}]{\ifthenelse{\equal{#2}{}}{{\derivName_{#1}}}{\ifthenelse{\equal{#1}{}}{\derivName(#2)}{D_{#1}^{\textit{Brz}}(#2)}}}
\newcommand{\Qi}[1][{}]{Q^{\textrm{\tiny0}}_{\textrm{\scriptsize$\begin{array}{@{}c@{}}\\[-2.1em]\!#1\\[-1em]\end{array}$}}}
\newcommand{\Minterms}[1]{\textit{Minterms}(#1)}
\newcommand{\eqdef}{\stackrel{\textsc{\tiny def}}{=}}
\newcommand{\IFF}{\Leftrightarrow}
\newcommand{\IMP}{\Rightarrow}
\newcommand{\Nat}{\mathbb{N}}
\newcommand{\True}{\mathbf{true}}
\newcommand{\False}{\mathbf{false}}
\newcommand{\bnot}{\mathbf{not}}
\newcommand{\bor}{\mathrel{\mathbf{or}}}
\newcommand{\band}{\mathrel{\mathbf{and}}}
\newcommand{\ITEBrace}[3]{
\left\{
\begin{array}{@{}l@{\;}l@{}}
#2, & \textrm{\textbf{if} $#1$;} \\
#3, & \textrm{\textbf{otherwise}.}
\end{array}
\right.
}
\newcommand{\IfThenElse}[3]{\textbf{if}\; #1\; \textbf{then}\; #2\; \textbf{else}\; #3}
\newcommand{\IfThen}[2]{\textbf{if}\; #1\; \textbf{then}\; #2}
\newenvironment{myeq}{\\[.2em]\mbox{~}\begin{math}}{\end{math}\\[.2em]}
\newcommand{\Until}{\mathrel{\mathbf{U}}}
\newcommand{\Next}{\mathbf{X}}
\newcommand{\Finally}{\mathbf{F}}
\newcommand{\Globally}{\mathbf{G}}
\newcommand{\Release}{\mathrel{\mathbf{R}}}
\newcommand{\ite}[3]{(#1 \,{\boldsymbol{{?}}}\, #2 \,{\boldsymbol{{:}}}\, #3)}
\newcommand{\ifthen}[2]{(#1 \,{\boldsymbol{{?}}}\, #2)}
\newcommand{\DERS}[1]{\mathrel{\raisebox{-.1em}{\scriptsize$\xrightarrow{\begin{array}{@{}c@{}}#1\\[-.4em]\end{array}}$}}}
\newcommand{\tr}[3]{{#1}\,{\DERS{#2}}\,{#3}}
\newcommand{\LTL}[1][{}]{\textit{LTL}_{#1}}
\newcommand{\LTLp}[1][{}]{\textit{LTL}_{#1}^+}
\newcommand{\RLTL}[1][{}]{\textit{RLTL}_{#1}}
\newcommand{\RLTLp}[1][{}]{\textit{RLTL}_{#1}^+}
\newcommand{\Rest}[2][1]{#2_{(#1..)}}
\newcommand{\ith}[2]{#1_{#2}}
\newcommand{\Prefix}[2]{#2_{(..#1)}}
\newcommand{\M}[1][{}]{\mathcal{M}_{#1}}
\newcommand{\N}[1][{}]{\mathcal{N}_{#1}}
\newcommand{\Pair}[2]{\langle #1, #2\rangle}
\newcommand{\ABA}[1][{}]{\textit{ABW}_{#1}}
\newcommand{\NBA}[1][{}]{\textit{NBW}_{#1}}
\newcommand{\DBA}[1][{}]{\textit{DBW}_{#1}}
\newcommand{\Osat}[1]{O^{\textrm{sat}}_{#1}}
\newcommand{\limplies}{\rightarrow}
\newcommand{\tf}{\varrho}
\newcommand{\tfc}{\rho}
\newcommand{\tfs}[3][{}]{\hat{\tf}_{#1}(#2,#3)}
\newcommand{\ders}[3][{}]{\hat{\derName}_{#1}(#2,#3)}
\newcommand{\derivs}[3][{}]{\hat{\derivName}_{#1}(#2,#3)}
\newcommand{\restr}[2]{#1{\restriction}#2}
\newcommand{\INFname}{\mathit{INF}}
\newcommand{\INF}[1]{\INFname(#1)}
\newcommand{\setset}[1]{\{\{#1\}\}}
\newcommand{\set}[1]{\{#1\}}
\newcommand{\DomainOf}[1]{\textit{Dom}(#1)}
\newcommand{\targetsof}[1]{\mathit{States}(#1)}
\newcommand{\aprod}{\otimes}
\newcommand{\bvarphi}{\boldsymbol{\varphi}}
\newcommand{\bpsi}{\boldsymbol{\psi}}
\newcommand{\bphi}{\boldsymbol{\phi}}
\newcommand{\bQ}{\mathbf{Q}}
\newcommand{\bP}{\mathbf{P}}
\newcommand{\boldp}{\mathbf{p}}
\newcommand{\boldq}{\mathbf{q}}
\newcommand{\Fin}[1]{#1\aprod \{\emptyset\}}
\newcommand{\tfINF}{\boldsymbol{\tf}}
\newcommand{\unaryop}{\textrm{\scriptsize$\blacklozenge$}}
\newcommand{\UNSAT}[1]{\textit{UNSAT}(#1)}
\newcommand{\mt}[1]{\widehat{#1}}
\newcommand{\AElimName}{\textrm{\AE}}
\newcommand{\AElim}[1]{\AElimName\left(#1\right)}
\newcommand{\mytikz}[5]{
\begin{math}
\begin{array}{@{}c@{}}
\\[#1]
\hspace{#3}
#5
\hspace{#4}
\\[#2]
\end{array}
\end{math}
}
\newcommand{\lowerlabel}[2][0em]{\begin{array}{@{}c@{}}#2\\[#1]\end{array}}
\newcommand{\raiselabel}[2][0em]{\begin{array}{@{}c@{}}\\[#1]#2\end{array}}
\newcommand{\pconj}[1]{\pi_{#1}}
\newcommand{\pform}[1]{{\,\widetilde{#1}\,}}
\newcommand{\pset}[1]{\lfloor{#1}\rfloor}
\newcommand{\toconj}[1]{{\curlywedge}#1}
\newcommand{\wcl}[1]{\boldsymbol{\{}#1\boldsymbol{\}}}
\newcommand{\nwcl}[1]{\boldsymbol{!\{}#1\boldsymbol{\}}}
\newcommand{\ocl}[1]{#1^{\omega}}
\newcommand{\eimpl}{\mathrel{{\diamond}\hspace{-.13em}{\shortrightarrow}}}
\newcommand{\uimpl}{\mathrel{\raisebox{.06em}{\scriptsize$\square$}\hspace{-.13em}{\shortrightarrow}}}
\newcommand{\anchorpred}{\$}
\newcommand{\anchorsymb}{\texttt{\scriptsize\char`$}}
\newcommand{\MinSet}[1]{\textit{MinSet}(#1)}
\newcommand{\prece}{\prec_{\epsilon}}
\newcommand{\feq}{\cong}
\newcommand{\Acc}[1]{\textit{Acc}(#1)}
\newcommand{\positive}[1]{(0\,{<}\,#1)}
\newcommand{\divides}[2]{(#1\,{\shortmid}\,#2)}
\newcommand{\ndivides}[2]{(#1\,{\nshortmid}\,#2)}
\newif\ifpgf@rectanglewrc@donecorner@
\def\pgf@rectanglewithroundedcorners@docorner#1#2#3#4{%
  \edef\pgf@marshal{%
    \noexpand\pgfintersectionofpaths
      {%
        \noexpand\pgfpathmoveto{\noexpand\pgfpoint{\the\pgf@xa}{\the\pgf@ya}}%
        \noexpand\pgfpathlineto{\noexpand\pgfpoint{\the\pgf@x}{\the\pgf@y}}%
      }%
      {%
        \noexpand\pgfpathmoveto{\noexpand\pgfpointadd
          {\noexpand\pgfpoint{\the\pgf@xc}{\the\pgf@yc}}%
          {\noexpand\pgfpoint{#1}{#2}}}%
        \noexpand\pgfpatharc{#3}{#4}{\cornerradius}%
      }%
    }%
  \pgf@process{\pgf@marshal\pgfpointintersectionsolution{1}}%
  \pgf@process{\pgftransforminvert\pgfpointtransformed{}}%
  \pgf@rectanglewrc@donecorner@true
}
  \savedmacro\cornerradius{%
    \edef\cornerradius{\pgfkeysvalueof{/pgf/rectangle corner radius}}%
  }
    \edef\pgf@marshal{%
      \noexpand\pgfpointborderrectangle
      {\noexpand\pgfqpoint{\the\pgf@xb}{\the\pgf@yb}}
      {\noexpand\pgfqpoint{\the\pgf@xc}{\the\pgf@yc}}%
    }%
    \pgfextract@process\borderpoint{}%
       \pgf@rectanglewithroundedcorners@docorner{\cornerradius}{0pt}{0}{90}%
\tikzset{
    buchi/.style={
        >=To,
        initial text=,
        state/.style={rectangle with rounded corners, draw, inner sep=0.1cm, minimum size=10pt},
        boolop/.style={circle, draw, inner sep=0.01cm, minimum size=5pt},
        cond/.style={diamond,aspect=2,draw,inner sep=0.02cm},
        accept/.style={postaction={overlay,double,draw}},
        stem/.style={-},
        move/.style={-To,shorten >=0.02cm},
        macc/.style={-To,shorten >=0.04cm}, % Used for moves to accepting states, which are larger
        acc/.style={shorten >=0.04cm}, % Used for moves to accepting states, which are larger
        movecond/.style={-To},
        btrue/.style={Circle-To,shorten >=0.02cm,shorten <=-0.07cm},
        bfalse/.style={{Circle[black,fill=white]}-To,fill=white,shorten >=0.02cm,shorten <=-0.07cm},
        btruec/.style={Circle-To,shorten <=-0.07cm},
        bfalsec/.style={{Circle[black,fill=white]}-To,fill=white,shorten <=-0.07cm},
        node distance=1cm,
    },
}
\newenvironment{ex}{\begin{example}}{\hfill$\Box$\end{example}}
\newtheorem{thm}{Theorem}
\newtheorem{cor}{Corollary}
\newtheorem{lma}{Lemma}
\begin{document}

\title{Symbolic Automata: $\omega$-Regularity Modulo Theories}

\author{Margus Veanes}
 \orcid{0009-0008-8427-7977}             %% \orcid is optional
 \affiliation{
   \institution{Microsoft Research}            %% \institution is required
   \country{USA}                    %% \country is recommended
 }
 \email{margus@microsoft.com}          %% \email is recommended

 \author{Thomas Ball}
  \affiliation{
  \institution{Microsoft Research}            %% \institution is required
   \country{USA}                    %% \country is recommended
  }
 \email{tball@microsoft.com}          %% \email is recommended

\author{Gabriel Ebner}
 \affiliation{
   \institution{Microsoft Research}            %% \institution is required
   \country{USA}                    %% \country is recommended
 }
 \email{gabrielebner@microsoft.com}          %% \email is recommended

\author{Olli Saarikivi}
\orcid{0000-0001-7596-4734}             %% \orcid is optional
\affiliation{
  \institution{Microsoft Research}            %% \institution is required
   \country{USA}                    %% \country is recommended
 }
 \email{olsaarik@microsoft.com}          %% \email is recommended

\begin{abstract}

Symbolic automata are finite state automata that support potentially infinite alphabets,
such as the set of rational numbers, generally applied to regular expressions/languages 
over finite words. 
In symbolic automata (or automata modulo $\A$), an alphabet is
represented by an effective Boolean algebra $\A$, supported by a decision procedure
for satisfiability.
Regular languages over infinite words (so called $\omega$-regular languages) 
have a rich history paralleling that of regular languages over finite words,
with well known applications to model checking via B\"uchi automata and 
temporal logics.

We generalize symbolic automata to support $\omega$-regular 
languages via \emph{symbolic transition terms} and \emph{symbolic derivatives},
bringing together a variety of classic automata and logics in a unified framework
that provides all the necessary ingredients to support symbolic model checking
modulo $\A$. In particular, we define: (1) alternating B\"uchi automata modulo $\A$
 as well (non-alternating) non-deterministic B\"uchi automata modulo $\A$ ($\NBA[\A]$);
  (2) an alternation elimination algorithm that incrementally
  constructs an $\NBA[\A]$ from an $\ABA[\A]$, and can also be used for constructing
  the product of two $\NBA[\A]$;
  (3) a definition of linear temporal logic (LTL) modulo $\A$ that generalizes Vardi's
  construction of alternating B\"uchi automata from LTL, using (2) to go from LTL modulo $\A$ to $\NBA[\A]$
  via $\ABA[\A]$.

  Finally, we present a combination of LTL modulo $\A$ with 
  extended regular expressions modulo $\A$ that generalizes
  the Property Specification Language (PSL).
  Our combination 
  allows regex \emph{complement}, that is not supported in PSL
  but can be supported naturally by using symbolic transition terms.

\end{abstract}

\maketitle

\section{Introduction}
\label{sec:intro}

Classical finite automata, which correspond to regular expressions/languages,
are finite in \emph{three orthogonal dimensions}:
\begin{itemize}
  \item the number of \emph{states} in the automata, which make it a \emph{finite} (or \emph{finite-state}) automata;
  \item the number of \emph{alphabet elements} that induce state transitions;
  \item the length of the \emph{words} (a sequence of alphabet elements) recognized by the automata.
\end{itemize}
Symbolic automata \cite{DV21} are finite state automata that support potentially infinite alphabets, 
such as the set of rational numbers. The transitions of symbolic automata are labeled with predicates 
from an effective Boolean algebra~$\A$. Each atomic predicate and their Boolean combination in $\A$
can compactly represent a (possibly infinite) subset of elements.

The power of symbolic automata 
is to make use of decision procedures for $\A$ \cite{BM08} to avoid the 
need for reduction to the propositional (classic) case, which can incur an exponential blowup.
For example, a propositional encoding of a Boolean expression of size $N$ in $\A$ would
require $O(2^N)$ conjunctions of atomic predicates in a propositional setting (so called
``mintermization''). In some cases, a classical algorithm for finite
automata can be lifted straightforwardly to symbolic automata; in other cases,
entirely new algorithms are required \cite{DV21}.

Regular languages over infinite words (so called $\omega$-regular languages) 
have a rich history paralleling that of regular languages over finite words;
In fact, the two are intimately connected: \cite{McNaughton66}
showed that $\omega$-regular languages are equivalently described
by B\"uchi automata and the union of a finite set of $\omega$-regular expressions,
where each expression has the form ${\alpha_i}^* {\beta_i}^\omega$, 
where $\alpha_i$ and $\beta_i$ are regular expressions for finite words
and $\beta_i$ is infinitely repeated using the $\omega$ operator.
Alternating B\"uchi automata have equivalent expressive power as B\"uchi automata, 
but can be exponentially more compact \cite{MH84}. 

Furthermore, $\omega$-regular languages are an important part of the theoretical foundation 
for model checking of reactive systems \cite{CGP99}, where both the system and (many of) the 
properties to be verified of the system are described by sets of infinite traces. Various
forms of automata, including B\"uchi automata, can be used to describe the system while
temporal logics describe the properties to be checked. In particular,
Linear temporal logic (LTL) \cite{Pnueli77} is used to describe the 
expected behavior of reactive systems and
can be lowered to B\"uchi automata via a variety of algorithms.
As LTL does not fully capture $\omega$-regular languages, 
various extensions of LTL incorporate regular expressions \cite{PSL,spotRef} 
to increase its expressive capability. The LTL-based automata-theoretic approach to model checking
constructs the product of the system automata with the B\"uchi automata representing
the negation of the LTL formula and then analyzes whether the resulting automata has any 
accepting runs \cite{VardiW86}. If there are no accepting runs then the system automata 
is verified with respect to the LTL formula. 

\subsection{Symbolic transition terms and symbolic derivatives modulo $\A$}

We generalize symbolic automata to $\omega$-regular 
languages by first lifting the concept of \emph{transition regexes} 
from \cite{StanfordVB21} to \emph{symbolic transition terms}, which are parameteric over two domains:
\begin{itemize}
  \item \emph{alphabet}, with an effective Boolean algebra $\A$ over it;
  \item \emph{leaves}, which represent the language of the automata, with its own effective Boolean algebra
\end{itemize}
Symbolic transition terms can be used as the basis of defining the semantics of languages/logics and their corresponding
automata using \emph{symbolic derivatives}, providing a unifying framework for automated reasoning and rewriting that operates 
incrementally over a symbolic representation (Section~\ref{sec:tt}).

In our formulation, $\tf(\phi)$ is a
\textbf{\emph{symbolic derivative modulo $\A$}} that works as a
\emph{curried} form of a classical transition function $\tfc(\phi,a)$, 
where $a$ is a concrete alphabet element. Note that the decision of actually
evaluating $\tfc(\phi,a)$ is avoided in $\tf(\phi)$ (each predicate in $\A$ represents
a potentially infinite set of alphabet elements).

An example of an LTL formula modulo an SMT solver is
$\varphi\land\psi$ with $\varphi=\Globally\positive{x}$ and
$\psi=\divides{2}{x}\Until\divides{3}{x}$ where $x$ is a variable of
type integer. The formula states that $x$ is always positive and
remains even until divisible by $3$.  Then, for example,
$(x{\mapsto}2)(x{\mapsto}4)(x{\mapsto}3)\ocl{(x{\mapsto}1)}\models\varphi$.
Here $(x{\mapsto}n)$ is a member of the \emph{infinite alphabet}
consisting of all possible SMT interpretations.  The symbolic
derivative of $\varphi\land\psi$ is the following nested
if-then-else (ITE) expression, or \emph{transition term},
\[
\tf(\varphi\land\psi) =
\ite{\positive{x}}{\ite{\divides{3}{x}}{\varphi}{\ite{\divides{2}{x}}{(\varphi\land\psi)}{\bot}}}{\bot}
\]
whose \emph{conditions} are the SMT predicates $\positive{x}$, $\divides{2}{x}$, and $\divides{3}{x}$,
and whose \emph{leaves} are the LTL formulas $\varphi$, $\varphi\land\psi$, and $\bot$ (\textit{false}).
E.g., ${\tf(\varphi\land\psi)}{(x{\mapsto}8)} = \varphi\land\psi$ because
$\positive{8}$,
$\ndivides{3}{8}$, and 
$\divides{2}{8}$.

A very useful \emph{duality} principle of symbolic derivatives
is that $\tf(\lnot\phi) =\lnot\tf(\phi)$. So, by using the
 standard negation law of ITEs
$\lnot\ite{\alpha}{f}{g}=\ite{\alpha}{\lnot f}{\lnot g}$, it follows 
in this particular case that 
\begin{eqnarray*}
\tf(\lnot(\varphi\land\psi)) &=&
\lnot(\ite{\positive{x}}{\ite{\divides{3}{x}}{\varphi}{\ite{\divides{2}{x}}{(\varphi\land\psi)}{\bot}}}{\bot}) \\
&=&
\ite{\positive{x}}{\ite{\divides{3}{x}}{\lnot\varphi}
{\ite{\divides{2}{x}}{\lnot(\varphi\land\psi)}{\top}}}{\top}
\end{eqnarray*}
The same principle avoids the need for \emph{determinization} of \emph{symbolic finite automata}
in the context of \emph{extended regular expression (regex) constraint solving},
by \emph{lazily propagating} complement of regex
derivatives in Z3 \cite{StanfordVB21}.  It also avoids \emph{symbolic
alternating finite automata
(s-AFA) normalization} \cite{DAntoniKW16} needed
for complement, whose transitions are
represented in flat form by $\tr{q}{\alpha}{\phi}$ where $q$ is a
state, $\alpha$ is a predicate of $\A$, and $\phi$ is a positive
Boolean combination of states.  Currently the only known technique to normalize an s-AFA amounts
to computing all the satisfiable Boolean combinations of the guards that is
exponential time in the worst case.

\subsection{Contributions and Overview}

We show how symbolic transition terms and derivatives can be used to generalize symbolic
automata to work with $\omega$-regular languages modulo an infinite alphabet represented by $\A$,
bringing together a variety of classic automata and logics in a unified framework. In particular, we give:
\begin{itemize}
  \item a definition of alternating B\"uchi automata modulo $\A$ 
  ($\ABA[\A]$) and their relationship to classical alternating B\"uchi automata as well as to (non-alternating) non-deterministic B\"uchi automata modulo $\A$ ($\NBA[\A]$) and deterministic B\"uchi automata modulo $\A$ ($\DBA[\A]$) (Section~\ref{sec:ABA})
  \item an alternation elimination algorithm, a symbolic generalization
  of \cite{MH84} that incrementally
  constructs an $\NBA[\A]$ from an $\ABA[\A]$ (Section~\ref{sec:alt-elim})
  \item a definition of linear temporal logic (LTL) \cite{Pnueli77} modulo $\A$ that generalizes Vardi's
  construction of alternating B\"uchi automata from LTL \cite{Vardi95LTL} -- with the previous results, this gives a way to go from LTL modulo $\A$ to $\NBA[\A]$ (Section~\ref{sec:ltl});
\end{itemize}
Finally, we bring the above results together to demonstrate a
combination of LTL modulo $\A$ with extended regular expressions modulo $\A$ 
($\RLTL[\A]$) that generalizes SPOT \cite{spotRef} and PSL \cite{PSL} (Section~\ref{sec:ltlere}). 
Our combination  allows regex \emph{complement}, that is not supported in SPOT/PSL
but can be supported naturally by using transition terms.
We also lift the classical concept of $\omega$-regular
languages \cite{Buchi60,McNaughton66} as the languages accepted by $\ABA$ so as to
be modulo $\A$, and show that $\RLTL[\A]$ precisely captures $\omega$-regularity modulo $\A$.

Section~\ref{sec:related} reviews related work 
and Section~\ref{sec:future} discusses future work. 
Section~\ref{sec:conclude} concludes the paper.
The appendix has proofs for the theorems and lemmas 
whose proofs were not included in the main part of the paper. 

%++++++++++++++++++++++++++++\input{prelims}
\section{Preliminaries}
\label{sec:prelims}

As a meta-notation throughout the paper
we write $\textit{lhs}\eqdef\textit{rhs}$
to let \textit{lhs} be \emph{equal by definition} to \textit{rhs}.
In the following let $\D$ be a nonempty (possibly infinite) domain of
core elements. 
The natural numbers are denoted by $\Nat$.

\subsection{Effective Boolean Algebras and Boolean Closures}
\label{sec:EBA}

An \emph{effective Boolean algebra over $\D$}
is a tuple $\A=(\D, \Psi, \den{\_}, \bot, \top, \lor, \land, \neg)$
where $\Psi$ is a set of
\emph{predicates} closed under the Boolean connectives; $\den{\_} :
\Psi \rightarrow 2^{\D}$ is a \emph{denotation function}; $\bot, \top
\in \Psi$; $\den{\bot} = \emptyset$, $\den{\top} = \D$, and for all
$\alpha, \beta \in \Psi$, $\den{\alpha \lor \beta} = \den{\alpha}
\cup \den{\beta}$, $\den{\alpha \land \beta} = \den{\alpha} \cap
\den{\beta}$, and $\den{\neg \alpha} = \D \setminus \den{\alpha}$.
We also write $a\vDash\alpha$ for $a\in\den{\alpha}$.
For $\alpha,\beta\in\Psi$ we
write $\alpha\equiv\beta$ to mean $\den{\alpha}=\den{\beta}$.  In
particular, if $\alpha\equiv\bot$ then $\alpha$ is
\emph{unsatisfiable}, also denoted by $\UNSAT{\alpha}$.
We require that the connectives are computable
and that satisfiability is decidable in $\A$.
We use $\A$ as a subscript to indicate a component of $\A$, e.g.,
$\PA$ is the set of predicates of $\A$.  We omit the
subscript when it follows from the context.

Let $\Gamma\subseteq\PA$ be finite.  A \emph{minterm} of $\Gamma$ is
a \emph{satisfiable} predicate $(\bigwedge\!S)\land\lnot\!\bigvee(\Gamma{\setminus} S)$
for some $S\subseteq\Gamma$.  $\Minterms{\Gamma}$, say $\Sigma$,
denotes the set of all minterms of $\Gamma$. The \emph{core
properties} of $\Sigma$ are that all minterms are satisfiable and
mutually disjoint, and that each satisfiable predicate in $\Gamma$ is
equivalent to a disjunction of some minterms. Thus, $\Sigma$ defines
a \emph{finite partition} of $\D$.  For example, if
$\Gamma=\{\alpha,\beta\}$ then $\Sigma$ is the set of all satisfiable
predicates in
$\{\alpha\land\beta,\lnot\alpha\land\beta,\alpha\land\lnot\beta,\lnot\alpha\land\lnot\beta\}$.

We let $\Osat{\A}(n)$ denote the computational complexity of checking
satisfiability in $\A$ for predicates $\psi$ of size $|\psi|=n$.
Here we make the standard assumption that the size of a predicate is
the sum of the sizes of its immediate subexpressions.  Under this assumption it
follows that the complexity of computing $\Sigma$ is
$O(2^{{\Osat{\A}(n)}})$ where $n=\sum_{i<k}|\gamma_i|$ and
$\Gamma=\{\gamma_i\}_{i<k}$.  Observe also that $|\Sigma|\leq 2^{k}$, i.e.,
the \emph{number} of minterms of $\Gamma$ is in the worst-case exponential in $k$.

\paragraph*{Boolean Closure}
\label{sec:BC}

Given a set $S$,
we define the Boolean closure $\BC{S}$ of $S$ such that $S\cup\{\bot,\top\}\subseteq\BC{S}$ and
$\BC{S}$ is closed under $\lor$, $\land$, and $\lnot$, where
$\bot$ and $\top$ act as
the corresponding unit and zero of the Boolean
operations, satisfying $\bot\land s=\bot$, $\top\land s=s$,
$\top\lor s=\top$, $\bot\lor s=s$, $\neg\bot=\top$, and $\neg\top=\bot$.%
\footnote{$S$ may already include $\bot$ and $\top$ and it may also be that $S=\BC{S}$.}
We write $\BCp{S}$ for the subset of $\BC{S}$
where $\lnot$ is omitted.

The disjunctive normal form of $\phi\in\BCp{S}$
is a disjunction $\bigvee_{i=0}^n\psi_i$ where each $\psi_i$ is a conjunction $\bigwedge X_i$
for some $X_i\subseteq S$.
\emph{Formally}, we adopt the \emph{set-of-sets} representation
of $\DNF{\phi}$ as $\{X_i\}_{i=0}^n$.
For example, $\DNF{(s_1\lor s_2)\land s_3}= \{\{s_1,s_3\},\{s_2,s_3\}\}$
that stands for a formula $(s_1\land s_3)\lor(s_2\land s_3)$.
Observe, as a special case, in
DNF, $\emptyset$ denotes $\bot$ and $\{\emptyset\}$ denotes
$\top$. 
One advantage of this representation is that then
commutativity, associativity and idempotence of the \emph{intended semantics} of the Boolean operations
is built-in to $\DNF{\psi}$.

\subsection{Infinite/Finite Sequences and Derivatives}
\label{sec:inf}

An \emph{infinite word} $w \in \Do$ is a function from $\Nat$ to
$\D$. We let $\ith{w}{i}\eqdef w(i)$ denote the $i$'th
element of $w$ for $i\in\Nat$.
If $a\in \D$ and $v\in \Do$ then $u=a{\cdot}v$ denotes the
sequence such that if $i=0$ then $\ith{u}{i}=a$ else $\ith{u}{i} = \ith{v}{i-1}$.  If
$S\subseteq \D$ and $L\subseteq \Do$ then
$S{\cdot}L\eqdef\{a{\cdot} v\mid a\in S, v\in L\}$.  If
$v\in\Do$ and $n\in\Nat$ then $\Rest[n]{v}\eqdef\lambda
i.\ith{v}{n+i}$ denotes
the suffix of $v$ starting with $\ith{v}{n}$.  If $a\in\D$ then $a^\omega\in\Do$ such that
$\ith{a^\omega}{i}=a$ for all $i\in\Nat$. Concatenation ($\cdot$) binds stronger than
intersection ($\cap$) that binds stronger than union ($\cup$).
We use juxtaposition for
concatenation when this is unambiguous.

For \emph{finite words} $v\in\Ds$
we share the same notation as for infinite words that $\ith{v}{i}$ is the
$i$'th element of $v$ and $\Rest[i]{v}$ is the $i$'th rest of $v$, when
$0\leq i < |v|$ with $|v|$ as the length of $v$.
We also let $\Rest[|v|]{v}\eqdef\epsilon$.
Note that  $\Rest[0]{v}=v$.
%Let $\Dp=\D\Ds$.
We also define $x\prec y$ to mean that
$x\in\Ds$ is a nonempty proper prefix of $y$ where $y\in\Ds$ or $y\in\Do$.
We also let $x\prece y \eqdef x\prec y \bor (x=\epsilon\band y\neq\epsilon)$.
We write $vw$ for the concatenation of $v$ with $w$ where
$w\in\Ds$ or $w\in\Do$ and lift this definition to sets, as ususal.

Let $\bD$ be either $\Do$ or $\Ds$, with the
associated definition of complement $\Compl{L}\eqdef\bD\setminus L$ for $L\subseteq\bD$.
The \emph{derivative of $L$ for $a\in\D$} is
$
\DERIV[a]{L} \eqdef \{v\mid av\in L\}
$.
These facts hold for all $a\in\D,L,R\subseteq\bD$:
\[
\DERIV[a]{\Compl{L}} = \Compl{\DERIV[a]{L}},\quad
\DERIV[a]{L\cup R} = \DERIV[a]{L}\cup \DERIV[a]{R},\quad
\DERIV[a]{L\cap R} = \DERIV[a]{L}\cap \DERIV[a]{R}.
\]
\emph{As a convention throughout the paper when a term $t$ is associated
with a language over $\Ds$ we denote that by $\FLang{t}$
and when $t$ is associated with a language 
over $\Do$ we denote that by $\Lang{t}$.}

%-------------------------------------------------------------------

%++++++++++++++++++++++++++++++\input{tt}
\section{Transition Terms}
\label{sec:tt}
As shown in the introduction, {\em transition terms} are the target language that 
represents the derivative-based compositional semantics of the source language (such as
regular expressions, LTL, as well as their combination in RLTL); 
they generalize the notion of transition regexes from \cite{StanfordVB21}. 

Let $\A=(\D, \Psi, \den{\_}, \bot, \top, \orA, \andA, \notA)$ be a given effective Boolean algebra
as the given \emph{element algebra} and let $\Leaves$ be a generic type
of \emph{leaves} that in general is independent of $\A$.
\emph{Transition terms} $\TT[\A]{\Leaves}$
($\TT{\Leaves}$ or $\TT{}$ for short when both parameters are clear) are defined as
the following nested if-then-else expressions, i.e., ordered labelled binary
trees whose internal nodes have labels in $\PA$ and leaves belong to $\Leaves$.
We let $\phi$ range over $\Leaves$ and $\alpha$ range over $\PA$.
\begin{itemize}
\item if $\phi\in\Leaves$ then $\phi\in\TT{}$, $\phi$ is called a \emph{leaf};
\item if $\alpha\in\PA$ and $f,g\in\TT{}$ then $\ite{\alpha}{f}{g}\in\TT{}$ is called an
\emph{ITE} and $\alpha$ is called its \emph{condition}.
\end{itemize}
Given $a\in\D$ and $f\in\TT{}$, the \emph{leaf of
$f$ for $a$}, denoted by $\leafof{f}{a}$, is defined as follows.
\[
\leafof{\phi}{a} \eqdef \phi, \qquad 
\leafof{\ite{\alpha}{f}{g}}{a} \eqdef \ITEBrace{a\in\den{\alpha}}{\leafof{f}{a}}{\leafof{g}{a}}
\]
%Observe that any implementation of $\leafof{f}{a}$ needs the membership test $a\in\den{\alpha}$
%to be decidable while $\A$ need not be effective otherwise.
The set of all leaves and conditions of a transition term are defined as follows:
\[
\begin{array}{@{}r@{\;}c@{\;}l@{\quad}r@{\;}c@{\;}l@{}}
\LeavesOf{\phi}&\eqdef&\{\phi\} &
\LeavesOf{\ite{\alpha}{f}{g}}& \eqdef& \LeavesOf{f}\cup\LeavesOf{g}\\
\CondOf{\phi}&\eqdef&\emptyset &
\CondOf{\ite{\alpha}{f}{g}} &\eqdef& \{\alpha\}\cup\CondOf{f}\cup\CondOf{g}
\end{array}
\]
Any binary operation $\diamond:\Leaves\times\Leaves\rightarrow\Leaves'$,
is lifted to $\TT{\Leaves}\times\TT{\Leaves}\rightarrow\TT{\Leaves'}$ as
follows:\footnote{The choice of rule in the case when $h$ is an ITE can be
based on a total ordering over conditions ($\PA$) to maintain uniformity.}
\begin{equation}
\label{eq:binary}
\begin{array}{rcl@{\qquad}rcl}
\ite{\alpha}{f}{g}\diamond h &\eqdef& \ite{\alpha}{f\diamond h}{g\diamond h} &
h\diamond\ite{\alpha}{f}{g} &\eqdef& \ite{\alpha}{h\diamond f}{h\diamond g}
\end{array}
\end{equation}
Any unary operation $\unaryop:\Leaves\rightarrow\Leaves'$ is lifted
to $\TT{\Leaves}\rightarrow\TT{\Leaves'}$ analogously as follows:
\begin{equation}
\label{eq:unary}
\unaryop\ite{\alpha}{f}{g} \eqdef \ite{\alpha}{\unaryop f}{\unaryop g}
\end{equation}
The following generic properties of transition terms are used frequently
in many constructions where $\diamond$ and $\unaryop$ are lifted binary and unary operations.
\begin{lma}
\label{lma:TT}
Let $f,g\in\TT[\A]{\Phi}$ and $a\in\D$. Then
$\leafof{(f\diamond g)}{a}= \leafof{f}{a}\diamond \leafof{g}{a}$
and $\leafof{(\unaryop f)}{a}=\unaryop(\leafof{f}{a})$.
\end{lma}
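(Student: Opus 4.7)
The plan is to prove both equalities by straightforward structural induction, the unary claim by induction on $f$ alone and the binary claim by induction on the pair $(f,g)$ following the structure of the defining equations~(\ref{eq:binary}).

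For the unary statement, I would first dispatch the base case $f=\phi\in\Leaves$, where $\unaryop f = \unaryop\phi$ is itself a leaf, so $\leafof{(\unaryop\phi)}{a}=\unaryop\phi=\unaryop(\leafof{\phi}{a})$ by the defining clause of $\leafof{\_}{a}$ on leaves. For the inductive case $f=\ite{\alpha}{f_1}{f_2}$, definition~(\ref{eq:unary}) gives $\unaryop f = \ite{\alpha}{\unaryop f_1}{\unaryop f_2}$, so the definition of $\leafof{\_}{a}$ on ITEs splits into the two cases $a\vDash\alpha$ and $a\not\vDash\alpha$, and in each case the induction hypothesis applied to $f_i$ yields $\leafof{(\unaryop f)}{a}=\unaryop(\leafof{f_i}{a})=\unaryop(\leafof{f}{a})$.

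For the binary statement I would induct on the total size of $f$ and $g$. The base case is when both $f=\phi$ and $g=\psi$ are leaves; then $f\diamond g=\phi\diamond\psi$ is a leaf of $\TT{\Phi'}$, and both sides of the claimed identity equal $\phi\diamond\psi$. Otherwise at least one of $f,g$ is an ITE, and some rule in~(\ref{eq:binary}) applies. If $f=\ite{\alpha}{f_1}{f_2}$ and the first rule is used, then $f\diamond g=\ite{\alpha}{f_1\diamond g}{f_2\diamond g}$; unfolding $\leafof{\_}{a}$ on this ITE and applying the induction hypothesis to the pair $(f_i,g)$ (where $i\in\{1,2\}$ is determined by whether $a\vDash\alpha$) gives $\leafof{(f\diamond g)}{a}=\leafof{f_i}{a}\diamond\leafof{g}{a}=\leafof{f}{a}\diamond\leafof{g}{a}$. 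The symmetric case when the second rule applies is analogous.

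The only real subtlety is the ambiguity flagged in the footnote to~(\ref{eq:binary}): when both $f$ and $g$ are ITEs, either rule can be chosen. But this is not an obstacle here, because either chosen rule fits one of the two inductive cases above, and the induction hypothesis is strong enough to cover the chosen decomposition. Consequently the proof goes through uniformly regardless of the tiebreaking order over $\PA$, which is precisely what makes the lemma a clean algebraic fact about the lifting operators rather than one sensitive to implementation choices.
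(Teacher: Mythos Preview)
Your proof is correct and complete; the structural induction you outline is exactly the natural argument for this kind of homomorphism-style property of lifted operations. The paper itself states Lemma~\ref{lma:TT} without proof, treating it as routine, so there is nothing further to compare.
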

We say that two transition terms $f$ and $g$
are \emph{functionally equivalent}, that we denote by $f\feq g$, when
$
\forall a\in\D: \leafof{f}{a}=\leafof{g}{a}
$.
We use \emph{condition elimination} $\ite{\alpha}{f}{f}\feq f$ and
\emph{flattening} $\ite{\alpha}{\ite{\beta}{f}{g}}{g}\feq \ite{\alpha\andA\beta}{f}{g}$
(analogously for the symmetrical cases), as particular generic simplifications.

\subsection{Cleaning Modulo $\A$}

The internal branching structure of nested ITEs is irrelevant as long
as the leaf values remain the same for any given element $a\in\D$.
All simplification rules that preserve functional equivalence are
allowed. The most critical rule, that we call \emph{cleaning} here,
eliminates dead branches and involves the use of Boolean operations
and \emph{satisfiability modulo $\A$}.

A transition term $f$ is called \emph{clean} when
all of its subterms are reachable, i.e., when $f$ has
no \emph{infeasible paths}. The \emph{restriction of $f$ to
$\beta\in\PA$} is denoted here by $\restr{f}{\beta}$
and \emph{cleaning of $f$} means to rewrite $f$ to $\restr{f}{\top}$, i.e.,
$f\feq \restr{f}{\top}$.
\[
\restr{\phi}{\beta}\eqdef\phi\qquad
\restr{\ite{\alpha}{f}{g}}{\beta} \eqdef
\left\{
\begin{array}{ll}
\restr{g}{\beta}, &\textrm{if}\; \UNSAT{\beta{\andA}\alpha}; \\
\restr{f}{\beta}, &\textrm{else if}\; \UNSAT{\beta{\andA}\notA\alpha}; \\
\ite{\alpha}{\restr{f}{(\beta{\andA}\alpha)}}{\restr{g}{(\beta{\andA}\notA\alpha)}}, &\textrm{otherwise.}
\end{array}
\right.
\]
For example, ${\ite{\top}{f}{\_}} \feq f$.  All binary operations
(\ref{eq:binary}) can be assumed to produce a clean result (in most uses)
in an implementation, that can take into account that the arguments
are clean to begin with, when this is known, and also use caching to
avoid redundant cleaning operations.

\begin{ex}
Let $\alpha,\beta\in\PA$ be two predicates such that
$\emptyset\neq\den{\alpha}\subsetneq\den{\beta}\neq\D$.
Then
\[
\begin{array}{@{}r@{\;}c@{\;}l@{}}
\restr{(\ite{\beta}{\__1}{\__2}\land\ite{\alpha}{\__3}{\__4})}{\top} &=&
\restr{\ite{\beta}{\__1\land\ite{\alpha}{\__3}{\__4}}{\__2\land\ite{\alpha}{\__3}{\__4}}}{\top}
\\
&=& \restr{\ite{\beta}{\ite{\alpha}{\__1\land\__3}{\__1\land\__4}}{\ite{\alpha}{\__2\land\__3}{\__2\land\__4}}}{\top}
\\
&=& \ite{\beta}{\restr{\ite{\alpha}{\__1\land\__3}{\__1\land\__4}}{\beta}}{
                 \restr{\ite{\alpha}{\__2\land\__3}{\__2\land\__4}}{\notA\beta}}
                 \\
&=& \ite{\beta}{\ite{\alpha}{\restr{(\__1\land\__3)}{(\beta{\andA}\alpha)}}{\restr{(\__1\land\__4)}{(\beta{\andA}\notA\alpha)}}}{
                 \restr{(\__2\land\__4)}{\notA\beta}}
\end{array}
\]
where cleaning eliminated the infeasible path to $\__2\land\__3$
because $\notA\beta{\andA}\alpha$ is unsatisfiable.
\end{ex}

\subsection{Boolean Operations and INF Normal Form}

We will be considering $\Leaves=\BCp{S}$ where $S$ is some set
whose members are called \emph{states} of an automaton,
or $S$ is a set of formulas and $S=\BC{S}$.
In either case (\ref{eq:binary}) and (\ref{eq:unary}) are
used to lift the Boolean operations to $\TT{\Leaves}$.

When we work with $\phi\in\Leaves=\BCp{S}$ it is useful to work instead with
the canonical representation $\bphi = \DNF{\phi}$. Recall, that
$\bphi \subseteq 2^S$ represents the formula
$\bigvee_{X\in \bphi}\bigwedge X$ in $\Leaves$.  In particular, $\bphi
= \emptyset$ represents $\bot$ and $\bphi = \{\emptyset\}$ represents
$\top$ (the empty conjunction is, by definition, $\top$, in $\bigvee_{X\in \{\emptyset\}}\bigwedge X$).

We say that $f\in\TT{\DNF{\Leaves}}$ is the \emph{ITE-DNF} or
\emph{INF normal form} of the corresponding
term in $\TT{\Leaves}$. The following definitions of
$\land$ and $\lor$ preserve DNF, and are lifted to $\TT{\DNF{\Leaves}}$ by (\ref{eq:binary}).
Let $\varphi,\psi\in\Leaves$, $\boldsymbol{\varphi}=\DNF{\varphi}$ and $\boldsymbol{\psi}=\DNF{\psi}$.
Then 
\[
\begin{array}{rcl}
\boldsymbol{\varphi}\lor \boldsymbol{\psi} &\eqdef&
\boldsymbol{\varphi}\cup \boldsymbol{\psi} = \DNF{\varphi\lor\psi}\\
\boldsymbol{\varphi}\land \boldsymbol{\psi} &\eqdef&
\{X\cup Y\mid X\in \boldsymbol{\varphi},Y\in \boldsymbol{\psi}\} = \DNF{\varphi\land\psi}
\end{array}
\]
One advantage of INF is that if $S$ is \emph{finite} then so is
$\DNF{\BCp{S}}$.
A potential disadvantage is the worst-case exponential complexity of
computing DNF from a given formula. However, in practice INF will be
applied both \emph{lazily} (on a ``need to know'' basis) and \emph{locally} to individual leaves
that tend to be small, rather than globally to large formulas.

When working with $f$ and $g$ in $\TT{\Leaves}$ we lift their functional equivalence to hold modulo INF:
\[
f \feq g \quad\eqdef\quad \forall a\in\D: \leafof{\INF{f}}{a} = \leafof{\INF{g}}{a}
\]
In particular observe that $\bot\land f \feq \bot$, $\top\land f \feq f$,
$\bot\lor f \feq f$, and $\bot \land f \feq \bot$.
We use the convention that when we write $\textit{lhs}\feq\textit{rhs}$ then
it implicitly means that
\textit{rhs} is a \emph{simplification} of \textit{lhs}.
Observe also that $\leafof{\INF{f}}{a} = \DNF{\leafof{f}{a}}$.

\emph{We frequently use the abbreviation $\ifthen{\alpha}{f}\eqdef\ite{\alpha}{f}{\bot}$
as well as the following functional equivalences in $\TT{\Leaves}$:
$\ifthen{\alpha}{f}{\land}\ifthen{\beta}{g}\feq
\ifthen{\alpha{\andA}\beta}{f{\land}g}$ and
$\ifthen{\alpha}{f}\lor\ifthen{\beta}{f}\feq
\ifthen{\alpha{\orA}\beta}{f}$.}

%------------------------------------------------------------------------
\section{Alternating B\"uchi Automata Modulo $\A$}
\label{sec:ABA}

We now turn our attention to 
alternating B\"uchi automata (which admit a linear time  
translation from LTL \cite{TsayV21}) and show how they can be generalized
using transition terms so as to be modulo
$\A=(\D, \Psi, \den{\_}, \bot, \top, \orA, \andA, \notA)$.
The key aspect of the following definition is that transitions
in the automata are represented symbolically by 
transition terms in $\TT[\A]{\BCp{Q}}$.

An {\em alternating B\"uchi automaton modulo $\A$} ($\ABA[\A]$) is
a tuple $\M = (\A, Q, \Qi, \tf, F)$ where $Q$ is a \emph{finite}
set of \emph{states}, $\Qi\in \BCp{Q}$ is an \emph{initial state combination}, 
$F\subseteq Q$ is a set of \emph{accepting states}, and
$\tf:Q\rightarrow{\TT[\A]{\BCp{Q}}}$ is a \emph{transition function}
such that if $\top$ occurs in $\tf$ then
$\top\in F$ and $\tf(\top)=\top$, and if
$\bot$ occurs in $\tf$ then
$\bot\in Q{\setminus}F$ and $\tf(\bot)=\bot$.

\begin{wrapfigure}{@{}c@{}}{4cm}
%\framebox{
%usage \mytikz{above}{below}{left}{right}{tikzpicture}
\mytikz{-1.8em}{-1em}{-3cm}{-0.4cm}{
    \begin{tikzpicture}[buchi]
      \small
      \begin{scope}[on above layer]
        \coordinate (init) at (0,0);
        \coordinate (andX) at (1.3cm,0);
        \coordinate (center) at (2cm,0);
          \node[boolop,above=0.3cm of andX] (and2) {\textrm{\scriptsize$\land$}};
          \node[boolop,below=0.3cm of andX] (and1) {\textrm{\scriptsize$\land$}};
          \node[state,accept,right=0.2cm of init] (q0) {$q_0$};
          \node[state,below=0.3cm of center] (q1) {$q_1$};
          \node[state,above=0.3cm of center] (q2) {$q_2$};
          \node[state,accept,right=1cm of center] (top) {$\top$};   
      \end{scope}
      \draw[move] (init) -- (q0);
%      \draw[move] (q0) -- node[below] {$\notA\alpha$} (and1);
      \draw[move] (q0) to[out=-45,in=180] node[below] {$\notA\alpha$} (and1);
      \draw[move] (and1) -- (q1);
      \draw[macc] (and1) -- (q0);
      \draw[move] (q1) edge[loop right] node[right] {$\!\notA\alpha$} (q1);
      \draw[move] (q0) to[out=45,in=180] node[above] {$\alpha$} (and2);
      \draw[move] (and2) -- (q2);
      \draw[macc] (and2) -- (q0);
      \draw[move] (q2) edge[loop right] node[right] {$\!\alpha$} (q2);
      \draw[macc] (q2) -- node[left] {$\!\notA\alpha$} (top);
      \draw[macc] (q1) -- node[left] {$\alpha$} (top);
      \draw[move] (top) edge[loop above] node[right] {$\!\top$} (top);
    \end{tikzpicture}
}
%}
\caption{$\ABA$ for $\Globally(\Finally\alpha \land \Finally\notA\alpha)$.\label{fig:aba}}
\end{wrapfigure}
$\M$ is \emph{nondeterministic} or an $\NBA[\A]$ if
$\land$ occurs neither in $\Qi$ nor in $\tf$; $\M$
is \emph{deterministic} or a $\DBA[\A]$ if $\Qi\in Q$ and
neither $\land$ nor $\lor$ occur
in $\tf$.  We indicate a component of $\M$ by using $\M$ as the
subscript, e.g., $\tf_{\M}$, unless $\M$ is clear from the context.

In figures we show each transition term $\tf(q)$ by \emph{symbolic
transitions} $\tr{q}{\beta}{\phi}$ where $\phi\in\BCp{Q}$ is a leaf of $\tf(q)$
\emph{guarded} by the accumulated branch condition $\beta\in\PA$ from the root
of $\tf(q)$ to $\phi$.\footnote{If there are multiple occurrences of
$\phi$ in $\tf(q)$ then $\beta$ is the disjunction in $\A$ of the
corresponding branch conditions. We typically omit transitions to $\bot$
and transitions into disjunctions of states
in an $\NBA$ are shown as separate transitions.}

For example, the $\ABA[\A]$ in
Figure~\ref{fig:aba} has $\Qi=q_0$ and
\[
\tf = 
\{q_0{\mapsto}\ite{\alpha}{q_2 {\land} q_0}{q_1 {\land} q_0},\;
  q_1{\mapsto}\ite{\alpha}{\top}{q_1},\;
  q_2{\mapsto}\ite{!\alpha}{\top}{q_2},\;
  \top{\mapsto}\top\}
\]
Boolean operations $\diamond\in\{\land,\lor\}$ over $\M,\N\in\ABA[\A]$,
such that $Q_{\M}\cap Q_{\N} \subseteq \{\bot,\top\}$,
are defined by:
\[
\M\diamond\N\eqdef (\A,\; Q_{\M}\cup Q_{\N},\; \Qi[\M]\diamond\Qi[\N],\; \tf_{\M}\cup\tf_{\N},\; F_{\M}\cup F_{\N})
\]
We say that $\M$ (or $\tf_{\M}$) is \emph{clean} when all
the transition terms in $\tf_{\M}$ are clean. For example,
the $\ABA[\A]$ in Figure~\ref{fig:aba} is clean provided that
both $\alpha$ and $\notA\alpha$ are satisfiable in $\A$,
i.e., $\alpha\nequiv\bot\nequiv\notA\alpha$.

\subsection{Runs and Languages}
Let $\M = (\A, Q, \Qi, \tf, F)$ be an $\ABA$.
The standard defintion of runs uses $X\subseteq Q$
to specify \emph{satisfying assignments} for $\varphi\in\BCp{Q}$.%
\footnote{
$X\subseteq Q$
\emph{satisfies} $\phi\in\BCp{Q}$ iff the valuation
$\{q\mapsto\True\mid q\in X\}\cup\{q\mapsto\False\mid q\in Q\setminus
X\}$ satisfies $\phi$ \cite{Vardi95LTL}.}
Here we make use of the
fact that $X$ satisifies $\varphi$ iff some $X_{\min} \subseteq X$ satisfies $\varphi$
\emph{in a minimal manner} \cite[5.1]{Kupf18}, defined by $X_{\min}$ being an element of:%
\footnote{Recall that $\DNF{\phi} \subseteq 2^Q$ where $\DNF{\bot}=\emptyset$ and $\DNF{\top}=\set{\emptyset}$.}
\begin{equation}
\label{eq:MinSat}
\{Z \mid Z \in \DNF{\phi}\band
\nexists Y\in \DNF{\phi}:Y\subsetneq Z\}
\end{equation}
Since $(\ref{eq:MinSat})\subseteq\DNF{\varphi}$,
it suffices below to limit the satisfying assignments $X$ of $\varphi$
to $X\in\DNF{\varphi}$.

Let $aw\in \Do$
and $q\in Q$.  A \emph{run for $aw$ from
$q$} is a (potentially \emph{infinite}) $Q$-labelled tree $\tau$ whose root node has label
$q$ and there exists $X\in\DNF{\leafof{\tf(q)}{a}}$ such that,
for each $p\in X$,
there is an immediate subtree $\tau_p$ of $\tau$ such that $\tau_p$ is a run for $w$ from $p$.

A run $\tau$ is \emph{accepting} if \emph{all} of
the infinite branches of $\tau$ visit $F$ infinitely often,
where an \emph{infinite branch} visits the root of $\tau$ and then continues
as an infinite branch from some immediate subtree of $\tau$.
For $w\in
\Do$ and $q\in Q$: \emph{$w$ is accepted from $q$} iff there exists an
accepting run for $w$ from $q$; the \emph{language of $q$ in $\M$} is
defined and lifted to $\BCp{Q}$ as follows:
\begin{equation}
\label{eq:LABAq}
\begin{array}{l}
\Lang[\M]{q} \eqdef
\{w \in\Do\mid \textrm{$w$ is accepted from $q$}\}
\\
\Lang[\M]{\boldq{\lor}\boldp} \eqdef \Lang[\M]{\boldq}\cup\Lang[\M]{\boldp}
\quad
\Lang[\M]{\boldq{\land}\boldp} \eqdef \Lang[\M]{\boldq}\cap\Lang[\M]{\boldp}
\quad
\Lang{\M} \eqdef\Lang[\M]{\Qi}
\end{array}
\end{equation}
It follows that
$\Lang{\M\land\N} = \Lang{\M}\cap\Lang{\N}$ and
$\Lang{\M\lor\N} = \Lang{\M}\cup\Lang{\N}$, corresponding to the
classical case \cite[Theorem~20]{Kupf18}.

Let the \emph{INF normal form} of $\tf$ be
$\tfINF(q) \eqdef \INF{\tf(q)}$ for all $q\in Q$. Thus, $\tfINF$ is
semantically equivalent to $\tf$ and it follows from the
definition of INF that
$\leafof{\tfINF(q)}{a}=\DNF{\leafof{\tf(q)}{a}}$. E.g., in the
$\ABA[\A]$ in Figure~\ref{fig:aba},
$\tfINF(q_0)=\ite{\alpha}{\setset{q_2,q_0}}{\setset{q_1,q_0}}$.

The notion of a language of $\M$ \emph{relative} to a state $q\in Q_{\M}$ is
typically left implicit classically.
Recall that a classical $\ABA$ is a tuple $M = (\Sigma, Q, \Qi, \tfc, F)$
where $\tfc:Q\times\Sigma\rightarrow\BCp{Q}$ where we assume,
{w.l.o.g}, that $\tfc$ is \emph{total} for $\Sigma$ and $\Qi\in\BCp{Q}$.
For
$q\in Q$, let $M^{(q)}\eqdef (\Sigma, Q, q, \tfc, F)$ where $q$ is
thus the \emph{acting initial state}.  We use the definition
$\Lang[M]{q}\eqdef\Lang{M^{(q)}}$ below.

\subsection{From  $\ABA$ Modulo $\A$ to Classical $\ABA$}
\label{sec:toClassicalABA}

A variety of decidability results for $\ABA$ modulo $\A$ 
follow directly from the fact that any $\M$, as defined above,
can be reduced to a classical $\ABA$ (this does not imply a practical
implementation scheme, however, as it requires mintermization). 

Let $\Sigma=\Minterms{\Gamma}$ where
$\Gamma=\CondOf{\M}\eqdef\CondOf{\tf}\eqdef\bigcup_{q\in Q}\CondOf{\tf(q)}$,
and for $a\in\D$ let $\mt{a}\in\Sigma$
denote the minterm such that $a\in\den{\mt{a}}$.  It follows that
\[
\forall a,b\in\D,q\in Q: \mt{a}=\mt{b} \IMP \leafof{\tf(q)}{a}=\leafof{\tf(q)}{b}
\]
from minterms because for any ITE
$\ite{\gamma}{f}{g}$ in $\tf$ we have that $\gamma\in\Gamma$, and thus,
$
\den{\mt{a}}\cap\den{\gamma}\neq\emptyset\IFF \den{\mt{a}}\subseteq\den{\gamma}
$.
Moreover, this holds \emph{for all
Boolean combinations $\gamma$ over $\Gamma$}, i.e., individual elements of
a minterm are indistinguishable in $\M$.

Thus, the \emph{classical transition function}
$\mt{\tf}(q,\mt{a})\eqdef\leafof{\tf(q)}{a}$
for $q\in Q$ and $\mt{a}\in\Sigma$
is well-defined in the $\ABA$ $\mt{\M}\eqdef(\Sigma,Q,\Qi,\mt{\tf},F)$. We lift
$\mt{a}\in\Sigma$ to $\mt{w}\in\Sigma^\omega$.

\begin{lma}[Minterm]
\label{lma:mt}
$\forall\M\in\ABA[\A],\phi\in \BCp{Q_{\M}},w\in\Do:\mt{w}\in\Lang[\mt{\M}]{\phi} \IFF
w\in \Lang[\M]{\phi}.$
\end{lma}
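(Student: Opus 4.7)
The plan is to first establish the claim for the base case $\phi \in Q_{\M}$ by exhibiting a natural identification of accepting runs in $\M$ and in $\mt{\M}$, and then extend to arbitrary $\phi \in \BCp{Q_{\M}}$ using the inductive definition of $\Lang[\M]{\_}$ from~(\ref{eq:LABAq}).

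For the base case, I would fix $q \in Q_{\M}$ and $w = aw' \in \Do$ and compare the two definitions of a run. A run for $w$ from $q$ in $\M$ is a $Q$-labelled tree whose root bears $q$ and whose immediate subtrees are indexed by some $X \in \DNF{\leafof{\tf(q)}{a}}$, each being a run for $w'$ from its root state. The analogous definition for $\mt{\M}$ at word $\mt{w} = \mt{a}\mt{w'}$ uses $X \in \DNF{\mt{\tf}(q, \mt{a})}$. Since $\mt{\tf}(q, \mt{a}) \eqdef \leafof{\tf(q)}{a}$ by the very definition of $\mt{\M}$, the admissible witnesses coincide at every node. By coinduction on tree structure (runs may be infinite), the set of runs for $w$ from $q$ in $\M$ is literally the same set of $Q$-labelled trees as the set of runs for $\mt{w}$ from $q$ in $\mt{\M}$. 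The acceptance condition (each infinite branch visits $F$ infinitely often) depends only on labels and tree shape, not on the word, so accepting runs coincide as well, establishing $\mt{w} \in \Lang[\mt{\M}]{q} \IFF w \in \Lang[\M]{q}$.

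For the inductive step, let $\phi = \boldq \lor \boldp$ or $\phi = \boldq \land \boldp$ with the claim holding for $\boldq$ and $\boldp$. Equations~(\ref{eq:LABAq}) reduce $\Lang[\M]{\phi}$ to set-theoretic combinations of $\Lang[\M]{\boldq}$ and $\Lang[\M]{\boldp}$, and analogously for $\mt{\M}$, so the inductive hypothesis applied componentwise yields the result. A light structural induction on $\phi \in \BCp{Q_{\M}}$ then completes the argument.

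The only real subtlety is handling infinite run trees rigorously; I would formalize them as partial functions from finite node addresses in $\Nat^*$ to $Q$ (or coinductively as greatest fixed points), so that the identification between $\M$-runs and $\mt{\M}$-runs is literally the identity on tree data. Once this is in place, the minterm property that $\leafof{\tf(q)}{a}$ depends on $a$ only through $\mt{a}$ (noted just before the lemma) does all the work, and nothing else needs to be computed.
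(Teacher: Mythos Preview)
Your proposal is correct and follows exactly the approach the paper sets up: the paragraph immediately preceding the lemma establishes that $\leafof{\tf(q)}{a}$ depends on $a$ only through $\mt{a}$, so $\mt{\tf}(q,\mt{a})=\leafof{\tf(q)}{a}$ is well-defined, and the paper then states the lemma without further proof. Your argument---identifying the run trees literally via this equality and then lifting to $\BCp{Q_{\M}}$ by structural induction using~(\ref{eq:LABAq})---is precisely the intended elaboration.
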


Lemma~\ref{lma:mt} is a useful tool in lifting results from
the classical setting.  In
particular, \emph{decidability of nonemptiness} of $\M$, provided
that $\A$ is effective, follows then directly from decidability of
nonemptiness of $\mt{\M}$, by reduction to $\NBA$ \cite{MH84} and
decidability of nonemptiness of $\NBA$ \cite{Rabin70}.
The cost of computing $\Sigma$
can be exponential in $|\M|$ (the size of $\M$) that is essentially
determined by the size of $\tf$, observe also that $|\Sigma| \leq
2^{|\Gamma|}$, and may involve exponentially many Boolean operations
and satisfiability checks in $\A$.

\subsection{Nondeterministic and Clean Case}

Decidability of nonemptiness of a clean $\NBA[\A]$ has \emph{linear
time} complexity, that follows by adapting the corresponding result
from \cite{EL85a,EL85b}, where cleaning, as a separate preprocessing
step, has time complexity $\Osat{\A}(|\M|)$. In particular, we
implemented the
\emph{nested depth-first cycle detection algorithm} \cite[Algorithm~B]{CVWY92}
without any real changes for nonemptiness checking of a clean
$\NBA[\A]$.  Many other decision problems, such as \emph{incremental
dead-state detection} \cite{SV23}, can be used similarly. The up-front
worst-case exponential cost of converting $\M$ to a classical $\ABA$ $\mt{\M}$
is often avoidable in the symbolic setting by lifting of the
classical algorithms to work modulo $\A$. Similar observations 
hold in the case of automata modulo $\A$ over finite words \cite{DV21}.

\subsection{From Classical $\ABA$ to $\ABA$ Modulo $\A$}
\label{sec:fromClassicalABA}

Here we show how a classical $\ABA$ $M=(\Sigma,Q,\Qi,\tfc,F)$
is lifted to $M_{\A}=(\A,Q,\Qi,\tf,F)$. Let $\A$ be a Boolean
algebra over $\Sigma$ {s.t.} for all $a\in\Sigma$ there is
$\hat{a}\in\PA$ {s.t.} $\den{\hat{a}}=\{a\}$, and
\[
\begin{array}{rcl}
  \forall q\in Q:\tf(q)&\eqdef& \bigvee_{a\in\Sigma}\ifthen{\hat{a}}{\tfc(q,a)}
\end{array}
\]
It
follows that for all $q\in Q$ and $a\in\Sigma$,
$\leafof{\tf(q)}{a}=\tfc(q,a)$ and thus that $M$ and
$M_{\A}$ are equivalent.
A conceptually simple algebra $\A$ is the
powerset algebra $(\Sigma, 2^\Sigma, \lambda x.x, \emptyset, \Sigma,
\cup, \cap, \lambda x.\Sigma{\setminus} x)$ over $\Sigma$ where $\hat{a}=\{a\}$.
A more interesting algebra $\A$ can be constructed when
$\Sigma\subseteq 2^P$ where $P$ is a nonempty finite set of atomic
propositions.  Let $\A$ be a SAT solver over $P$.
Then
$\hat{a}\eqdef({\Large\andA}_{p\in a}\, p)\andA({\Large\andA}_{p\in P\setminus
    a}\,\notA p)$ and thus $\den{\hat{a}}=\{a\}$,
    i.e., $\hat{a}$ is intuitively the ``minterm'' corresponding to $a$.
Note that $\tf$ can also be condensed via rewrites.
For example,
$\ifthen{\hat{a}}{f}\lor\ifthen{\hat{b}}{f} \feq
\ifthen{\hat{a}{\orA}\hat{b}}{f}$.

%--------------------------------------------------------------------

\section{Alternation Elimination}
\label{sec:alt-elim}

Although alternating B\"uchi automata are an attractive target for
temporal logic, incurring no space blowup, the problem of 
testing for nonemptiness is harder for alternating automata
(due to the presence of both conjunction and disjunction in the state
formula) than non-alternating automata \cite{BokerKR10}.

Let $\M = (\A, Q, \Qi, \tf, F)$ be an $\ABA[\A]$.
We introduce an alternation elimination algorithm for $\M$ that
is a symbolic derivative-based generalization of the algorithm in \cite{MH84}
(see also \cite[Proposition~20]{Vardi95LTL}).
The algorithm constructs an $\NBA[\A]$ $\N = (\A, S, S_0, \sigma, F_{\N})$
accepting the same language as $\M$.
As shown by \cite{BokerKR10}, the concept behind  \cite{MH84} accurately captures
the prime concern in alternation removal, which is the need to associate the 
states of the equivalent $\NBA[\A]$ $\N$ with two sets of states from $\M$, which means that 
a $\Omega(3^{|Q|})$ space blowup cannot be avoided.

In the rest of this section let $\bQ=2^{2^Q}$, $P=2^Q\times 2^Q$, and $\bP=2^P$.
The set $\bQ$ is the type of leaves of transition terms of $\M$ in INF,
i.e., $\tfINF: Q\rightarrow \TT{\bQ}$.
The set $\bP$ is the type of leaves of transition terms of $\N$
where $P\supseteq S$ is the \emph{state type} of $\N$, i.e.,
$\sigma: S\rightarrow\TT{\bP}$.
$F_{\N}=\{\Pair{U}{V}\in S\mid U=\emptyset\}$.%
\footnote{Here we can replace $\bot$ and $\top$ with some states $q_{\bot}\in Q{\setminus}F$ and $q_{\top}\in F$, with $\tf(q_{\bot})=q_{\bot}$ and $\tf(q_{\top})=q_{\top}$,
or equivalently, DNF will eliminate $\bot$ and $\top$.}

\subsection{Algorithm \AElimName}

We start by defining the key operation of \emph{alternation product}
of two transition terms $f,g\in\TT{\bQ}$ as a transition term $f\aprod
g\in\TT{\bP}$.  Let $\bvarphi,\bpsi\in\bQ$ and we use (\ref{eq:binary}) to lift $\aprod$ to
$\TT{\bQ}\times\TT{\bQ}\rightarrow\TT{\bP}$.
\begin{equation}
\label{eq:product}
\begin{array}{@{}r@{\;}c@{\;}l@{\quad}r@{\;}c@{\;}l@{}}
\bvarphi\aprod\bpsi &\eqdef& \{\Pair{X\setminus F}{Y\cup(X\cap F)}\mid X\in\bvarphi,Y\in\bpsi\}
\end{array}
\end{equation}
The purpose of this construction is to partition the states into
those that have not yet visited a final state ($X\setminus F$) and those that have
($Y\cup(X\cap F)$).
For $h\in\TT{\bP}$
let $\targetsof{h}\eqdef\bigcup\LeavesOf{h}$.
Observe that $\targetsof{h} \subseteq P$ as it
is the union of all the leaves of $h$ where each leaf
is a set of states.

The algorithm {\AElimName} constructs an equivalent $\NBA[\A]$ $\AElim{\M}$ from $\M$.
Let $\DomainOf{\sigma}$ denote the set of
all states in $S$ for which the partial transition
function $\sigma$ is currently defined in the while loop.  Upon
termination $\DomainOf{\sigma} = S$.
For $X\subseteq Q$,
let
$
\tfINF(X) \eqdef \IfThenElse{X=\emptyset}{\set{\emptyset}}{\bigwedge_{q\in X}\tfINF(q)}
$
in:
\[
\AElim{\M}\eqdef
\left\{
\begin{array}{l}
S_0 \leftarrow \Fin{\DNF{\Qi}}; S \leftarrow S_0; \sigma \leftarrow \emptyset;
\\
\textbf{while}\; \exists\,\Pair{U}{V}\in S\setminus\DomainOf{\sigma}\; \textbf{do}
\\
  \begin{array}{r@{\;}l}
    \textbf{let}& f =
\left\{
\begin{array}{@{}l@{\;}l@{}}
\tfINF(U)\aprod\tfINF(V), & \textrm{\textbf{if} $U\neq\emptyset$} \\
\Fin{\tfINF(V)}, & \textrm{\textbf{if} $U=\emptyset$}
\end{array}
\right\}
  \end{array}
      \; \textbf{in}\;\;
\sigma(\Pair{U}{V})\leftarrow f;\; S \leftarrow S \cup \targetsof{f}
\\
\textbf{return}\; (\A, S, S_0, \sigma, \{\Pair{U}{V}\in S\mid U=\emptyset\})
\end{array}
\right.
\]

We can make use of the following additional \emph{state reduction} lemma in the algorithm.
The effect of the lemma is already visible in Example~\ref{ex:alt-elim-sample}
where we consider a sample run of the algorithm. Without state reduction,
the number of states would roughly \emph{double} in Example~\ref{ex:alt-elim-sample}.

For $\Pair{U'}{V'},\Pair{U}{V}\in P$ let
$\Pair{U'}{V'}\subseteq\Pair{U}{V}$ $\eqdef$ 
$U'\subseteq U$ and $V'\subseteq V$ and
if $U\neq\emptyset$ then $U'\neq\emptyset$.

\begin{lma}[State Reduction]
\label{lma:state-reduction}
If $\Pair{U'}{V'}\subseteq\Pair{U}{V}$, $\tfINF(U')=\tfINF(U)$, and
$\tfINF(V')=\tfINF(V)$, 
then $\Pair{U}{V}$ can be replaced by $\Pair{U'}{V'}$ in the algorithm.
\end{lma}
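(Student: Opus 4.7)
The plan is to show that under the hypotheses the states $\Pair{U}{V}$ and $\Pair{U'}{V'}$ are \emph{indistinguishable} in the NBA $\AElim{\M}$: they carry identical transition terms and the same accepting status, so they are bisimilar and one can be substituted for the other anywhere without changing $\Lang{\AElim{\M}}$.

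First I would check that the case split used by the algorithm (between $U=\emptyset$ and $U\neq\emptyset$) treats the two pairs uniformly. The inclusion $U'\subseteq U$ gives $U=\emptyset\IMP U'=\emptyset$, while the nonemptiness clause built into $\Pair{U'}{V'}\subseteq\Pair{U}{V}$ gives the converse, so $U=\emptyset\IFF U'=\emptyset$. This already settles the accepting condition, since $F_{\N}=\{\Pair{X}{Y}\mid X=\emptyset\}$ depends only on whether the first component is empty.

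Next I would observe that with $\tfINF(U)=\tfINF(U')$, $\tfINF(V)=\tfINF(V')$, and the aligned case split, the algorithm assigns the \emph{same} transition term to both pairs: either $\tfINF(U)\aprod\tfINF(V)=\tfINF(U')\aprod\tfINF(V')$, or $\Fin{\tfINF(V)}=\Fin{\tfINF(V')}$. Consequently $\leafof{\sigma(\Pair{U}{V})}{a}=\leafof{\sigma(\Pair{U'}{V'})}{a}$ for every $a\in\D$, so the two states have pointwise identical successor sets and identical acceptance labels, i.e.\ they are bisimilar in $\AElim{\M}$. Since B\"uchi acceptance is a trace condition, bisimilar states accept equal languages, and replacing any occurrence of $\Pair{U}{V}$ (in $S_0$ or inside some $\targetsof{f}$) by $\Pair{U'}{V'}$ leaves $\Lang{\AElim{\M}}$ untouched, which is exactly what the lemma claims.

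There is no real obstacle; the proof is essentially an unfolding of the definition of $\sigma$ combined with the observation that bisimilar states in a B\"uchi automaton can be identified. The one place where care is needed is making sure the nonemptiness clause in the definition of $\subseteq$ on pairs is in force---without it, the two pairs could end up on opposite sides of the $U=\emptyset$ case split, and then the $\Fin{\cdot}$ branch for one would not align with the $\aprod$ branch for the other, breaking the coincidence of transition terms.
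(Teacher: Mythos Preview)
Your proposal is correct and follows essentially the same approach as the paper: the paper's proof simply asserts that the accepting condition is preserved and that the computation of $f$ yields language equivalence because identifying states with functionally equivalent transition terms and equal acceptance status does not change the language. Your version is a faithful and more detailed unfolding of that argument, including the explicit verification that $U=\emptyset\IFF U'=\emptyset$ via the nonemptiness clause, which the paper leaves implicit.
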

\begin{proof}
Accepting condition is preserved.  The computation of $f$ preserves
language equivalence using the fact that if we identify states that
have functionally equivalent transition terms and the same acceptance
condition then the language does not change.
\end{proof}

\begin{ex}
\label{ex:alt-elim-sample}
Let $\M$ be the $\ABA[\A]$ in Figure~\ref{fig:aba}, where $\Qi=q_0$, $F=\{q_0,\top\}$, and
\[
\tf = 
\{q_0\mapsto\ite{\alpha}{q_2 \land q_0}{q_1 \land q_0},\quad
  q_1 \mapsto\ite{\alpha}{\top}{q_1},\quad
  q_2 \mapsto\ite{!\alpha}{\top}{q_2},\quad
  \top\mapsto\top\}
\]
Each row of the following table represents
one iteration of while-loop in the algorithm.
Observe that $\Fin{\DNF{q_0}} = \Fin{\setset{q_0}} = \set{\Pair{\emptyset}{\set{q_0}}}$
because $q_0\in F$.
\[
\begin{array}{@{}l|l|l@{}}
s & \sigma(s) & \AElim{\M} \\ \hline
s_0 = \Pair{\emptyset}{\set{q_0}} &
\Fin{\ite{\alpha}{\setset{q_2,q_0}}{\setset{q_1,q_0}}} &
\multirow{4}{*}{
%\framebox{
\textrm{
%usage \mytikz{above}{below}{left}{right}{tikzpicture}
\mytikz{-1em}{-.5em}{-5em}{-0.8em}{
    \begin{tikzpicture}[buchi]
      \small
      \begin{scope}[on above layer]
        \coordinate (init) at (0,0);
        \coordinate (center) at (1.5cm,0);
          \node[state,accept,right=0.2cm of init] (s0) {$s_0$};
          \node[state,below=0.5cm of center] (s1) {$s_1$};
          \node[state,above=0.5cm of center] (s2) {$s_2$};
%          \node[state,accept,right=1cm of center] (s3) {$s_3$};   
      \end{scope}
      \draw[move] (init) -- (s0);
      \draw[move] (s0) -- node[above] {$\notA\alpha$} (s1);
      \draw[move] (s1) edge[loop right] node[right] {$\notA\alpha$} (s1);
      \draw[move] (s0) -- node[left] {$\alpha$} (s2);
      \draw[move] (s2) edge[loop right] node[right] {$\alpha$} (s2);
      \draw[macc] (s2) to[out=180,in=90] node[above] {$\notA\alpha$} (s0);
      \draw[macc] (s1) to[out=180,in=-90] node[below] {$\alpha$} (s0);
%      \draw[macc] (s3) to[out=-90,in=-20] node[above] {$\notA\alpha$} (s1);
%      \draw[macc] (s3) to[out=90,in=20] node[above] {$\alpha$} (s2);
    \end{tikzpicture}}}
%}
}
\\
&
= \ite{\alpha}{\set{ \Pair{\set{q_2}}{\set{q_0}}}}{\set{\Pair{\set{q_1}}{\set{q_0}}}}
&
\\
s_1 = \Pair{\set{q_1}}{\set{q_0}} &
\tfINF(q_1){\aprod}\tfINF( q_0)
= \ite{\alpha}{\set{\Pair{\emptyset}{\underbrace{\set{q_2,q_0}}_{\raiselabel[-1.7em]{{\leadsto}\set{q_0}}}}}}
              {\set{\Pair{\set{q_1}}{\underbrace{\set{q_1,q_0}}_{\raiselabel[-1.7em]{{\leadsto}\set{q_0}}}}}} &
\\
s_2 = \Pair{\set{q_2}}{\set{q_0}} &
\tfINF(q_2){\aprod}\tfINF( q_0) =  \ite{\notA\alpha}{\set{\Pair{\emptyset}{\set{q_0}}}}{\set{\Pair{\set{q_2}}{\set{q_0}}}} &
\end{array}
\]
The computation of $\sigma(s_1)$ (and $\sigma(s_2)$)
applied Lemma~\ref{lma:state-reduction} (indicated by $\leadsto$)
by using that $\tfINF(q_2) \land \tfINF(q_0) = \tfINF(q_1) \land  \tfINF(q_0) = \tfINF(q_0)$.
The algorithm terminates with the $\NBA$ shown in the last column.
\end{ex}

\subsection{Correctness of $\AElimName$}
The correctness proof of the algorithm is by reduction to \cite{MH84},
using generic properties of transition terms and mintermization. In
particular, Theorem~\ref{thm:AE} implies that
$\Lang{\AElim{\M}}=\Lang{\M}$.  First note that the algorithm terminates
because $P$ is finite and $S\subseteq P$.

Let $\toconj{}$ below denote the
operation that given a pair $\Pair{U}{V}$ of finite
sets $U,V$ of states, forms their combined conjunction:
$\toconj{\Pair{U}{V}}\eqdef \bigwedge(U\cup V)$.
Note that $\bigwedge\!\emptyset = \top$.

\begin{thm}[\AElimName]
\label{thm:AE}
$\forall q\in Q_{\AElim{\M}}$: $\Lang[\AElim{\M}]{q} = \Lang[\M]{\toconj{q}}$.
\end{thm}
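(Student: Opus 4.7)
The plan is to reduce Theorem~\ref{thm:AE} to the classical correctness of the Miyano--Hayashi construction \cite{MH84} via mintermization. By Lemma~\ref{lma:mt}, $\Lang[\M]{\phi}$ and $\Lang[\mt{\M}]{\phi}$ coincide modulo the translation $w\mapsto\mt{w}$, and likewise for $\AElim{\M}$ and $\mt{\AElim{\M}}$. Hence it suffices to establish the classical equality $\Lang[\mt{\AElim{\M}}]{\Pair{U}{V}}=\Lang[\mt{\M}]{\toconj{\Pair{U}{V}}}$ for every reachable $\Pair{U}{V}\in S$ in the output $\NBA$.

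Next I would verify that $\mt{\AElim{\M}}$ coincides with the classical Miyano--Hayashi $\NBA$ built from $\mt{\M}$, modulo the state identifications allowed by Lemma~\ref{lma:state-reduction}. The key observation is that, by Lemma~\ref{lma:TT} applied to the lifted operation $\aprod$, for any $f,g\in\TT{\bQ}$ and $a\in\D$,
\[
\leafof{(f\aprod g)}{a} \;=\; \leafof{f}{a}\;\aprod\;\leafof{g}{a}.
\]
Combined with the identity $\leafof{\tfINF(q)}{a}=\DNF{\leafof{\tf(q)}{a}}$ recalled just before Lemma~\ref{lma:mt}, this gives that the $a$-successor leaves of $\Pair{U}{V}$ under $\sigma$ (when $U\neq\emptyset$) are exactly the classical Miyano--Hayashi successors in $\mt{\M}$ on the minterm $\hat{a}\in\Minterms{\CondOf{\M}}$ containing $a$, after applying the partition formula~(\ref{eq:product}). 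The $U=\emptyset$ branch wrapped by $\Fin{\cdot}$ implements the round-robin reset, and the accepting set $F_{\N}=\set{\Pair{\emptyset}{V}\in S}$ matches the acceptance condition of \cite{MH84}.

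With this alignment in place, the theorem follows from the classical correctness argument: any accepting $\AElim{\M}$-run from $\Pair{U}{V}$ projects, component by component, onto a family of accepting $\M$-runs witnessing $\toconj{\Pair{U}{V}}$, and conversely any such family can be interleaved into an accepting $\AElim{\M}$-run, with the alternation of $\Pair{\emptyset}{\cdot}$ states ensuring that every infinite branch visits $F$ infinitely often. The main obstacle will be the bookkeeping: the induction must handle the $\Fin{\cdot}$ branch uniformly with the $\aprod$ branch, and it must justify that restricting to the \emph{minimal} satisfying assignments selected by $\DNF{\cdot}$, per equation~(\ref{eq:MinSat}), is both necessary and sufficient for constructing accepting runs, so that the symbolic $\sigma$ produced by the algorithm does not lose any runs that the classical construction would realize over arbitrary, non-minimal, satisfying assignments of leaves.
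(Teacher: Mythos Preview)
Your proposal is correct and follows essentially the same route as the paper: use Lemma~\ref{lma:TT} and the identity $\leafof{\tfINF(q)}{a}=\DNF{\leafof{\tf(q)}{a}}$ to show that the leaves of $\sigma(\Pair{U}{V})$ at $a$ are exactly the Miyano--Hayashi successors computed in $\mt{\M}$ on $\mt{a}$ (the paper derives this explicitly as equations (\ref{eq:prop1}) and (\ref{eq:prop2})), then invoke Lemma~\ref{lma:mt} on both $\M$ and $\N$ to reduce to the classical result of \cite{MH84}. The paper also notes, as you anticipated, that restricting to $\DNF{\phi}$ rather than arbitrary satisfying sets is harmless because the minimal satisfiers of (\ref{eq:MinSat}) are contained in $\DNF{\phi}$; one small technical point the paper adds is the w.l.o.g.\ assumption $\CondOf{\sigma}=\CondOf{\tf}$ so that the same minterm alphabet serves for both automata.
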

\begin{proof}
Let $\N=\AElim{\M}$ as above.
We use the following properties:
\begin{itemize}
\item[(i)]
the order of members of a conjunction is immaterial due to sets in INF;
\item[(ii)]
by definition of INF and Lemma~\ref{lma:TT}:
$\forall f:Q\rightarrow\TT{\BCp{Q}},a\in\D:\leafof{\INF{f}}{a}=\DNF{\leafof{f}{a}}$.
\end{itemize}
Let $q=\Pair{U}{V}$ be such that $U\neq\emptyset$. We get that
(note that $V=\emptyset$ implies that $\tfINF(V)=\{\emptyset\}$)
\begin{eqnarray}
\nonumber
\leafof{\sigma(\Pair{U}{V})}{a} &=& \leafof{(\tfINF(U)\aprod\tfINF(V))}{a}
\;\stackrel{\textrm{(i)}}{=}\;
\leafof{(\INF{\bigwedge_{t\in U}\tf(t)}\aprod\INF{\bigwedge_{t\in V}\tf(t)})}{a} \\
\nonumber
&\stackrel{\textrm{(ii)}}{=}&
\DNF{\bigwedge_{t\in U}(\leafof{\tf(t)}{a})}\aprod \DNF{\bigwedge_{t\in V}(\leafof{\tf(t)}{a})}\\
\label{eq:prop1}
&\stackrel{(\ref{eq:product})}{=}& \{\Pair{X{\setminus}F}{Y\cup(X\cap F)}\mid
X\in \DNF{\bigwedge_{t\in U}(\leafof{\tf(t)}{a})},
Y\in \DNF{\bigwedge_{t\in V}(\leafof{\tf(t)}{a})}
\end{eqnarray}
Now let $q=\Pair{U}{V}$ be such that $U=\emptyset$.
We get that
\begin{eqnarray}
\nonumber
\leafof{\sigma(\Pair{U}{V})}{a} &=& \leafof{(\Fin{\tfINF(V)})}{a} 
\;\stackrel{\textrm{(i)}}{=}\; \leafof{(\Fin{\INF{\bigwedge_{t\in V}\tf(t)}})}{a}
\;\stackrel{\textrm{(ii)}}{=}\; \Fin{\DNF{\bigwedge_{t\in V}(\leafof{\tf(t)}{a})}} \\
\label{eq:prop2}
&\stackrel{(\ref{eq:product})}{=}&
\{\Pair{X{\setminus}F}{X \cap F}\mid X \in \DNF{\bigwedge_{t\in V}(\leafof{\tf(t)}{a})}\}
\end{eqnarray}
Finally, we relate the properties (\ref{eq:prop1}) and (\ref{eq:prop2})
with $\mt{\M}$ (recall Section~\ref{sec:toClassicalABA}), where for $W\subseteq Q$,
$\bigwedge_{t\in W}(\leafof{\tf(t)}{a})= \bigwedge_{t\in W}\mt{\tf}(t,\mt{a})$.
The properties (\ref{eq:prop1}) and (\ref{eq:prop2})
represent the construction used in \cite{MH84}
on top of which further analysis over accepting runs of $\mt{\M}$
is built.
Assume, {w.l.o.g.}, that $\CondOf{\sigma}=\CondOf{\tf}$.
%i.e., that $\mt{\N}$ and $\mt{\M}$ have the same alphabet.
It follows, for all $s\in S$ and $w\in\Do$, that
\[
\begin{array}{c}
w \in \Lang[\N]{s} \stackrel{\textrm{Lma~\ref{lma:mt}}}{\IFF}
\mt{w}\in  \Lang[\mt{\N}]{s}
\stackrel{\textrm{MH84}}{\IFF}
\mt{w}\in \Lang[\mt{\M}]{\toconj{s}}
\stackrel{\textrm{Lma~\ref{lma:mt}}}{\IFF}
w\in \Lang[\M]{\toconj{s}}
\end{array}
\]
via mintermization and
MH84 \cite{MH84}, and where
we used that minimal satisfiers (\ref{eq:MinSat}) of $\phi$ are
included in $\DNF{\phi}$ in conditions (\ref{eq:prop1}) and (\ref{eq:prop2}).
\end{proof}
Observe that \emph{cleaning} of transition terms
constructed by $\aprod$
(and that $\A$ is \emph{decidable})
 is not necessary for correctness
but essential for any practical implementation of
the algorithm.

\subsection{Product of $\NBA$s Modulo $\A$ as a Special Case of $\AElimName$}

An immediate application of the alternation elimination algorithm
is to construct the \emph{product} of two $\NBA[\A]$'s 
$\N[i]=(\A,Q_i,Q^0_i,\tf_i,F_i)$, for $i\in\{1,2\}$, 
such that $Q_1\cap Q_2\subseteq\{\bot,\top\}$, as the $\NBA[\A]$:
\[
\begin{array}{l}
\N[1]\times\N[2] \eqdef \AElim{\N[1]\land\N[2]}
\end{array}
\]
Thus, the algorithm provides an essential step in the automata-based approach to 
model checking, as discussed in the introduction. 
We make some observations of the
properties of {\AElimName} in this case that result
in the following corollary of Theorem~\ref{thm:AE}.
This also implies that the total number of states in $Q_{\N[1]\times\N[2]}$ is
bounded by $4|Q_{1}||Q_{2}|$.
W.l.o.g., assume also that $\bot$ and $\top$ do not occur in $Q_1\cup Q_2$.%
\footnote{Or just pretend that $\bot$ and $\top$ are treated a normal states, this
is just to avoid trivial special cases such as $\Pair{\set{q_1}}{\emptyset}$.}
\begin{cor}[Product]
\label{cor:Product}
\label{cor:prod}
Let $\Pair{U}{V}\,{\in}\,Q_{\N[1]{\times}\N[2]}$. Then 
$U\,{\cap}\, V=\emptyset$,
$U\,{\cap}\, (F_1{\cup}F_2) = \emptyset$, and
there exist $q_1\,{\in}\, Q_1$ and $q_2\,{\in}\, Q_2$
such that 
$%\exists\, q_1\,{\in}\, Q_1,q_2\,{\in}\, Q_2:
U\,{\cup}\, V=\{q_1,q_2\}$
and
$
\Lang[{\N[1]{\times}\N[2]}]{\Pair{U}{V}} = \Lang[{\N[1]}]{q_1}\,{\cap}\, \Lang[{\N[2]}]{q_2}$.
\end{cor}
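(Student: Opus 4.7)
The plan is to establish the three structural claims jointly as an invariant maintained by the while-loop of $\AElimName$ applied to $\N[1] \land \N[2]$, then deduce the language identity from Theorem~\ref{thm:AE}, and finally deduce the state-count bound by counting.

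For the base case, $\Qi[\N[1] \land \N[2]] = \Qi[\N[1]] \land \Qi[\N[2]]$. Since each $\N[i]$ is nondeterministic, $\Qi[\N[i]]$ is a disjunction of states in $Q_i$, so every element of $\DNF{\Qi[\N[1]] \land \Qi[\N[2]]}$ is a set $\{q_1, q_2\}$ with $q_i \in Q_i$. Applying $\Fin{\_}$ with $F = F_1 \cup F_2$ then yields initial states $\Pair{\{q_1,q_2\} \setminus F}{\{q_1,q_2\} \cap F}$, which manifestly satisfy the three invariants.

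For the inductive step, let $\Pair{U}{V}$ satisfy the invariant with $U \cup V = \{p_1, p_2\}$, $p_i \in Q_i$. The key observation is that nondeterminism of $\N[i]$ makes every leaf of $\tfINF(p_i)$ a singleton $\{r\} \subseteq Q_i$. So in each of the four ways of partitioning $\{p_1, p_2\}$ between $U$ and $V$, whenever $p_i$ appears on a side, that side's DNF leaf contributes a single successor in $Q_i$. Consequently, every combined leaf $X \cup Y$ of the computation (via $\tfINF(U) \aprod \tfINF(V)$, or via $\Fin{\tfINF(V)}$ when $U = \emptyset$) equals exactly $\{r_1, r_2\}$ with $r_i \in Q_i$. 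The resulting successor $\Pair{U'}{V'}$ is by definition~(\ref{eq:product}) a partition of $\{r_1, r_2\}$ with $U' = X \setminus F$ and $V' = Y \cup (X \cap F)$, so $U' \cap F = \emptyset$ by construction and $U' \cap V' = \emptyset$ because $X \setminus F$ is disjoint from $X \cap F$ and disjoint from $Y$ (by disjointness of $Q_1, Q_2$ in the cases $U = \{p_1\}, V = \{p_2\}$ and symmetrically, and trivially when $X$ or $Y$ is empty). This case analysis is where the bulk of the care lies, but each subcase is routine given the singleton-leaf property.

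For the language identity, Theorem~\ref{thm:AE} gives $\Lang[{\N[1] \times \N[2]}]{\Pair{U}{V}} = \Lang[{\N[1] \land \N[2]}]{\toconj{\Pair{U}{V}}} = \Lang[{\N[1] \land \N[2]}]{q_1 \land q_2}$, which by~(\ref{eq:LABAq}) equals $\Lang[{\N[1] \land \N[2]}]{q_1} \cap \Lang[{\N[1] \land \N[2]}]{q_2}$. Since $Q_1 \cap Q_2 = \emptyset$ and any transition from $q_i \in Q_i$ in $\N[1] \land \N[2]$ produces a leaf in $\BCp{Q_i}$, any run from $q_i$ stays within $Q_i$, and its acceptance condition restricted to $Q_i$ is exactly $F_i$. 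Hence $\Lang[{\N[1] \land \N[2]}]{q_i} = \Lang[{\N[i]}]{q_i}$. The state bound $|Q_{\N[1] \times \N[2]}| \leq 4|Q_1||Q_2|$ then follows from the invariant: each state is determined by a choice of $(q_1, q_2) \in Q_1 \times Q_2$ together with a subset $U \subseteq \{q_1, q_2\}$ (giving $V = \{q_1, q_2\} \setminus U$), yielding at most $4$ states per pair.
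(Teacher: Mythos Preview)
Your proposal is correct and follows essentially the same approach as the paper: both establish the structural invariants (the four possible shapes of $\Pair{U}{V}$) by induction over the while-loop of $\AElimName$, using the singleton-leaf property of $\tfINF(q_i)$ that comes from nondeterminism, and then perform the same case analysis on how $\{q_1,q_2\}$ is split between $U$ and $V$. Your explicit derivation of the language identity via Theorem~\ref{thm:AE} and the observation that runs from $q_i$ in $\N[1]\land\N[2]$ stay within $Q_i$ is a welcome addition, as the paper's appendix proof concentrates on the structural claims and leaves the language part implicit.
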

A further improvement is that the four cases of $\Pair{U}{V}$ that arise in
the corollary can be reduced to two cases by lifting of the classical product of
$\NBA$s \cite{Cho74} (see \cite[Theorem~2]{Kupf18}) to modulo $\A$,
which improves the upper bound to $2|Q_1||Q_2|$.

The construction of product, yet again, demonstrates that
mintermization is avoidable.
Moreover, a classical product $\mt{\N[1]}\times\mt{\N[2]}$ is in general \emph{not meaningful}
because the respective finite alphabets need not be the same.
One would first need to determine the \emph{shared}
alphabet $\Minterms{\CondOf{\N[1]}\cup\CondOf{\N[2]}}$,
which can have an up-front \emph{exponential} cost.
For example, if $\N[1]$ and $\N[2]$ each use 10 predicates from $\PA$
then the shared alphabet can
in the worst case be of size $2^{20}$.

%++++++++++++++++++++++++++\input{ltl}
\section{LTL Modulo $\A$}
\label{sec:ltl}

Here we lift classical linear temporal logic (LTL) to be modulo
$\A=(\D, \Psi, \den{\_}, \bot, \top, \orA, \andA, \notA)$ as a given
effective Boolean algebra, via symbolic derivatives that translate
LTL into transition terms. We then utilize the
algorithms developed for $\ABA[\A]$ for decision procedures of $\LTL[\A]$. 
Let $\alpha$ range over $\PA$ and $\varphi,\psi$ range over $\LTL$.
\[
\varphi \quad::=\quad
\alpha \quad\mid\quad
\lnot\varphi \quad\mid\quad
\varphi_1\lor\varphi_2 \quad\mid\quad
\varphi_1\land\varphi_2 \quad\mid\quad
\Next\varphi \quad\mid\quad
\varphi_1\Until\varphi_2 \quad\mid\quad
\varphi_1\Release\varphi_2
\]
where $\Next$ is \emph{Next}, $\Until$ is \emph{Until}, and $\Release$
is \emph{Release}, are called \emph{modal} formulas. We let the \emph{true} 
formula be $\top$ and
the \emph{false} formula be $\bot$ from $\PA$, and use the following
standard abbreviations:
\[
\varphi \limplies \psi \eqdef \lnot\varphi\lor\psi,
\quad
\Finally\psi\eqdef\top\Until\psi,
\quad
\Globally\psi\eqdef \bot\Release\psi
\]
where $\Finally$ is called \emph{Eventually (or Finally)} and
$\Globally$ is called \emph{Globally (or Always)}.

\subsection{Semantics}

A word $w\in\Do$ is a \emph{model of} $\varphi\in\LTL[\A]$,
$w\models\varphi$, when the following holds, where $\alpha\in\PA$:
\begin{eqnarray}
\label{eq:M1}
w\models\alpha &\eqdef& \ith{w}{0}\in\den{\alpha}\\
\label{eq:M2}
w\models\varphi\land\psi &\eqdef& w\models\varphi \band w\models\psi \\
\label{eq:M3}
w\models\varphi\lor\psi &\eqdef& w\models\varphi \bor w\models\psi \\
\label{eq:M4}
w\models\lnot\varphi &\eqdef& w\not\models\varphi \\
\label{eq:M5}
w\models \Next\psi &\eqdef& \Rest{w}\models\psi \\
\label{eq:M6}
w\models \varphi\Until\psi &\eqdef& \exists j:
\Rest[j]{w}\models\psi\;\textrm{and}\;\forall i{<}j:\Rest[i]{w}\models\varphi \\
\label{eq:M7}
w\models\varphi\Release\psi &\eqdef&
\forall j: \Rest[j]{w}\models\psi \bor 
\exists j:\Rest[j]{w}\models\varphi\band\forall i{\leq}j:\Rest[i]{w}\models\psi \\
\label{eq:L}
\Lang{\varphi} &\eqdef& \{w\in\Do\mid w\models\varphi\}
\end{eqnarray}
Let $\varphi\equiv\psi\eqdef\Lang{\varphi}=\Lang{\psi}$.
The rules (\ref{eq:M6}) and (\ref{eq:M7}) are duals of each other, either one suffices as the
main definition.
Observe that if $\alpha\in\PA$
then $w \models \lnot\alpha$ iff $w \not\models \alpha$ iff
$\ith{w}{0} \notin \den{\alpha}$ iff
$\ith{w}{0} \in \D\setminus\den{\alpha}$ iff 
$w \models \notA\alpha$.
If follows that any Boolean combination of predicates from $\PA$ can itself
be converted into a predicate in $\PA$.

The following examples illustrate some cases of $\LTL[\A]$ modulo
various $\A$.  The first example illustrates --
\emph{at a very abstract level} -- the well-known connection of
integrating SAT solving into \emph{symbolic LTL}, e.g.,
by using BDDs \cite{WulfDMR08}.
%in combination with antichain algorithms in the underlying
%element algebra $\A$ to support efficient handling of propositional formulas

\begin{ex}
\label{ex:classicalLTL}
Classical LTL over a set of atomic propositions $P$ is 
$\LTL[\A]$ where $\A$ is an algebra of Boolean combinations over $P$, where $\D=2^P$
and $\Psi=\BC{P}$, and 
where satisfiability can be implemented using a SAT solver.
An element $d\in\D$ such that $d\vDash\alpha$
defines a \emph{truth
assignment} to $P$  that makes $\alpha$ true.  For example, if
$P=\{p_i\}_{i<7}$ and $\alpha = p_6\andA p_5\andA p_4\andA (p_3 \orA
((\notA p_2)\andA p_1))$ then if $w\in\Do$ is such
that $w(0)=\{p_1,p_4,p_5,p_6\}$ and $w(1) = \{p_1,p_2,p_4,p_5,p_6\}$
then $w(0)\models\alpha$ but $w(1)\not\models\alpha$.
Thus, e.g., $w\not\models \Globally\alpha$.
\end{ex}

While in classical LTL, as in Example~\ref{ex:classicalLTL},
$\D$ is \emph{finite},
in the next example $\D$ is \emph{infinite}.

\begin{ex}
  \label{ex:smt}
  Consider LTL modulo $\A$ as
  the algebra of Satisfiability Modulo Theories formulas $\PA$
  that can be implemented using an SMT solver.
  In this case $\D$ is the set of models that provide interpretations to \emph{uninterpreted constants} or \emph{variables}.
  Let $\beta$ be the predicate $0 < x$ and let $\alpha$ be
  the predicate $x < 1$ where $x$ is a variable of type \emph{real}.
  Then $\alpha\Release\beta$ states that $x$
  must remain positive until $x<1$.
  E.g., $(x{\mapsto}1)(x{\mapsto}\frac{1}{2})(x{\mapsto}0)^\omega\models\alpha\Release\beta$.
\end{ex}

%++++++++++++++++++++++++++++++++++++\input{deriv}
\subsection{Vardi Derivatives of $\LTL$ Lifted to Modulo $\A$}
\label{sec:deriv}

Here we show how the semantics of $\LTL[\A]$ can be realized via transition terms.
The key observation here is that
a concrete derivative (i.e., for a given symbol $a\in\D$)
is not actually constructed but
maintained in a symbolic form as a transition term. 
Let $\alpha$ range over $\PA$, and $\varphi$ and $\psi$ range over $\LTL[\A]$.
The \emph{symbolic derivative} of an $\LTL[\A]$ formula is defined as
the following transition term in $\TT[\A]{{\LTL}}$:
\begin{eqnarray}
\label{eq:d1}
  \deriv{\alpha} &\eqdef& \ifthen{\alpha}{\top}\\
  \label{eq:d2}
  \deriv{\varphi \land \psi} &\eqdef& \deriv{\varphi} \land \deriv{\psi}\\
  \label{eq:d3}
  \deriv{\varphi \lor \psi} &\eqdef& \deriv{\varphi} \lor \deriv{\psi}\\
  \label{eq:d4}
  \deriv{\lnot\varphi} &\eqdef& \lnot\deriv{\varphi}\\
  \label{eq:d5}
  \deriv{\Next\psi} &\eqdef& \psi\\
  \label{eq:d6}
  \deriv{\varphi\Until\psi} &\eqdef& \deriv{\psi}\lor(\deriv{\varphi}\land(\varphi\Until\psi)) \\
  \label{eq:d7}
  \deriv{\varphi\Release\psi} &\eqdef& \deriv{\psi}\land(\deriv{\varphi}\lor(\varphi\Release\psi))
\end{eqnarray}
Symbolic derivatives, as given above, lift the construction in \cite{Vardi95LTL}
(see also \cite[Theorem 24]{Kupf18}) as to be modulo $\A$.

We also let $\deriv{\top}\eqdef\top$ and $\deriv{\bot}\eqdef\bot$ and
observe that $\lnot\top\equiv\bot$ and $\lnot\bot\equiv\top$.
We use the following properties of
$\Globally$ and $\Finally$ (where $\bot\lor\phi\feq\phi$ and
$\top\land\phi\feq\phi$ are applied implicitly):
\[
  \deriv{\Globally\psi} \feq \deriv{\psi}\land\Globally\psi
  \qquad
  \deriv{\Finally\psi} \feq \deriv{\psi}\lor\Finally\psi
\]
Correctness of the derivation rules will follow as a special case
of Theorem~\ref{thm:ELTL} (presented in Section~\ref{sec:ltlere}).

We now link derivatives formally with $\ABA$s.
We write $\LTLp[\A]$ for the \emph{positive} formulas in
$\LTL[\A]$ where $\lnot$ does not occur.
This is a standard normal form assumption and every formula
in $\LTL$ can be translated into
$\LTLp$ of the same size if members of $\PA$
are treated as units. The particular aspect with modulo $\A$ is that
\emph{complement is propagated into $\A$}, i.e., for $\alpha\in\PA$,
the formula $\lnot\alpha$ becomes the predicate $\notA\alpha$.
\begin{definition}[{$\LTLp[\A]$}]
\label{def:LTLp}
 The \emph{positive fragment of $\LTL[\A]$} is obtained by
eliminating any explicit use of $\lnot$ through de Morgan's laws 
and the standard equivalence preserving
rules where $\alpha\in\PA$ and $\varphi,\psi\in\RLTL$:
$\lnot\alpha\equiv\notA\alpha$,
$\lnot\Next\varphi\equiv\Next\lnot\varphi$,
$\lnot(\varphi\Until\psi)\equiv\lnot\varphi\Release\lnot\psi$,
and
$\lnot(\varphi\Release\psi)\equiv\lnot\varphi\Until\lnot\psi$.
\end{definition}

\begin{definition}[{$\M[\phi]$ for $\phi\in\LTLp[\A]$}]
For $\phi\in\LTLp[\A]$ let $Q_\phi$ denote the set
containing $\bot$ and $\top$ and
all \emph{non-Boolean} subformulas of $\phi$, i.e., predicates in $\PA$,
and all \emph{modal} formulas.
Let $\tf_\phi$ denote $\lambda q.\deriv{q}$ for $q\in Q_\phi$.
Let $F_\phi$ contain all the $\Release$-formulas and $\top$ in $Q_\phi$.
Then
$
\M[\phi]\eqdef(\A,Q_\phi,\phi,\tf_\phi, F_\phi).
$
Observe that $\tf_{\phi}:Q_\phi\rightarrow\TT[\A]{\BCp{Q_\phi}}$.
\end{definition}

We now state a key result from \cite{Vardi95LTL}
(\cite[Theorem~24]{Kupf18}) lifted as to be modulo $\A$. 

\begin{thm}[Vardi derivatives modulo theories]
\label{thm:FA}
$\forall\phi\in\LTLp[\A]: \Lang{\M[\phi]}=\Lang{\phi}$
\end{thm}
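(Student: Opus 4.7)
The plan is to reduce the claim, in the same spirit as the proof of Theorem~\ref{thm:AE}, to the classical Vardi construction by mintermization, using Lemma~\ref{lma:mt} as the bridge. Before that, I would establish the one symbolic ingredient that does not follow from the classical theorem, namely a \emph{derivative soundness lemma} at the LTL level: for every $\varphi \in \LTLp[\A]$, every $a \in \D$ and every $w \in \Do$,
\[
aw \models \varphi \quad\IFF\quad w \models \leafof{\deriv{\varphi}}{a},
\]
where satisfaction is extended to positive Boolean combinations of $\LTL[\A]$ formulas pointwise. This is proved by induction on $\varphi$, matching each clause (\ref{eq:M1})--(\ref{eq:M7}) against the corresponding derivative rule (\ref{eq:d1})--(\ref{eq:d7}). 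The Boolean cases rely on Lemma~\ref{lma:TT} so that $\leafof{(\deriv{\varphi}{\land}\deriv{\psi})}{a} = \leafof{\deriv{\varphi}}{a}\land\leafof{\deriv{\psi}}{a}$ and similarly for $\lor$; the modal cases for $\Until$ and $\Release$ use the standard semantic fixpoint identities $\varphi\Until\psi \equiv \psi\lor(\varphi\land\Next(\varphi\Until\psi))$ and $\varphi\Release\psi \equiv \psi\land(\varphi\lor\Next(\varphi\Release\psi))$ together with (\ref{eq:d5}).

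Next I would set $\Sigma = \Minterms{\CondOf{\M[\phi]}}$ and define a classical LTL formula $\mt{\phi}$ over the finite alphabet $\Sigma$ by recursively replacing each predicate $\alpha \in \PA$ occurring in $\phi$ with the disjunction of atoms $\bigvee\{\mt{a}\in\Sigma \mid \den{\mt{a}}\subseteq\den{\alpha}\}$; the modal/Boolean structure is preserved. A straightforward structural induction on $\phi$, using the partition property of minterms at the atomic case and the semantic clauses (\ref{eq:M1})--(\ref{eq:M7}) at the compound cases, gives $w \models \phi \IFF \mt{w} \models \mt{\phi}$ for all $w \in \Do$.

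The third step is to identify $\mt{\M[\phi]}$ with the classical Vardi automaton built from $\mt{\phi}$. Its states, initial combination and accepting set are read off directly from the definition of $\M[\phi]$ and the renaming $q \mapsto \mt{q}$; the transition correspondence follows from the derivative soundness lemma specialized to minterm letters, since for every $\mt{a}\in\Sigma$ and every $a\in\den{\mt{a}}$, $\leafof{\tf_\phi(q)}{a} = \leafof{\deriv{q}}{a}$ coincides with Vardi's classical one-step derivative applied to $\mt{q}$ at $\mt{a}$. Chaining the classical Vardi theorem, Lemma~\ref{lma:mt}, and the equivalence from the previous paragraph yields
\[
w \in \Lang{\M[\phi]} \IFF \mt{w} \in \Lang{\mt{\M[\phi]}} \IFF \mt{w} \models \mt{\phi} \IFF w \models \phi,
\]
which is the claim.

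The main obstacle is the derivative soundness lemma, more specifically the $\Until$ and $\Release$ cases, because checking that (\ref{eq:d6}) and (\ref{eq:d7}) behave correctly as \emph{leaves} in $\TT[\A]{\LTL}$ requires simultaneously the distributivity of $\leafof{\,\cdot\,}{a}$ over $\land$, $\lor$ from Lemma~\ref{lma:TT} and an inductive/coinductive reading of the respective fixpoint identities. Once that lemma is in place, the remaining mintermization bookkeeping is routine and directly parallels the argument used in Theorem~\ref{thm:AE}.
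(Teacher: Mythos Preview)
Your overall reduction---mintermize, apply the classical Vardi theorem, and bridge back via Lemma~\ref{lma:mt}---is exactly the paper's approach, and your chain $w\in\Lang{\M[\phi]}\IFF\mt{w}\in\Lang{\mt{\M[\phi]}}\IFF\mt{w}\models\mt{\phi}\IFF w\models\phi$ matches the paper's (the paper additionally encodes each minterm as a propositional valuation over fresh atoms $P=\{p_\iota\}_{\iota\in\CondOf{\phi}}$ rather than using $\Sigma$ as the alphabet directly, but that is only a presentational difference).

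One point, however, is mis-justified. The derivative soundness lemma you put up front is \emph{not} what establishes the identification of $\mt{\M[\phi]}$ with the classical Vardi automaton for $\mt{\phi}$: that identification is a \emph{syntactic} statement about transition functions, whereas your lemma is a \emph{semantic} statement about $\models$, and the latter does not imply the former (two syntactically different targets could have the same language). What actually gives the transition correspondence is the observation that rules (\ref{eq:d1})--(\ref{eq:d7}) are literally Vardi's classical rules with the atomic case wrapped in an ITE, so $\leafof{\deriv{q}}{a}$ equals Vardi's $\tfc(\mt{q},\mt{a})$ by a direct structural induction on $q$ using Lemma~\ref{lma:TT}; no appeal to $\models$ is needed. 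Your soundness lemma is in fact the $\LTL$ fragment of Theorem~\ref{thm:ELTL}, which the paper proves separately and does not invoke in the proof of Theorem~\ref{thm:FA}. So your first paragraph can simply be dropped, and the third paragraph's justification replaced by the structural comparison just described.
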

\noindent
Theorem~\ref{thm:FA} implies that the more general language invariant
is preserved by \emph{all} states of $\M[\phi]$.
\begin{cor}[LTL invariance]
\label{cor:FA}
$\forall\phi\in\LTLp[\A]:\forall\psi\in Q_{\M[\phi]}: \Lang[{\M[\phi]}]{\psi}=\Lang{\psi}$
\end{cor}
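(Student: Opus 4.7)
The plan is to reduce the corollary to Theorem~\ref{thm:FA} applied to $\psi$ itself, by showing that the sub-automaton of $\M[\phi]$ induced by the states reachable from $\psi$ coincides with $\M[\psi]$. Since the language $\Lang[{\M[\phi]}]{\psi}$ depends only on accepting runs starting at $\psi$, and such runs only traverse states reachable from $\psi$, it suffices to verify that this reachable sub-automaton has the same states, transition function, and accepting set as $\M[\psi]$.

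First I would check that $\psi \in Q_{\M[\phi]}$ is itself a formula in $\LTLp[\A]$, so that Theorem~\ref{thm:FA} applies to it. By Definition of $Q_\phi$, the state $\psi$ is either $\bot$, $\top$, a predicate in $\PA$, or a modal subformula of the positive formula $\phi$, and in each case $\psi$ is again a positive formula. I would then verify that $Q_\psi \subseteq Q_\phi$: every non-Boolean subformula of $\psi$ is a non-Boolean subformula of $\phi$, and $\bot, \top$ lie in both sets by definition.

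Next I would verify that the structure of $\M[\phi]$ restricts correctly to $Q_\psi$. The transition function $\tf_\phi = \lambda q.\deriv{q}$ does not depend on the ambient formula, so $\tf_\phi(q) = \deriv{q} = \tf_\psi(q)$ for all $q \in Q_\psi$; moreover, the leaves of $\deriv{q}$ lie in $\BCp{Q_\psi}$, so $Q_\psi$ is closed under the (symbolic) transition. The accepting set $F_\phi$ consists of all $\Release$-formulas in $Q_\phi$ together with $\top$, and its restriction to $Q_\psi$ clearly equals $F_\psi$. These two facts together imply that the accepting runs for $w\in\Do$ from $\psi$ in $\M[\phi]$ are exactly the accepting runs for $w$ from $\psi$ in $\M[\psi]$, since runs only visit states reachable from the root.

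Combining these observations yields
\[
\Lang[{\M[\phi]}]{\psi} \;=\; \Lang[{\M[\psi]}]{\psi} \;=\; \Lang{\M[\psi]} \;=\; \Lang{\psi},
\]
where the middle equality is just the definition of $\Lang{\M[\psi]}$ as $\Lang[{\M[\psi]}]{\psi}$ (since $\psi$ is the initial state of $\M[\psi]$) and the last equality is Theorem~\ref{thm:FA} applied to $\psi$. The only mild subtlety, and the step I expect to take the most care, is checking the induction on positive subformulas that ensures $Q_\psi \subseteq Q_\phi$ is closed under the leaves appearing in the transition terms $\deriv{q}$ for $q\in Q_\psi$; once this closure is established, the rest is essentially bookkeeping.
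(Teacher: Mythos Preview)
Your proposal is correct and takes essentially the same approach as the paper: the paper's one-line proof writes $\Lang[{\M[\phi]}]{\psi}=\Lang{\M[\phi]^{(\psi)}}\stackrel{(\star)}{=}\Lang{\M[\psi]}=\Lang{\psi}$, justifying $(\star)$ by the remark that ``only $Q_\psi$ matters,'' which is precisely the closure and restriction argument you spell out in detail. Your elaboration of why $Q_\psi\subseteq Q_\phi$, why the transition function and accepting set restrict correctly, and why runs from $\psi$ only see $Q_\psi$ is exactly what the paper's terse remark is gesturing at.
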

\begin{proof}
$\Lang[{\M[\phi]}]{\psi}{=}\Lang{{\M[\phi]^{(\psi)}}}\stackrel{(\star)}{=}\Lang{{\M[\psi]}}{=}\Lang{\psi}$
where $(\star)$ holds because only $Q_{\psi}$ matters.
\end{proof}

\subsection{Examples}

\begin{ex}
\label{ex:GFaF!a}
Let $\phi = \Globally(\Finally\alpha\land\Finally\notA\alpha)$, where
$\alpha\in\PA$ is such that $\alpha\nequiv\bot$ and $\alpha\nequiv\top$.
Then the transition function $\tf=\tf_{\phi}$ has the following (relevant) transition terms
in addition to $\tf(\top)=\top$:
\[
\tf(\Finally\alpha) = \ite{\alpha}{\top}{\Finally\alpha},\quad
\tf(\Finally\notA\alpha) = \ite{\notA\alpha}{\top}{\Finally\notA\alpha},\quad
\tf(\phi) = \ite{\alpha}{\Finally\notA\alpha\land\phi}{\Finally\alpha\land\phi}
\]
where
$
\deriv{\Finally\alpha} \feq (\deriv{\alpha}\lor\Finally\alpha) =
(\ifthen{\alpha}{\top}\lor\Finally\alpha) = \ite{\alpha}{\top\lor\Finally\alpha}{\bot\lor\Finally\alpha}
\feq \ite{\alpha}{\top}{\Finally\alpha}
$ and similarly that $\deriv{\Finally\notA\alpha}\feq\ite{\notA\alpha}{\top}{\Finally\notA\alpha}$, and
\[
\begin{array}{@{}r@{\;}c@{\;}l@{}}
\deriv{\phi} &\feq& (\deriv{\Finally\alpha\land\Finally\notA\alpha}\land\phi)
= (\deriv{\Finally\alpha}\land\deriv{\Finally\notA\alpha}\land\phi)
\feq (\ite{\alpha}{\top}{\Finally\alpha}\land \ite{\notA\alpha}{\top}{\Finally\notA\alpha}\land\phi)
\\
&\feq& \ite{\alpha}{\ite{\notA\alpha}{\phi}{\Finally\notA\alpha\land\phi}}{
                \ite{\notA\alpha}{\Finally\alpha\land\phi}{\Finally\alpha\land\Finally\notA\alpha\land\phi}}
\feq
\ite{\alpha}{\Finally\notA\alpha\land\phi}{\Finally\alpha\land\phi}
\end{array}
\]
$\M[\phi]$ is shown in Figure~\ref{fig:aba}
with $q_0=\phi$, $q_1=\Finally\alpha$ and $q_2=\Finally\notA\alpha$.
\end{ex}

Note that some states may become \emph{irrelevant} (unreachable from $\phi$) through
rewrites of $\tf_{\phi}$, which the abstract definition of $\M[\phi]$ does not 
directly reflect, e.g., $\alpha$ and $\notA\alpha$ are not relevant as states in
Example~\ref{ex:GFaF!a} above.
The following example illustrates the impact
of cleaning modulo different element theories $\A$ and
how this is reflected in the computation of $\M[\phi]$.
\begin{ex}
\label{ex:smt2}
Consider $\A$ as an SMT solver, let $x$ be of type \emph{integer}, and
let $\phi=(x{<}1)\Release(0{<}x)$, stating that the condition that $x$ must be positive
is released when $x<1$ becomes true. Then
\[
\begin{array}{@{}r@{\;}c@{\;}l@{}}
\deriv{\phi} &\feq& \deriv{0{<}x}\land(\deriv{x{<}1}\lor\phi) =
\ite{0{<}x}{\top}{\bot}\land(\ite{x{<}1}{\top}{\bot}\lor\phi) \\
&\feq&
\ite{0{<}x}{\top}{\bot}\land\ite{x{<}1}{\top}{\phi}
\feq
\ite{0{<}x}{\ite{x{<}1}{\top}{\phi}}{\bot}
\feq
\ite{0{<}x}{\phi}{\bot}
\end{array}
\]
where $\UNSAT{(0{<}x){\andA}(x{<}1)}$ is used for cleaning.
Essentially $(x{<}1)\Release(0{<}x)$ becomes
$\Globally(0{<}x)$.
\end{ex}

\subsection{Alternation Elimination for $\LTL[\A]$}

Combined together, Theorem~\ref{thm:AE} and
Corollary~\ref{cor:FA} show
the decidability of $\LTL[\A]$ for effective $\A$
and allow us to directly use the alternation
elimination algorithm (modulo $\A$) on the transition terms 
resulting from the symbolic derivatives of $\LTL[\A]$
to produce an nondeterministic B\"uchi automata from
an LTL formula in a lazy manner, while safely applying
many LTL-based rewrites.

\begin{thm}[LTL invariance of {\AElimName}]
$\forall\phi\in\LTLp[\A]:\forall q\in Q_{\AElim{\M[\phi]}}:\Lang[\AElim{\M[\phi]}]{q}=\Lang{\toconj{q}}$
\end{thm}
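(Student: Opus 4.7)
The plan is to chain the two big results already at hand: Theorem~\ref{thm:AE} for the alternation elimination algorithm and Corollary~\ref{cor:FA} for the invariance of $\M[\phi]$. Between them, they force the desired equality essentially for free, with only bookkeeping in between.

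First, I would fix $\phi\in\LTLp[\A]$ and an arbitrary state $q=\Pair{U}{V}\in Q_{\AElim{\M[\phi]}}$, where $U,V\subseteq Q_{\M[\phi]}$ by construction of $\AElimName$ (pair components are subsets of the input $\ABA[\A]$'s state set). Applying Theorem~\ref{thm:AE} to $\M=\M[\phi]$ immediately rewrites the left-hand side of the claim as $\Lang[{\M[\phi]}]{\toconj{q}}$, where $\toconj{q}=\bigwedge(U\cup V)$ is viewed as an element of $\BCp{Q_{\M[\phi]}}$.

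Next, I would unfold this into an intersection using the inductive extension of $\Lang[\M]{\cdot}$ to positive Boolean combinations of states in equation~(\ref{eq:LABAq}), obtaining $\bigcap_{\psi\in U\cup V}\Lang[{\M[\phi]}]{\psi}$. Corollary~\ref{cor:FA} then replaces each $\Lang[{\M[\phi]}]{\psi}$ by the LTL language $\Lang{\psi}$, which is applicable since every $\psi\in U\cup V$ lies in $Q_{\M[\phi]}$. Finally, I would fold the intersection back into an LTL conjunction via rule~(\ref{eq:M2}) together with~(\ref{eq:L}), yielding $\bigcap_{\psi\in U\cup V}\Lang{\psi}=\Lang{\bigwedge(U\cup V)}=\Lang{\toconj{q}}$, closing the chain.

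There is no real obstacle here, just a few corner cases to dispatch cleanly. The empty-conjunction case $U=V=\emptyset$ (so $\toconj{q}=\bigwedge\emptyset=\top$) and the dedicated $\bot,\top$ states --- which by the construction of $\M[\phi]$ belong to $Q_{\M[\phi]}$, and which may also appear inside $U$ or $V$ --- are all handled uniformly by the conventions $\Lang{\top}=\Do$ and $\Lang{\bot}=\emptyset$, which match $\Lang[{\M[\phi]}]{\top}$ and $\Lang[{\M[\phi]}]{\bot}$ under the $\ABA[\A]$ semantics because of the stipulation that $\tf(\top)=\top\in F$ and $\tf(\bot)=\bot\notin F$. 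No additional ideas beyond Theorem~\ref{thm:AE} and Corollary~\ref{cor:FA} are needed.
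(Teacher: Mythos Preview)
Your proposal is correct and follows essentially the same approach as the paper: apply Theorem~\ref{thm:AE} to pass from $\Lang[\AElim{\M[\phi]}]{q}$ to $\Lang[{\M[\phi]}]{\toconj{q}}$, unfold the conjunction into an intersection via~(\ref{eq:LABAq}), invoke Corollary~\ref{cor:FA} pointwise, and refold. The paper's proof is the same chain of equalities, only stated more tersely and without the explicit discussion of the $\top/\bot$ corner cases.
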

\begin{proof}
Let $\M=\M[\phi]$ and $\N=\AElim{\M}$.
Fix $q=\Pair{U}{V}\in Q_{\N}$. We show that $\Lang[\N]{q}=\Lang{\toconj{q}}$.
We have that
$\Lang[\N]{q}=\Lang[\M]{\toconj{q}}= \bigcap_{\psi\in U\cup V}\Lang[\M]{\psi} =
\bigcap_{\psi\in U\cup V}\Lang{\psi} = \Lang{\toconj{q}}$
where the first equality holds by Theorem~\ref{thm:AE} and
the third equality holds by Corollary~\ref{cor:FA}
because $U\cup V\subseteq Q_{\M}$.
\end{proof}

Many optimizations can be applied to states $\Pair{U}{V}$ in
$Q_{\AElim{\M[\phi]}}$ above in accordance with
Lemma~\ref{lma:state-reduction}.  For example, for any $\varphi$, if
$\varphi,\Globally(\varphi\land\_)\in V$ then $\varphi$ can be
eliminated from $V$: observe that trivially
$\deriv{\varphi\land\Globally(\varphi\land\_)}\feq\deriv{\Globally(\varphi\land\_)}$,
which means that the derivative (transition term) does not even need
to be computed in this case, in order to know that
Lemma~\ref{lma:state-reduction} can be applied.  In fact, this exact
situation appeared twice implicitly in
Example~\ref{ex:alt-elim-sample}.  Many similar rules can be used,
where the one illustrated here, is an instance of the subsumption rule
in \cite[$\leq$]{cavSomenziB00}.  Such rules would clearly be out of
reach if the LTL invariance property had been lost in translation.
A full analysis of which rules in \cite{cavSomenziB00} preserve
LTL invariance of {\AElimName} is beyond the scope of this work.

\section{LTL with Extended Regular Expressions Modulo $\A$}
\label{sec:ltlere}

We now show the power of symbol derivatives   
by using them to combine the following two languages into one:
\begin{itemize}
\item $\ERE[\A]$, extended regular expressions modulo $\A$, as given by \cite{StanfordVB21}, summarized
in Section~\ref{sec:revERE};
\item $\LTL[\A]$, as given in the previous section.
\end{itemize}
We first work with \emph{finite} words in $\Ds$, and later
lift the semantics to \emph{infinite} words in $\Do$ when 
we extend $\LTL[\A]$ with $\ERE[\A]$ in the combined language
$\RLTL[\A]$ (Section~\ref{sec:combLTLERE})

Our focus here is on the \emph{classical subset} of the extended regular expression 
operators supported in SPOT \cite{spotRef} and PSL \cite{PSL}.\footnote{Many other
operators, such as \emph{fusion}, \emph{non-length-matching
intersection} and \emph{bounded iteration} can also be supported.} 
The language $\RLTL[\A]$ has regex \emph{complement}, which is not in SPOT/PSL
but can be supported naturally with transition terms.

Finally, we lift $\omega$-regular languages to be modulo $\A$ and show that 
$\RLTL[\A]$ precisely captures $\omega$-regularity modulo $\A$ (Section~\ref{sec:omegaRLTL}).

\subsection{Extended Regular Expressions Modulo $\A$ and Their Derivatives}
\label{sec:revERE}

We fix $\A=(\D, \Psi, \den{\_}, \bot, \top, \orA, \andA, \notA)$.  
$\ERE[\A]$ or $\ERE$ for short is defined by the following abstract grammar, where $\alpha$ ranges
over $\PA$. We let $R$ range over $\ERE$, $R$ is called a \emph{regex}.
We write $\RE[\A]$ for the \emph{standard} fragment of $\ERE[\A]$ without $\inter$ or $\compl$.
\[
R \quad::=\quad \alpha \quad|\quad \eps
\quad|\quad  R_1 \union R_2
\quad|\quad R_1 \inter R_2
\quad|\quad R_1 \conc R_2
\quad|\quad R\st
\quad|\quad \compl R
\]
Here $\eps$ stands for the regex that accepts \emph{the empty word} $\epsilon\in\Ds$.
We reuse $\bot$ from $\PA$ for the regex that that
accepts \emph{nothing} (\emph{the empty language}).  The operators
appear in order of precedence, with
\emph{union} $(\union)$ having lowest and \emph{complement} $(\compl)$ having highest precedence.
The remaining operators are \emph{intersection} $(\inter)$,
\emph{concatenation} $(\conc)$,
and \emph{Kleene star} $(\st)$.
We also let $R\plus\eqdef R{\conc}R\st$.
We often abbreviate concatenation by juxtaposition.
Then $(\Ds,\ERE,\FLangName,\bot,\top\st,\union,\inter,\compl)$
is the associated effective Boolean algebra,
where, for $\alpha\in\PA$ and $R,R'\in\ERE$, the following additional laws hold:
\[
\begin{array}{l}
\FLang{\alpha}\eqdef\den{\alpha}\qquad \FLang{\eps}\eqdef\{\epsilon\}\qquad
\FLang{R\conc R'} \eqdef \FLang{R}{\cdot}\FLang{R'}\qquad
\FLang{R\st}\eqdef\FLang{R}^*
\end{array}
\]
Observe therefore that $\FLang{\compl\bot}=\FLang{\top\st}=\Ds$ while
$\FLang{\notA\bot}=\FLang{\top}=\D$.
The following operation, called \emph{nullability} of a regex $R$, is
defined recursively to decide if $\epsilon\in \FLang{R}$:\footnote{Nullability
can be maintained as a Boolean flag associated with the abstract syntax tree
node of every regex that is maintained at construction time of regexes and
therefore nullability test is a trivial constant time operation at runtime.}
\[
\begin{array}{@{}r@{\;}c@{\;}lr@{\;}c@{\;}lr@{\;}c@{\;}l@{}}
\Null{\eps}&\eqdef&\True&
\Null{R\st}&\eqdef&\True &
\Null{R \union R'}&\eqdef&\Null{R}\bor \Null{R'}\\
&&& \Null{\alpha}&\eqdef&\False & \Null{R \inter R'}&\eqdef&\Null{R}\band \Null{R'}\\
&&&\Null{\compl R}&\eqdef&\bnot\,\Null{R} &  \Null{R \conc R'}&\eqdef&\Null{R}\band \Null{R'}\\
\end{array}
\]
We now recall
the definition of \emph{transition regexes} from \cite{StanfordVB21}\footnote{Here
we use $\derName$ not to overload $\derivName$ and to avoid ambiguity because
$\der{\alpha}\neq\deriv{\alpha}$ for $\alpha\in\PA$.}
\begin{equation}
\label{eq:der}
\begin{array}{@{}r@{\;}c@{\;}l@{\;}r@{\;}c@{\;}ll@{}}
\der{\eps} &\eqdef& \bot & \der{\compl R} &\eqdef& \compl\der{R} &
\multirow{3}{*}{\mbox{$
\der{R{\conc}R'} \eqdef \ITEBrace{\Null{R}}{\der{R}{\conc}R'\,{\union}\,\der{R'}}{\der{R}{\conc}R'}
$}}\\
\der{\alpha} &\eqdef& \ifthen{\alpha}{\eps} & \der{R\,{\union}\,R'} &\eqdef& \der{R}\,{\union}\,\der{R'}\\
\der{R\st}& \eqdef& \der{R}{\conc}R\st\quad & \der{R\,{\inter}\,R'} &\eqdef& \der{R}\,{\inter}\,\der{R'}
\end{array}
\end{equation}
where $\union$ and $\inter$ are lifted to transition terms in $\TT{\ERE}$ by
(\ref{eq:binary}), and $\lambda x.x{\conc}r$ and $\compl$ are lifted
by (\ref{eq:unary}).  We use Lemma~\ref{lma:der} that follows
from \cite[Theorem~4.3 and Theorem~7.1]{StanfordVB21}.  In this
context functional equivalence $f\feq g$ based rewrite rules of
transition regexes is also crucial for termination of the resulting
DFA $\M[R]$ defined below.  In addition to \emph{associativity,
commutativity and idempotence} of $\union$ and $\inter$,
de Morgan's laws are used to propagate $\compl$, and concatenation is
an \emph{associative operator with $\eps$ as unit and $\bot$ as zero}.
Most of these rules are \emph{mandatory} for $\M[R]$ to
be \emph{finite} and several others are practically
very useful to further reduce the state space. E.g.,
$\compl\bot\feq \top\st$ and $\compl\eps\feq\top\plus$.
\begin{definition}[{$\M[R]$ for $R\in\ERE[\A]$}]
$\M[R]=(\A,Q,R,\tf,F)$ where
$Q\subset\ERE$ is finite, $R\in Q$ is the initial state, 
$\tf:Q\rightarrow\TT{Q}$ is the transition function such that
$\tf(q)\feq\der{q}$, and $F=\{q\in Q\mid\Null{q}\}$.
\end{definition}
Let $\M=(\A,Q,\Qi,\tf,F)$.
For $u\in\Ds$ and $q\in Q$, let $\tfs{u}{q}$ be the state reached from $q$ by $u$.
\[
\tfs{u}{q} \eqdef \IfThenElse{u=\epsilon}{q}{\tfs{\Rest{u}}{\leafof{\tf(q)}{\ith{u}{0}}}}
\]
The \emph{language of $\M$ for $q\in Q$} is then
$\FLang[\M]{q}\eqdef\{u\in\Ds\mid \tfs{u}{q}\in F\}$.
\begin{lma}
\label{lma:der}
For $\M=\M[R]$ as above,
$\forall q\in Q,a\in\D:
\FLang[{\M}]{q}=\FLang{q} \;\textrm{and}\;
\FLang{\leafof{\tf(q)}{a}}=\DERIV[a]{\FLang{q}}$.
\end{lma}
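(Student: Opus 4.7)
The plan is to prove the second equality first, since the first follows from it by a straightforward induction on word length. The second equality, $\FLang{\leafof{\tf(q)}{a}}=\DERIV[a]{\FLang{q}}$, carries the structural content: it says that evaluating the symbolic transition term at a concrete symbol realizes the classical Brzozowski derivative.

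For the second equality, I would proceed by structural induction on $q\in \ERE$, using the fact that $\tf(q)\feq\der{q}$ (so leaves agree pointwise) together with Lemma~\ref{lma:TT}, which propagates the leaf-extraction operation through the lifted binary and unary operations. The base cases are direct: $\der{\eps}=\bot$ matches $\DERIV[a]{\{\epsilon\}}=\emptyset$; and $\leafof{\der{\alpha}}{a}=\leafof{\ifthen{\alpha}{\eps}}{a}$ is $\eps$ when $a\vDash\alpha$ and $\bot$ otherwise, matching $\DERIV[a]{\den{\alpha}}$. The union, intersection, and complement cases combine Lemma~\ref{lma:TT} with the distributivity of $\DERIV[a]{\_}$ over $\cup$, $\cap$, and complement (stated in Section~\ref{sec:inf}). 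The Kleene star case uses $\DERIV[a]{L^*}=\DERIV[a]{L}\cdot L^*$ and the fact that $\FLang{\der{R}\conc R\st}=\FLang{\der{R}}\cdot\FLang{R}^*$, where the left side can be read off using Lemma~\ref{lma:TT} applied to the unary operation $\lambda x.\,x\conc R\st$.

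The trickiest inductive case is concatenation, because $\der{R\conc R'}$ splits on $\Null{R}$. Here I would first prove the auxiliary equivalence $\Null{R}\IFF \epsilon\in\FLang{R}$, by a routine structural induction on $R$. Then, writing $L=\FLang{R}$ and $L'=\FLang{R'}$, the identity $\DERIV[a]{L\cdot L'} = \DERIV[a]{L}\cdot L' \cup \bigl(\text{if }\epsilon\in L\text{ then }\DERIV[a]{L'}\text{ else }\emptyset\bigr)$ mirrors the two branches in the ITE definition of $\der{R\conc R'}$. Applying the induction hypothesis to $R$ and $R'$ and invoking Lemma~\ref{lma:TT} on the lifted operation yields the equality on leaves.

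For the first equality, $\FLang[\M]{q}=\FLang{q}$, I would induct on $|u|$ for $u\in\Ds$. For $u=\epsilon$, $\tfs{\epsilon}{q}=q$, and $q\in F$ iff $\Null{q}$ iff $\epsilon\in\FLang{q}$ by the auxiliary equivalence above. For $u=av$, $\tfs{av}{q}=\tfs{v}{\leafof{\tf(q)}{a}}$, so $u\in\FLang[\M]{q}$ iff $v\in\FLang[\M]{\leafof{\tf(q)}{a}}$, which by induction equals $\FLang{\leafof{\tf(q)}{a}}=\DERIV[a]{\FLang{q}}$ by the second equality, and this says precisely that $av\in\FLang{q}$. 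The main obstacle is the concatenation case, where care is needed to align the ITE-branching in $\der{R\conc R'}$ with the case split in $\DERIV[a]{L\cdot L'}$; all other cases are direct uses of Lemma~\ref{lma:TT} together with the compositional laws of derivatives, matching the treatment of \cite[Thm.~4.3 and Thm.~7.1]{StanfordVB21}.
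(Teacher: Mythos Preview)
Your proof sketch is correct, but note that the paper itself does not give a proof of this lemma: it simply states, just before the lemma, that it ``follows from \cite[Theorem~4.3 and Theorem~7.1]{StanfordVB21}.'' In other words, the paper treats Lemma~\ref{lma:der} as an imported result from prior work on transition regexes, whereas you have reconstructed the argument from scratch via the expected structural induction on regexes (for the derivative equality) followed by induction on word length (for the language equality). Your outline is essentially what the cited theorems establish, and you correctly identify the concatenation case as the only place requiring care, with Lemma~\ref{lma:TT} handling the lifting of operations to transition terms throughout. So: same content, but the paper delegates the proof to the reference while you spell it out.
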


A state $r$ is called \emph{alive} if $\FLang[\M]{r}\neq\emptyset$ else \emph{dead}.
Observe that for $\M[R]$, a state
$\tfs{u}{R}$ is alive iff $\exists
x:ux\in\FLang{R}$, i.e., the regex reached from $R$ via $u$ accepts
some $x\in\Ds$.

Let $\One{R}$ denote the condition in $\PA$ that corresponds to
$\FLang{R}\cap\D$. Formally, let $R,r\in\ERE$, 
\[
\begin{array}{rcl}
\One{R} &\eqdef& \One[\top]{\der{R}} \\
\One[\beta]{r} &\eqdef& \IfThenElse{\Null{r}}{\beta}{\bot} \\
\One[\beta]{\ite{\alpha}{f}{g}} &\eqdef& \One[\beta\andA\alpha]{f}\orA\One[\beta\andA\notA\alpha]{g}
\end{array}
\]
Moreover, normalize $\One{R}$ to $\bot$ if $\One{R}\equiv\bot$ and to $\top$ if
$\One{R}\equiv\top$.

\subsection{LTL Combined with Extended Regular Expressions Modulo $\A$}
\label{sec:combLTLERE}

Here we consider the following extension $\RLTL[\A]$ of $\LTL[\A]$ that combines
$\ERE$ with $\LTL$. Let $\phi$ range over $\RLTL$ and let $R$ range over $\ERE$.
Let $\alpha$ range over $\PA$.
\begin{eqnarray*}
\phi &::= &
\alpha \;\mid\;
\lnot\phi \;\mid\;
\phi\lor\phi' \;\mid\;
\phi\land\phi' \;\mid\;
\Next\phi \;\mid\;
\phi\Until\phi' \;\mid\;
\phi\Release\phi' \;\mid\;
\\
&& 
R\eimpl\phi \;\mid\;
R\uimpl\phi \;\mid\;
\wcl{R}\;\mid\;
\nwcl{R}\;\mid\;
\ocl{R}
\end{eqnarray*}
The operator
$\eimpl$ is \emph{existential suffix implication},
$\uimpl$ is \emph{universal suffix implication},
$\wcl{R}$ is the \emph{weak closure} of $R$,
$\nwcl{R}$ is the \emph{negated weak closure} of $R$,
and
$\ocl{R}$ is the \emph{$\omega$-closure} of $R$.
The semantics of $\phi\in\RLTL$ is
a conservative extension of $\LTL$ and the
semantics of the new formulas is consistent 
with the semantics in \cite[2.6.1]{spotRef}, except that if $R$ is \emph{nullable}
then in $\RLTL$ the formula $\wcl{R}$ (resp.~$\nwcl{R}$) is equivalent to $\top$
(resp.~$\bot$) and $\ocl{R}$ is not supported in SPOT.\footnote{In other words,
$\wcl{R}$ (resp.~$\nwcl{R}$) in SPOT corresponds in $\RLTL$ 
to $\wcl{R\inter\top\plus}$ (resp.~$\nwcl{R\inter\top\plus}$).}
Let $w\in\Do$.
\begin{eqnarray}
\label{eq:eimpl}
w\models R\eimpl\phi &\eqdef& \exists\, u,v:  w=uv \band u\ith{v}{0}\in\FLang{R} \band v\models\phi \\
\label{eq:uimpl}
w\models R\uimpl\phi &\eqdef& \forall\, u,v: \IfThen{w=uv\band u\ith{v}{0}\in\FLang{R}}{v\models\phi} \\
\label{eq:wcl}
w\models \wcl{R} &\eqdef&
(\exists\,u\prece w: u \in\FLang{R}) \bor
(\forall\,u\prec w:\exists\,x:ux\in\FLang{R}) \\
\label{eq:nwcl}
w\models \nwcl{R} &\eqdef& w\not\models \wcl{R} \\
w\models \ocl{R} &\eqdef& w \in \ocl{(\FLang{R}\setminus\{\epsilon\})}
\end{eqnarray}
Definitions (\ref{eq:M1})-(\ref{eq:L}) carry over to $\RLTL$.
For $\phi,\psi\in\RLTL$ let $\phi\equiv\psi \eqdef \Lang{\phi}=\Lang{\psi}$.
Observe that if $R$ is nullable in (\ref{eq:wcl}) then
$u=\epsilon\in\FLang{R}$ and thus $\wcl{R}\equiv\top$ and
$\nwcl{R}\equiv\bot$ by (\ref{eq:nwcl}).
It follows from the definitions that $R\uimpl\phi \equiv \lnot(R\eimpl\lnot\phi)$
and trivially that $\nwcl{R}\equiv\lnot\wcl{R}$. These formulas play a useful
role in the positive fragment of $\RLTL$ that we denote by $\RLTLp$.
\begin{definition}[$\RLTLp$] The \emph{positive fragment of $\RLTL$} is obtained by
using the rules in Definition~\ref{def:LTLp}, and for $R\in\ERE$ and $\varphi,\psi\in\RLTL$:
$\lnot(R\eimpl\varphi)\equiv R\uimpl\lnot\varphi$,
$\lnot(R\uimpl\varphi)\equiv R\eimpl\lnot\varphi$,
$\lnot\wcl{R}\equiv\nwcl{R}$, and
$\lnot\nwcl{R}\equiv\wcl{R}$.
Moreover, \emph{$\lnot\ocl{R}$ is not permitted in $\RLTLp$}.
\end{definition}
We now define transition terms for $\RLTL$ as members of $\TT{\RLTL}$.
Derivative rules
(\ref{eq:d1})--(\ref{eq:d7}) carry over to $\RLTL$ and we add the
following new rules:
\begin{eqnarray}
\label{eq:eimpl-deriv}
\deriv{R\eimpl\phi} &\eqdef& \deriv{\One{R}\land\phi} \lor (\der{R}\eimpl\phi) \\
\label{eq:uimpl-deriv}
\deriv{R\uimpl\phi} &\eqdef& \deriv{\One{R}\limplies\phi}\land(\der{R}\uimpl\phi) \\
\label{eq:wcl-deriv}
\deriv{\wcl{R}} &\eqdef& \IfThenElse{\Null{R}}{\top}{\wcl{\der{R}}}\\
\label{eq:nwcl-deriv}
\deriv{\nwcl{R}} &\eqdef& \IfThenElse{\Null{R}}{\bot}{\nwcl{\der{R}}}\\
\label{eq:ocl-deriv}
\deriv{\ocl{R}} &\eqdef& \deriv{R \eimpl \Next\ocl{R}}
\end{eqnarray}
where 
$r\,{\eimpl}\,\phi$,
$r\,{\uimpl}\,\phi$,
$\wcl{r}$, and
$\nwcl{r}$
are lifted to
$\TT{\RLTL}$ by (\ref{eq:unary}).
For $u\in\Ds$ and $\phi\in\RLTL$ let
\[
\derivs{u}{\phi}\eqdef \IfThenElse{u=\epsilon}{\phi}{\derivs{\Rest{u}}{\leafof{\deriv{\phi}}{\ith{u}{0}}}}.
\]
Rules (\ref{eq:eimpl-deriv})--(\ref{eq:nwcl-deriv}) seamlessly combine
the derivatives of $\LTL$ (\ref{eq:d1})--(\ref{eq:d7}) with
derivatives of $\ERE$ (\ref{eq:der}).  Correctness of the combined
rule-set is proved in Theorem~\ref{thm:ELTL}.  Note that if
$\One{R}=\bot$ then the lhs disjunct of (\ref{eq:eimpl-deriv}) reduces
to $\bot$ and the lhs conjunct of (\ref{eq:uimpl-deriv}) reduces to
$\top$.  Moreover, if $\der{R}=\bot$ or $\der{R}=\eps$ then the rhs
disjunct of (\ref{eq:eimpl-deriv}) reduces $\bot$ and the rhs conjunct
of (\ref{eq:uimpl-deriv}) reduces to $\top$, using \emph{new} rewrite
rules in $\RLTL$: for $r=\bot$ or $r=\eps$, we let
$r\eimpl\phi\feq\bot$ and $r\uimpl\phi\feq\top$. Furthermore,
$\wcl{\bot}\feq\bot$ and
$\nwcl{\bot}\feq\top$,
and if $r$ is nullable then $\wcl{r}\feq\top$
and $\nwcl{r}\feq\bot$.
Observe that all these rules are consistent with
the semantics of $\RLTL$. 

Let $\Acc{\RLTLp}$ denote the subset of all the following formulas
of $\RLTLp$: $\top$,  all $\Release$ formulas,
all $\uimpl$ formulas,
all $\wcl{R}$ formulas where $R$ is \emph{alive},
all $\nwcl{R}$ formulas where $R$ is \emph{dead},
and all $\ocl{R}$ formulas.
\begin{definition}[{{$\M[\phi]$ for $\phi\in\RLTLp[\A]$}}]
\label{def:M-RLTLp}
$\M[\phi]$ is the $\ABA$ $(\A,Q,\phi,\tf,Q\cap\Acc{\RLTLp})$
where $Q\subseteq\RLTLp$ is the least set of states
such that $\phi\in Q$ and if
$q\in Q$ then $\tf(q)\feq\deriv{q}$ and
if $p$ is a non-Boolean subformula of a leaf of $\tf(q)$ then $p\in Q$.
\end{definition}
$\M[\phi]$ is well-defined ($Q$ above is finite)
because all the $\M[R]$ are finite where $R\in\ERE[\A]$ by Lemma~\ref{lma:der}.
In contrast to $\LTL$ derivatives, where all the original subformulas are enough,
for $\phi\in\RLTL$, in $\M[\phi]$ we also get new formulas in suffix implications and closures, e.g.,
in every leaf $r\eimpl\psi$ of $\der{R}\eimpl\psi$, $r$ is some derivative of $R$ and
$r\eimpl\psi$ is typically not a subformula of the original formula $\phi$.

\begin{ex}
Consider the $\RLTL$ formula $\psi = \alpha\eimpl\phi$
with $\phi=\Globally(\Finally\alpha \land \Finally\notA\alpha)$.
Then $\der{\alpha}=\ifthen{\alpha}{\eps}$. Since $\One{\alpha}=\alpha$, by using (\ref{eq:eimpl-deriv}),
we get that, recall also the derivatives from Example~\ref{ex:GFaF!a},
\[
\begin{array}{l|l|l}
q & \deriv{q} & \ABA\;\M[\psi] \\ \hline
\psi & \deriv{\alpha\land\phi}\lor\ifthen{\alpha}{\eps\eimpl\phi}&
\multirow{5}{*}{
%usage \mytikz{above}{below}{left}{right}{tikzpicture}
\mbox{
\mytikz{-1.4em}{-1em}{-4.8cm}{-0.2cm}{
    \begin{tikzpicture}[buchi]
      \small
      \begin{scope}[on above layer]
        \coordinate (init) at (0,0);
        \coordinate (andX) at (2.5cm,0);
        \coordinate (center) at (3.3cm,0);
          \node[boolop,right=1cm of init] (and0) {\textrm{\scriptsize$\land$}};
          \node[boolop,above=0.4cm of andX] (and2) {\textrm{\scriptsize$\land$}};
          \node[boolop,below=0.4cm of andX] (and1) {\textrm{\scriptsize$\land$}};
          \node[boolop,above=0.4cm of andX] (and2) {\textrm{\scriptsize$\land$}};
          \node[boolop,above=0.4cm of andX] (and2) {\textrm{\scriptsize$\land$}};
          \node[state,right=0.2cm of init] (psi) {$\psi$};
          \node[state,accept,right=1.5cm of init] (q0) {$\phi$};
          \node[state,below=0.5cm of center] (q1) {$\Finally\alpha$};
          \node[state,above=0.5cm of center] (q2) {$\Finally\notA\alpha$};
          \node[state,accept,right=1cm of center] (top) {$\top$};   
      \end{scope}
      \draw[move] (init) -- (psi);
      \draw[move] (psi) -- node[above] {$\alpha$} (and0);
      \draw[move] (and0) -- (q0);
      \draw[macc] (and0) to[out=90,in=160] (q2);
      \draw[move] (q0) -- node[above] {$\notA\alpha$} (and1);
      \draw[move] (and1) -- (q1);
      \draw[macc] (and1) to[out=-90,in=-90] (q0);
      \draw[move] (q1) edge[loop right] node[right] {$\notA\alpha$} (q1);
      \draw[move] (q0) -- node[above] {$\alpha$} (and2);
      \draw[move] (and2) -- (q2);
      \draw[macc] (and2) to[out=90,in=90] (q0);
      \draw[move] (q2) edge[loop right] node[above] {$\alpha$} (q2);
      \draw[macc] (q2) -- node[left] {$\notA\alpha$} (top);
      \draw[macc] (q1) -- node[left] {$\alpha$} (top);
      \draw[move] (top) edge[loop above] node[above] {$\top$} (top);
    \end{tikzpicture}
}
}
}
\\
&
\feq \deriv{\alpha\land\phi}\lor\ifthen{\alpha}{\bot}
&
\\
&
\feq \ifthen{\alpha}{\top}\land\deriv{\phi}
&
\\
&
\feq
\ifthen{\alpha}{\ite{\alpha}{\Finally\notA\alpha\land\phi}{\Finally\alpha\land\phi}}
&
\\
&
\feq
\ifthen{\alpha}{\Finally\notA\alpha\land\phi}
&
\end{array}
\]
The resulting $\ABA$ requires alternation elimination for $\NBA$ conversion.
\end{ex}

\begin{ex}
Let $\alpha,\beta,\gamma\in\PA$ and consider
$\phi = (\alpha\beta)\plus\eimpl\Globally\gamma$. First we
construct the transition terms of $\M[(\alpha\beta)\plus]$
for all $r\in Q_{\M}$.
Note that $\eps$ (resp.~$\bot$) is the unit (resp.~zero) of concatenation.
For example, $\ifthen{\alpha}{\eps}{\conc}{\beta(\alpha\beta)\st}=
\ite{\alpha}{\eps\conc\beta(\alpha\beta)\st}{\bot\conc\beta(\alpha\beta)\st}\feq
\ite{\alpha}{\beta(\alpha\beta)\st}{\bot}=\ifthen{\alpha}{\beta(\alpha\beta)\st}$.
\[
\begin{array}{@{}l|l|l|l|l@{}}
r & \der{r} & \One{r} & \Null{r} & \DFA\;\M[(\alpha\beta)\plus] \\\hline
r_0=(\alpha\beta)\plus & \der{\alpha\beta(\alpha\beta)\st} =
\ifthen{\alpha}{\eps}{\conc}{\beta(\alpha\beta)\st} \feq
\ifthen{\alpha}{\beta(\alpha\beta)\st} &
\bot & \False &
\multirow{3}{*}{
%\framebox{
\textrm{
%\mytikz{above}{below}{left}{right}{tikzpicture}
\mytikz{-1em}{-.5em}{-1em}{-1em}{
    \begin{tikzpicture}[buchi]
      \small
      \begin{scope}[on above layer]
        \coordinate (init) at (0,0);
        \coordinate (center) at (1cm,0);
          \node[state,right=0.2cm of init] (r0) {$r_0$};
          \node[state,right=0cm of center] (r1) {$r_1$};
          \node[state,accept,below=0.2cm of r1] (r2) {$r_2$};
      \end{scope}
      \draw[move] (init) -- (r0);
      \draw[move] (r0) -- node[above] {$\alpha$} (r1);
      \draw[macc] (r2) to[out=180,in=200] node[left] {$\alpha$} (r1);
      \draw[macc] (r1) to[out=0,in=20] node[right] {$\beta$} (r2);
    \end{tikzpicture}
}
}
%}
}
\\
r_1 = \beta(\alpha\beta)\st & \ifthen{\beta}{(\alpha\beta)\st} &
\beta & \False &
\\
r_2 = (\alpha\beta)\st & \der{\alpha\beta}{\conc}(\alpha\beta)\st =
\ifthen{\alpha}{\beta}{\conc}(\alpha\beta)\st \feq
\ifthen{\alpha}{\beta(\alpha\beta)\st} &
\bot & \True &
\end{array}
\]
The transition terms that arise from $\phi$
give rise here to the $\ABA$ $\M[\phi]$ as shown below.
Observe that $\M[\phi]$ also happens to be an $\NBA$ because
conjunctions do not arise:
\[
\begin{array}{@{}l|l|l@{}}
q & \deriv{q} & \ABA\;(\NBA)\;\M[\phi] \\\hline
q_0 =  r_0\eimpl\Globally\gamma & \deriv{\bot\land\Globally\gamma}\lor
(\ifthen{\alpha}{r_1}\eimpl\Globally\gamma)
\feq \ifthen{\alpha}{r_1\eimpl\Globally\gamma}
&
\multirow{6}{*}{
%\framebox{
\textrm{
%\mytikz{above}{below}{left}{right}{tikzpicture}
\mytikz{-1.5em}{-.5em}{-0.5em}{-1em}{
    \begin{tikzpicture}[buchi]
      \small
      \begin{scope}[on above layer]
        \coordinate (init) at (0,0);
        \coordinate (center) at (1cm,0);
          \node[state,right=0.2cm of init] (q0) {$q_0$};
          \node[state,right=0.2cm of center] (q1) {$q_1$};
          \node[state,below=0.7cm of q1] (q2) {$q_2$};
          \node[state,accept,right=1.5cm of center] (q3) {$q_3$};   
      \end{scope}
      \draw[move] (init) -- (q0);
      \draw[move] (q0) -- node[above] {$\alpha$} (q1);
      \draw[move] (q3) edge[loop above] node[right] {$\gamma$} (q3);
      \draw[macc] (q1) -- node[above] {$\beta{\andA}\gamma$} (q3);
      \draw[macc] (q2) to[out=110,in=250] node[left] {$\alpha$} (q1);
      \draw[macc] (q1) to[out=-70,in=70] node[right] {$\beta$} (q2);
    \end{tikzpicture}
}
}
%}
}
\\
q_1 = r_1\eimpl\Globally\gamma &
\deriv{\beta\land\Globally\gamma}\lor
\ifthen{\beta}{r_2\eimpl\Globally\gamma} 
&  \\
&
\feq \ifthen{\beta{\andA}\gamma}{\Globally\gamma}
\lor
\ifthen{\beta}{r_2\eimpl\Globally\gamma} \\
& \feq \ite{\beta{\andA}\gamma}{\Globally\gamma\lor
(r_2\eimpl\Globally\gamma)}{
\ifthen{\beta}{r_2\eimpl\Globally\gamma}}
 \\
q_2 = r_2\eimpl\Globally\gamma & \ifthen{\alpha}{r_1\eimpl\Globally\gamma} & \\
q_3 = \Globally\gamma & \ifthen{\gamma}{\Globally\gamma} & 
\end{array}
\]
%Assume that all the predicates and $\beta{\andA}\gamma$, are satisfiable.
Note that the guard $\beta$ of the transition $\tr{q_1}{\beta}{q_2}$ 
is a simplification of $(\beta{\andA}\gamma){\orA}(\notA(\beta{\andA}\gamma){\andA}\beta)$.
The automaton is thus truly \emph{nondeterministic} because the outgoing transition guards from
$q_1$ overlap, in other words, the leaf
$\Globally\gamma\lor (r_2\eimpl\Globally\gamma) = q_3\lor q_2$ of $\deriv{q_1}$ is reachable
when $\beta{\andA}\gamma\nequiv\bot$.
\end{ex}

\begin{ex}
\label{ex:wcl}
Let $\alpha$ and $\beta$ be predicates such that $\den{\alpha}=\{a\}$
and $\den{\beta}=\{b\}$ where $a\neq b$.
Let $R={(\alpha{\conc}\top)\st{\conc}\beta}$.
The following table shows the
transitions that arise in the $\DBA$ $\M[\wcl{R}]$,
where $\UNSAT{\alpha{\andA}\beta}$ is used to clean $\deriv{\wcl{R}}$.
Observe that $\notA\alpha{\andA}\beta \equiv \beta$.
The table also shows $\M[\nwcl{R}]$ where
$\nwcl{\ifthen{\beta}{\eps}}=\ite{\beta}{\nwcl{\eps}}{\nwcl{\bot}}\feq
\ite{\beta}{\bot}{\top}\feq\ifthen{\notA\beta}{\top}$.
\[
\begin{array}{@{}l|l|l|l@{}}
q & \deriv{q} & \DBA\;\M[\wcl{R}] & \DBA\;\M[\nwcl{R}] \\\hline
\wcl{R} &
%\wcl{\ifthen{\alpha}{\top{\conc}R}\union
%\ifthen{\beta}{\eps}}
%\feq
\ite{\alpha}{\wcl{\top{\conc}R}}{\ifthen{\beta}{\wcl{\eps}}}
&
\multirow{3}{*}{
%\framebox{
\textrm{
%\mytikz{above}{below}{left}{right}{tikzpicture}
\mytikz{-2em}{-1em}{-1em}{-.5em}{
    \begin{tikzpicture}[buchi]
      \small
      \begin{scope}[on above layer]
        \coordinate (init) at (0,0);
        \coordinate (center) at (1cm,0);
          \node[state,accept,right=0.2cm of init] (q0) {$\wcl{R}$};
          \node[state,accept,right=1cm of center] (q1) {$\wcl{\top{\conc}R}$};
          \node[state,accept,below=0.2cm of q0] (q2) {$\top$};
       \end{scope}
      \draw[move] (init) -- (q0);
      \draw[macc] (q0) to[out=5,in=175] node[above] {$\lowerlabel[-.5em]{\alpha}$} (q1);
      \draw[macc] (q1) to[out=185,in=-10] node[below] {$\top$} (q0);
      \draw[macc] (q0) to[out=200,in=180] node[left] {$\notA\alpha{\andA}\beta$} (q2);
      \draw[move] (q2) edge[loop right] node[right] {$\top$} (q2);
    \end{tikzpicture}
}
}
%}
}
&
\multirow{3}{*}{
%\framebox{
\textrm{
%\mytikz{above}{below}{left}{right}{tikzpicture}
\mytikz{-2em}{-1em}{-1em}{-.5em}{
    \begin{tikzpicture}[buchi]
      \small
      \begin{scope}[on above layer]
        \coordinate (init) at (0,0);
        \coordinate (center) at (1cm,0);
          \node[state,right=0.2cm of init] (q0) {$\nwcl{R}$};
          \node[state,right=1cm of center] (q1) {$\nwcl{\top{\conc}R}$};
          \node[state,accept,below=0.2cm of q0] (q2) {$\top$};
       \end{scope}
      \draw[move] (init) -- (q0);
      \draw[macc] (q0) to[out=5,in=175] node[above] {$\lowerlabel[-.5em]{\alpha}$} (q1);
      \draw[macc] (q1) to[out=185,in=-10] node[below] {$\top$} (q0);
      \draw[macc] (q0) to[out=200,in=180] node[left] {$\notA\alpha{\andA}\notA\beta$} (q2);
      \draw[move] (q2) edge[loop right] node[right] {$\top$} (q2);
    \end{tikzpicture}
}
}
%}
}
\\
\wcl{\top{\conc}R} & \ifthen{\top}{\wcl{R}}\feq\wcl{R}
& &
\\
\top(\feq\wcl{\eps}) & \top & &
\end{array}
\]
Then $a^\omega\models\wcl{R}$ because $\forall u\prec a^\omega:\exists x:ux\in\FLang{R}$.
Note also that
$\Lang{\Globally(\notA\beta)\land\wcl{R}}=\Lang{\ocl{(\alpha\notA\beta)}}$
where $\alpha$ holds in all even positions -- a classical example
not expressible in LTL \cite{Wolper83}.
\end{ex}

\begin{ex}
\label{ex:ocl}
With $\alpha$ and $\beta$ as in Example~\ref{ex:wcl} let
$\phi=\ocl{(\alpha\beta)}$. We get the following
transition terms. Many rewrite rules are being used here.
%Recall also that $\ifthen{\_}{f}=\ite{\_}{f}{\bot}$.
\[
\begin{array}{@{}l|l|l|l@{}}
q & \deriv{q}=\deriv{r\eimpl\Next\phi} & \One{r} & \NBA\;(\DBA)\;\M[\phi] \\\hline
q_0=\phi &
\deriv{\alpha\beta\eimpl\Next\phi} =
\deriv{\bot\land\Next\phi}\lor(\der{\alpha\beta}\eimpl\Next\phi) & \bot &
\multirow{7}{*}{
%\framebox{
\textrm{
%\mytikz{above}{below}{left}{right}{tikzpicture}
\mytikz{-1.5em}{-.7em}{-1em}{-1em}{
    \begin{tikzpicture}[buchi]
      \small
      \begin{scope}[on above layer]
        \coordinate (init) at (0,0);
        \coordinate (center) at (1cm,0);
          \node[state,accept,right=0.2cm of init] (q0) {$q_0$};
          \node[state,right=1cm of center] (q1) {$q_1$};   
      \end{scope}
      \draw[move] (init) -- (q0);
      \draw[macc] (q1) to[out=190,in=-10] node[below] {$\beta$} (q0);
      \draw[macc] (q0) to[out=10,in=170] node[above] {$\alpha$} (q1);
    \end{tikzpicture}
}
}
%}
}
\\
&\feq (\ifthen{\alpha}{\beta}\eimpl\Next\phi) =
\ite{\alpha}{\beta\eimpl\Next\phi}{\bot\eimpl\Next\phi} & \\
&\feq  \ifthen{\alpha}{\beta\eimpl\Next\phi} &
\\
q_1=\beta\eimpl\Next\phi &
\deriv{\beta\land\Next\phi}\lor(\der{\beta}\eimpl\Next\phi) & \beta\\
&=\deriv{\beta}\land\deriv{\Next\phi}\lor(\ifthen{\beta}{\eps}\eimpl\Next\phi) & \\
&=(\ifthen{\beta}{\top}\land\phi)\lor\ite{\beta}{\eps\eimpl\Next\phi}{\bot\eimpl\Next\phi} & \\
&\feq \ifthen{\beta}{\phi} \lor \ite{\beta}{\bot}{\bot} \feq
\ifthen{\beta}{\phi} \lor \bot \feq \ifthen{\beta}{\phi} &
\end{array}
\]
Observe that $\ocl{(\alpha\beta)}$ is equivalent with
$\Globally(\alpha{\orA}\beta)\land\wcl{(\alpha\beta)\st{\conc}\notA(\alpha{\orA}\beta)}$
if $\notA(\alpha{\orA}\beta)$ is satisfiable.
\end{ex}

Theorem~\ref{thm:ELTL} describes the semantics of $\RLTL$
in terms of languages and derivatives. It
shows that the definition of derivatives correctly captures the
intended semantics. Observe that this theorem does not yet imply any formal
relationship with the corresponding $\ABA$ semantics that arises, that is discussed
in the next section.
For $L\subseteq\Ds$, let
\[
\MinSet{L} \eqdef \{u \in L\mid \not\exists x\prece u:x\in L\}
\]
i.e., $\MinSet{L}$ denotes the set of all words in $L$
that are \emph{minimal} in the sense that they have no proper prefixes in $L$.
In particular, $\Null{R}\IFF\MinSet{\FLang{R}}=\{\epsilon\}$.
Theorem~\ref{thm:ELTL} uses Lemma~\ref{lma:MinPrefix}.
\newcommand{\lemmaMinPrefixOne}{\textrm{\ref{lma:MinPrefix}(\ref{lma:MinPrefix:1})}}
\newcommand{\lemmaMinPrefixTwo}{\textrm{\ref{lma:MinPrefix}(\ref{lma:MinPrefix:2})}}
\newcommand{\lemmaMinPrefixThree}{\textrm{\ref{lma:MinPrefix}(\ref{lma:MinPrefix:3})}}
\begin{lma}
\label{lma:MinPrefix}
Let $R\in\ERE$, $w\in\Do$, and $a\in\D$. Then
\begin{enumerate}
\item 
\label{lma:MinPrefix:1}
${w \models\wcl{R} \IFF
(\exists u{\prece}w:u\in\MinSet{\FLang{R}}) \bor
(\forall u{\prec}w:\exists x:ux\in\FLang{R})}$
\item If $R$ is not nullable then
\label{lma:MinPrefix:2}
$\forall u\in\Ds: {au\in \MinSet{\FLang{R}} \IFF u \in \MinSet{\FLang{\leafof{\der{R}}{a}}}}$
\item If $R$ is not nullable then
\label{lma:MinPrefix:3}
$(\forall u\prec aw: \exists x:ux\in\FLang{R})
\IFF
(\forall u\prec w: \exists x:ux\in\FLang{\leafof{\der{R}}{a}})$
\end{enumerate}
\end{lma}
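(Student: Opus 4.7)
The plan is to prove each part by unfolding the definitions of $\wcl{\cdot}$, $\MinSet{\cdot}$, and $\prece$/$\prec$, then invoking Lemma~\ref{lma:der} to switch between $\FLang{R}$ and $\FLang{\leafof{\der{R}}{a}}=\DERIV[a]{\FLang{R}}$, and, where applicable, using that $R$ is not nullable to discard $\epsilon$ as a candidate witness in $\FLang{R}$.

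For part~(\ref{lma:MinPrefix:1}), the $(\Leftarrow)$ direction is immediate because $\MinSet{\FLang{R}}\subseteq\FLang{R}$, so the first disjunct on the right feeds the first disjunct of $w\models\wcl{R}$, while the second disjunct is identical on both sides. For $(\Rightarrow)$, if the second disjunct of $w\models\wcl{R}$ holds we are done; otherwise the first disjunct supplies some $u\prece w$ with $u\in\FLang{R}$, and I pick a shortest such $u$. A brief case split on whether $u=\epsilon$ or $u\prec w$ (using $w\in\Do$ infinite and transitivity of proper prefixes) shows that any $x\prece u$ also satisfies $x\prece w$, so minimality of $u$ prevents $x\in\FLang{R}$ and places $u$ in $\MinSet{\FLang{R}}$.

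For part~(\ref{lma:MinPrefix:2}), I first characterise $\{x\mid x\prece au\}$ as $\{\epsilon\}\cup\{av'\mid v'$ a (possibly empty) proper prefix of $u\}$. Non-nullability removes $\epsilon$ from $\FLang{R}$, and Lemma~\ref{lma:der} gives $av'\in\FLang{R}\IFF v'\in\FLang{\leafof{\der{R}}{a}}$. The possibly-empty proper prefixes of $u$ coincide with $\{v'\mid v'\prece u\}$ in both subcases: $u=\epsilon$ forces both sets to be empty; $u\neq\epsilon$ makes the two descriptions literally equal. Combined with $au\in\FLang{R}\IFF u\in\FLang{\leafof{\der{R}}{a}}$, this yields $au\in\MinSet{\FLang{R}}\IFF u\in\MinSet{\FLang{\leafof{\der{R}}{a}}}$.

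For part~(\ref{lma:MinPrefix:3}), since $w\in\Do$, every finite prefix of $w$ is proper, so $\{u\mid u\prec aw\}=\{av'\mid v'\in\Ds$ is a finite prefix of $w\}$, including $v'=\epsilon$, which contributes $u=a$. Stripping the leading $a$ via Lemma~\ref{lma:der} rewrites the LHS as: for every finite prefix $v'$ of $w$ (including $\epsilon$), $\exists x:v'x\in\FLang{\leafof{\der{R}}{a}}$, while the RHS restricts to \emph{nonempty} prefixes. One direction is trivial; the other uses that any witness $ux$ supplied by RHS for a nonempty $u\prec w$ (which exists since $w$ is infinite) also witnesses the $v'=\epsilon$ case, as $x\in\FLang{\leafof{\der{R}}{a}}$ is not required, only that the language be nonempty at the witnessed extension. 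Non-nullability is not strictly needed here, but aligns with its role in the derivative rule for $\wcl{R}$. The main obstacle throughout is keeping the $\prec$ versus $\prece$ distinction straight across the prefix bijection $u\mapsto au$, since the two relations differ exactly on $\epsilon$; once the degenerate $u=\epsilon$ and $v'=\epsilon$ cases are handled, the equivalences reduce to routine applications of Lemma~\ref{lma:der}.
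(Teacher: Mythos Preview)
Your proposal is correct and follows essentially the same line as the paper's own proof, which is much terser: it simply records the key equivalence $\exists u\prece w:u\in\FLang{R}\IFF\exists u\prece w:u\in\MinSet{\FLang{R}}$ for part~(\ref{lma:MinPrefix:1}), the derivative fact $au\in\FLang{R}\IFF u\in\FLang{\leafof{\der{R}}{a}}$ for part~(\ref{lma:MinPrefix:2}), and for part~(\ref{lma:MinPrefix:3}) notes that the extra $u=a$ case in the $\Leftarrow$ direction is covered because the right-hand side forces $\leafof{\der{R}}{a}$ to be \emph{alive}. Your handling of the $v'=\epsilon$ case in part~(\ref{lma:MinPrefix:3}) is exactly this ``alive'' argument spelled out, and your remark that non-nullability is not strictly needed there is accurate (the paper includes the hypothesis uniformly across (\ref{lma:MinPrefix:2}) and (\ref{lma:MinPrefix:3}) even though only (\ref{lma:MinPrefix:2}) uses it).
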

\begin{proof}
Using that $\exists\,u\prece w: u \in\FLang{R} \IFF 
\exists u\prece w: u\in\MinSet{\FLang{R}}$, and that
$au\in\FLang{R} \IFF u\in\FLang{\leafof{\der{R}}{a}}$.
In (\ref{lma:MinPrefix:3}) note that $ux\in\FLang{R}$ iff
$\Rest{u}x\in\FLang{\leafof{\der{R}}{\ith{u}{0}}}$ where $a=\ith{u}{0}$.
The case of $\exists x:ax\in\FLang{R}$
in the direction `$\Leftarrow$'
is implied in (\ref{lma:MinPrefix:3})
by $\leafof{\der{R}}{a}$ being alive.
\end{proof}
The following lemma is also used. While it is a well-known fact for
LTL here we show that it also holds in $\RLTL[\A]$ and is independent of $\D$.
\newcommand{\lemmaU}{\ref{lma:UR}(1)}
\newcommand{\lemmaR}{\ref{lma:UR}(2)}
\begin{lma}
\label{lma:UR}
$\forall \varphi,\psi\in\RLTL[\A]$:
1)
$\varphi\Until\psi \equiv \psi\lor (\varphi\land \Next(\varphi\Until\psi))$
and
2)
$\varphi\Release\psi \equiv \psi\land
(\varphi\lor \Next(\varphi\Release\psi))$
\end{lma}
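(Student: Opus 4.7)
The plan is to prove both equivalences directly by unfolding the semantic definitions (\ref{eq:M6}) and (\ref{eq:M7}), independently of $\A$ and of whether $\varphi,\psi$ contain regex operators; the $\RLTL$ extension does not alter the shape of the argument, since we only need to reason about when $w\models\chi$ for composite formulas built from $\Until,\Release,\Next,\lor,\land$ over arbitrary $\varphi,\psi$.

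For part (1), I would show $w\models\varphi\Until\psi \IFF w\models\psi\lor(\varphi\land\Next(\varphi\Until\psi))$ by case-splitting on the witness index $j$ in (\ref{eq:M6}). For the forward direction, a witness $j=0$ yields $w\models\psi$ directly, while a witness $j>0$ gives $\Rest[0]{w}\models\varphi$ and, by reindexing via $j'\eqdef j-1$, a witness for $\Rest{w}\models\varphi\Until\psi$, i.e.\ $w\models\Next(\varphi\Until\psi)$. For the reverse direction, if $w\models\psi$ take $j=0$; otherwise $w\models\varphi$ and $\Rest{w}\models\varphi\Until\psi$ with witness $j'$, and then $j\eqdef j'+1$ serves as the Until-witness for $w$, using that $\Rest[i]{w}=\Rest[i-1]{(\Rest{w})}$ for $i\geq 1$.

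Part (2) follows the same template applied to (\ref{eq:M7}), but now there are two disjuncts to handle: the ``globally'' disjunct $\forall j:\Rest[j]{w}\models\psi$, and the existential disjunct $\exists j:\Rest[j]{w}\models\varphi \band \forall i\leq j:\Rest[i]{w}\models\psi$. In both cases the $i=0$ instance gives $w\models\psi$, establishing the outer conjunct of the rhs. For the remaining conjunct, the globally disjunct for $w$ reindexes to the globally disjunct for $\Rest{w}$, yielding $w\models\Next(\varphi\Release\psi)$; the existential disjunct with $j=0$ yields $w\models\varphi$, and with $j>0$ reindexes to an existential disjunct for $\Rest{w}$. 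The reverse direction is symmetric, distinguishing whether $\Rest{w}$ satisfies the globally or the existential disjunct of $\varphi\Release\psi$ and then reconstructing the corresponding witness for $w$ using $w\models\psi$.

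There is no real obstacle: the argument is purely semantic arithmetic on suffix indices, and the only subtlety is being careful with the reindexing $\Rest[i]{(\Rest[k]{w})}=\Rest[i+k]{w}$ and with distinguishing $i=0$ from $i>0$ when the $\varphi$-condition must be verified across the ``split'' between $w$ and $\Rest{w}$. Because the equivalences are established semantically and the $\RLTL$-specific operators ($\eimpl,\uimpl,\wcl{\cdot},\nwcl{\cdot},\ocl{\cdot}$) occur only inside the atomic $\varphi$ and $\psi$, the proof is insensitive to them and to the underlying Boolean algebra $\A$.
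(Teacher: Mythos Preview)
Your proposal is correct and follows essentially the same approach as the paper: the paper proves (1) by case-splitting on the witness index $j$ in (\ref{eq:M6}), handling $j=0$ and $j>0$ via the same reindexing $j'=j-1$ (resp.\ $j=j'+1$) that you describe, and declares (2) analogous. Your sketch of (2) is more explicit than the paper's, but the underlying method is identical.
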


\begin{thm}[Derivation]
\label{thm:ELTL}
$\forall\phi\in\RLTL[\A],a\in\D,w\in\Do:aw\models\phi \IFF w\models\leafof{\deriv{\phi}}{a}$.
\end{thm}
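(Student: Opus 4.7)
The plan is to prove the statement by structural induction on $\phi$, relying systematically on Lemma~\ref{lma:TT} to commute $\leafof{\cdot}{a}$ with the Boolean and unary liftings that appear on the right-hand sides of the derivation rules (\ref{eq:d1})--(\ref{eq:d7}) and (\ref{eq:eimpl-deriv})--(\ref{eq:ocl-deriv}). For each constructor the schema is the same: unfold $\deriv{\phi}$, push $\leafof{\cdot}{a}$ down, apply the induction hypothesis on proper subformulas, and match with the semantics (\ref{eq:M1})--(\ref{eq:wcl}) of $\phi$ on $aw$.

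The base case $\phi = \alpha \in \PA$ is a one-line check from the definition of $\leafof{\cdot}{a}$. The Boolean cases ($\lnot$, $\land$, $\lor$) reduce to the identity $\leafof{(f\diamond g)}{a} = \leafof{f}{a}\diamond\leafof{g}{a}$ (Lemma~\ref{lma:TT}) plus IH; the $\Next\psi$ case is immediate since $\deriv{\Next\psi} = \psi$. For $\varphi\Until\psi$ and $\varphi\Release\psi$, I would first invoke Lemma~\ref{lma:UR} to get the unfolding equivalences $\varphi\Until\psi \equiv \psi\lor(\varphi\land\Next(\varphi\Until\psi))$ and $\varphi\Release\psi \equiv \psi\land(\varphi\lor\Next(\varphi\Release\psi))$, and then handle them as Boolean/$\Next$ combinations whose derivatives were already justified.

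For $R\eimpl\phi$, split the witness $aw = uv$ in the semantics (\ref{eq:eimpl}) into two subcases: $u = \epsilon$ (which forces $a\vDash\One{R}$ and $aw\models\phi$, matching the $\deriv{\One{R}\land\phi}$ summand via IH on $\phi$) and $u = au'$ (which, using Lemma~\ref{lma:der} in the form $au'\ith{v}{0}\in\FLang{R} \IFF u'\ith{v}{0}\in\FLang{\leafof{\der{R}}{a}}$, converts to $w\models\leafof{\der{R}}{a}\eimpl\phi$). The $R\uimpl\phi$ case is dual (equivalently derivable by negation and the $\eimpl$ case). For $\wcl{R}$ the key tool is Lemma~\ref{lma:MinPrefix}: part~(\lemmaMinPrefixOne) rewrites $w\models\wcl{R}$ in terms of $\MinSet{\FLang{R}}$, then parts~(\lemmaMinPrefixTwo) and~(\lemmaMinPrefixThree) transport both disjuncts through the consumption of $a$, producing $w\models\wcl{\leafof{\der{R}}{a}} = \leafof{\wcl{\der{R}}}{a}$; the nullable case is handled separately by the ITE branch in (\ref{eq:wcl-deriv}). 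The $\nwcl{R}$ case follows by negation.

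The main obstacle is $\ocl{R}$, because $\deriv{\ocl{R}} \eqdef \deriv{R\eimpl\Next\ocl{R}}$ does not reduce to a derivative of a proper subformula, so strict structural induction stalls. My plan is to avoid the IH here and argue directly in two steps. First, a purely semantic computation shows $aw\models\ocl{R} \IFF aw\models R\eimpl\Next\ocl{R}$: an $\omega$-factorization of $aw$ into nonempty $R$-pieces is precisely an existential suffix witness with $u_0\ith{v}{0}$ the first piece and $\Rest{v}$ the rest of the factorization. Second, repeat the $R\eimpl\phi$ case-analysis \emph{concretely} with $\phi = \Next\ocl{R}$, using only the tautology $aw\models\Next\ocl{R}\IFF w\models\ocl{R}$ (not IH) and the regex-derivative step from Lemma~\ref{lma:der}; this yields $aw\models\ocl{R} \IFF w\models(\ifthen{\One{R}}{\top}\land\ocl{R})\lor(\leafof{\der{R}}{a}\eimpl\Next\ocl{R})$, which by Lemma~\ref{lma:TT} is exactly $w\models\leafof{\deriv{\ocl{R}}}{a}$. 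This self-contained treatment is what makes the induction well-founded in the presence of the fixpoint-flavoured rule (\ref{eq:ocl-deriv}).
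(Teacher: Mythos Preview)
Your proposal is correct and follows essentially the same approach as the paper: structural induction over $\phi$, using Lemma~\ref{lma:TT} to push $\leafof{\cdot}{a}$ through lifted operations, Lemma~\ref{lma:UR} for the $\Until/\Release$ unfoldings, Lemma~\ref{lma:MinPrefix} for $\wcl{R}$, and a direct (non-inductive) semantic computation for $\ocl{R}$ that splits on whether the first $R$-factor has length one. The paper presents the $\ocl{R}$ case starting from the derivative side and unfolding back to the $\omega$-factorization, while you phrase it as first establishing $\ocl{R}\equiv R\eimpl\Next\ocl{R}$ semantically and then redoing the $\eimpl$ case analysis---but the underlying calculation is the same.
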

\begin{proof}
Fix $w\in\Do$ and $a\in\D$.
We prove the statement by 
by induction over $\phi$.

\noindent
\emph{Case $\alpha\in\PA$}:
$
aw\models\alpha \IFF a\in\den{\alpha} \IFF w\models\leafof{\ite{\alpha}{\top}{\bot}}{a}\IFF
w\models\leafof{\deriv{\alpha}}{a}
$.

\noindent
\emph{Case $\Next\varphi$}:
$
aw\models\Next\varphi \IFF
w\models\varphi \IFF w\models\leafof{\varphi}{a}
\IFF w\models\leafof{\deriv{\Next\varphi}}{a}
$.

\noindent
\emph{Case $\varphi\lor\psi$}:
\begin{myeq}
\begin{array}{@{}r@{\;}c@{\;}lc@{\;}l@{}}
aw\models\varphi \lor \psi &\IFF& aw\models\varphi \bor aw\models\psi \stackrel{\textrm{IH}}{\IFF} 
w\models \leafof{\deriv{\varphi}}{a}\lor\leafof{\deriv{\psi}}{a} \\
&\stackrel{\textrm{Lma}~\ref{lma:TT}}{\IFF}&
w\models \leafof{(\deriv{\varphi}\lor\deriv{\psi})}{a} 
\IFF
w \models \leafof{\deriv{\varphi\lor\psi}}{a}
\end{array}
\end{myeq}

\noindent
\emph{Case $\lnot\varphi$}:
\begin{myeq}
\begin{array}{@{}r@{\;}c@{\;}lc@{\;}l@{}}
aw\models\lnot\varphi &\IFF&
aw\not\models \varphi
\stackrel{\textrm{IH}}{\IFF}
w\not\models\leafof{\deriv{\varphi}}{a}
\IFF
w\models\lnot(\leafof{\deriv{\varphi}}{a}) \\
&\stackrel{\textrm{Lma}~\ref{lma:TT}}{\IFF}&
w\models\leafof{(\lnot\deriv{\varphi})}{a}
\IFF
w\models\leafof{\deriv{\lnot\varphi}}{a}
\end{array}
\end{myeq}

\noindent
\emph{Case $\phi=\varphi\Until\psi$}:
\begin{myeq}
\begin{array}{@{}r@{\;}c@{\;}l@{}}
aw\models\varphi\Until\psi &\stackrel{\textrm{Lma}~\ref{lma:UR}}{\IFF}&
aw\models\psi\lor (\varphi\land \Next(\varphi\Until\psi))
\IFF
aw\models\psi\bor (aw\models \varphi\band w\models\varphi\Until\psi)) \\
&\stackrel{\textrm{IH}}{\IFF}&
w\models\leafof{\deriv{\psi}}{a} \lor (\leafof{\deriv{\varphi}}{a} \land (\varphi\Until\psi))
\stackrel{\textrm{Lma}~\ref{lma:TT}}{\IFF}
w\models\leafof{(\deriv{\psi}\lor (\deriv{\varphi} \land (\varphi\Until\psi)))}{a} \\
&\IFF&
w\models\leafof{\deriv{\varphi\Until\psi}}{a}
\end{array}
\end{myeq}

\noindent
\emph{Case $R\eimpl\phi$}:
\[
\begin{array}{@{}r@{\;}c@{\;}l@{}}
aw\models R\eimpl\phi &\IFF& \exists u,v:aw=uv \band u\ith{v}{0}\in\FLang{R} \band v\models\phi \\
&\IFF& (a \in \FLang{R} \band  aw \models\phi) \bor 
\exists u,v: aw=auv \band au\ith{v}{0}\in\FLang{R} \band v\models\phi \\
&\IFF& (aw\models\One{R} \band aw \models\phi) \bor 
\exists u,v: w=uv \band u\ith{v}{0}\in\FLang{\leafof{\der{R}}{a}} \band v\models\phi \\
&\IFF& aw\models\One{R}\land\phi \bor w\models \leafof{\der{R}}{a}\eimpl\phi \\
&\stackrel{\textrm{IH}}{\IFF}&
w \models \leafof{\deriv{\One{R}\land\phi}}{a}  \bor w\models \leafof{\der{R}}{a}\eimpl\phi \\
&\IFF& w \models \leafof{\deriv{\One{R}\land\phi}}{a} \lor (\leafof{\der{R}}{a}\eimpl\phi) \\
&\stackrel{\textrm{Lma}~\ref{lma:TT}}{\IFF}& w \models \leafof{(\deriv{\One{R}\land\phi} \lor \der{R}\eimpl\phi)}{a} \\
&\IFF& w \models \leafof{\deriv{R\eimpl\phi}}{a}
\end{array}
\]

\noindent
\emph{Case $\wcl{R}$}: If $\Null{R}=\True$ then
\[
aw\models\wcl{R}\;\IFF\; aw\models\top\;\IFF\; w\models\leafof{\top}{a}
\;\IFF\; w\models\leafof{\deriv{\wcl{R}}}{a}.
\]
Let $\Null{R}=\False$ then
\[
\begin{array}{@{}r@{\;}c@{\;}l@{}}
aw\models \wcl{R}
&\stackrel{\textrm{Lma}~\lemmaMinPrefixOne}{\IFF}&
(\exists u\prece aw: u\in\MinSet{\FLang{R}}) \bor
(\forall u\prec aw: \exists x:ux\in\FLang{R})
\\
&\stackrel{\textrm{Lma}~\lemmaMinPrefixTwo,\lemmaMinPrefixThree}{\IFF}&
(\exists u\prece w: u\in\MinSet{\FLang{\leafof{\der{R}}{a}}})
\bor
(\forall u\prec w: \exists x:ux\in\FLang{\leafof{\der{R}}{a}})
\\
&\stackrel{\textrm{Lma}~\lemmaMinPrefixOne}{\IFF}& w\models \wcl{\leafof{\der{R}}{a}}
\stackrel{\textrm{Lma}~\ref{lma:TT}}{\IFF} w\models \leafof{\wcl{\der{R}}}{a}
\IFF w\models \leafof{\deriv{\wcl{R}}}{a}
\end{array}
\]

\noindent
\emph{Case $\ocl{R}$}: We use that
$a\in\FLang{R}\IFF \epsilon\in\FLang{\leafof{\der{R}}{a}}$. Recall
also that, for all $v\in\Do$, $v\models\Next\varphi \IFF \Rest{v}\models\varphi$.
In this case it is easier to start with $w\models \leafof{\deriv{\ocl{R}}}{a}$.
\[
\begin{array}{@{}r@{\;}c@{\;}l@{}}
w\models \leafof{\deriv{\ocl{R}}}{a} &\IFF&
w\models \leafof{\deriv{R\eimpl \Next\ocl{R}}}{a} \\
&\IFF& w\models\leafof{(\deriv{\One{R}\land\Next\ocl{R}}\lor\der{R}\eimpl\Next\ocl{R})}{a} \\
&\IFF& w\models\leafof{(\deriv{\One{R}}\land\ocl{R}\lor\der{R}\eimpl\Next\ocl{R})}{a} \\
&\IFF& w\models\leafof{(\ifthen{\One{R}}{\ocl{R}}\lor\der{R}\eimpl\Next\ocl{R})}{a} \\
&\stackrel{\textrm{Lma}~\ref{lma:TT}}{\IFF}&
a\in \FLang{R} \band w\models\ocl{R} \bor w\models \leafof{\der{R}}{a}\eimpl\Next\ocl{R} \\
&\IFF&
a\in \FLang{R} \band w\models\ocl{R} \bor
\exists u,v: w=uv\band u\ith{v}{0}\in\FLang{\leafof{\der{R}}{a}} \band v\models\Next\ocl{R} \\
&\IFF&
a\in \FLang{R} \band w\models\ocl{R} \bor
\exists u\neq\epsilon,v: w=uv\band u\in\FLang{\leafof{\der{R}}{a}} \band v\models\ocl{R} \\
&\IFF&
\exists u=\epsilon,v: w=uv\band u\in\FLang{\leafof{\der{R}}{a}} \band v\models\ocl{R} \bor \\
&& \exists u\neq\epsilon,v: w=uv\band u\in\FLang{\leafof{\der{R}}{a}} \band v\models\ocl{R} \\
&\IFF&
\exists u,v: w=uv\band u\in\FLang{\leafof{\der{R}}{a}} \band v\models\ocl{R}\\
&\IFF&
\exists u,v: w=uv\band au\in\FLang{R} \band v\in \ocl{(\FLang{R}\setminus\{\epsilon\})} \\
&\IFF&
aw \in (\FLang{R}\setminus\{\epsilon\}){\conc}\ocl{(\FLang{R}\setminus\{\epsilon\})}
\IFF
aw \in \ocl{(\FLang{R}\setminus\{\epsilon\})}
\IFF
aw \models \ocl{R}
\end{array}
\]
The other cases follow by de Morgan's laws and laws of duality in $\RLTL$.
The theorem follows by the induction principle.
\end{proof}

\subsection{$\omega$-Regularity Modulo Theories}
\label{sec:omegaRLTL}

Here we lift the classical concept of $\omega$-regular
languages \cite{Buchi60,McNaughton66} as the languages accepted by $\ABA$ so as to
be modulo $\A$, and show that $\RLTL[\A]$ captures $\omega$-regularity
modulo $\A$. 
We say that $L\subseteq\Do$ is \emph{$\omega_{\A}$-regular}
if $L=\Lang{\M}$ for some $\M$ that is an $\NBA[\A]$. We lift the
following result from \cite{Buchi60,McNaughton66} (see also
\cite[1.1.~Theorem]{ThomasAutomataOnInfiniteObjects})
as to be modulo $\A$.
Recall that $\Lang{\ocl{R}} = \ocl{(\FLang{R}{\setminus}\{\epsilon\})}$.
\begin{thm}[$\omega_{\A}$-regularity]
\label{thm:omega}
A language $L\subseteq\Do$ is $\omega_{\A}$-regular $\IFF$ there exist
$\{R_i\}_{i=1}^n$ and $\{S_i\}_{i=1}^n$ in $\RE[\A]$ such that
$L= \bigcup_{i=1}^n\FLang{R_i}{\conc}\Lang{\ocl{S_i}}$.
\end{thm}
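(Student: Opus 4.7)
The plan is to lift the classical Büchi/McNaughton decomposition theorem to the symbolic setting, using the alternation elimination algorithm from Section~\ref{sec:alt-elim} together with Kleene's theorem for symbolic finite automata modulo $\A$ (which is a known result, essentially from \cite{DV21}, saying that any symbolic finite automaton over $\A$ has its language expressible by a regex in $\RE[\A]$, and vice versa).

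\textbf{Direction ($\Leftarrow$).} I would show that every language of the form $\FLang{R} \cdot \Lang{\ocl{S}}$ with $R, S \in \RE[\A]$ is $\omega_{\A}$-regular, and then use closure of $\NBA[\A]$ under union (which is immediate: take the disjoint union of state sets and combine initial state combinations via $\lor$). Given $R \in \RE[\A]$, Kleene's theorem modulo $\A$ yields a symbolic finite automaton recognizing $\FLang{R}$; trivially we turn it into an $\NBA[\A]$ accepting $\FLang{R} \cdot \Do$ by adding $\top$-self-loops at each accepting state and marking $\top$ as Büchi-accepting. For $\ocl{S}$, I invoke Definition~\ref{def:M-RLTLp} to build the $\ABA[\A]$ $\M[\ocl{S}]$ (with $\ocl{S} \in \Acc{\RLTLp}$ so acceptance is well-defined), then apply $\AElimName$ (Theorem~\ref{thm:AE}) to obtain an equivalent $\NBA[\A]$. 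For the product $\FLang{R} \cdot \Lang{\ocl{S}}$ I would use the standard concatenation construction: nondeterministically at any accepting state of the prefix automaton, jump into the initial state combination of the $\ocl{S}$-automaton (using $\lor$ in the initial combination if needed). The construction stays in $\NBA[\A]$ because no conjunctions are introduced.

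\textbf{Direction ($\Rightarrow$).} Let $\M = (\A, Q, \Qi, \tf, F)$ be an $\NBA[\A]$. Viewing $\M$ purely as a symbolic \emph{finite} automaton, for each pair $p, p' \in Q$ let $R_{p,p'} \in \RE[\A]$ be a regex denoting the set of \emph{nonempty} finite words labelling a path from $p$ to $p'$ in $\M$; existence follows from Kleene's theorem lifted to symbolic finite automata modulo $\A$. Similarly, $R^0_{p} \in \RE[\A]$ denotes the nonempty finite words reaching $p$ from the initial state combination $\Qi$ (treating the states in $\Qi$ as alternative starting states), and we include the empty word if $p \in \Qi$ using $R^0_p \union \eps$. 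Then the standard argument for classical B\"uchi automata carries over verbatim: a word $w \in \Do$ is accepted by $\M$ iff there exists an infinite run that visits some $p \in F$ infinitely often, and such a run decomposes (at the first and all subsequent visits to some fixed $p$) as a finite prefix from $\Qi$ to $p$ followed by infinitely many nonempty finite cycles from $p$ back to $p$. Hence
\[
\Lang{\M} \;=\; \bigcup_{p \in F}\; \FLang{R^0_p} \cdot \Lang{\ocl{R_{p,p}}},
\]
which is the required form (enumerate pairs $(R^0_p, R_{p,p})$ for $p \in F$ as $(R_i, S_i)$).

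\textbf{Main obstacle.} The technical heart is invoking Kleene's theorem modulo $\A$ to extract the regexes $R_{p,p'}$ and $R^0_p$ from the transition terms of $\M$: the transitions of $\M$ are labelled not by single symbols but by predicates $\alpha \in \PA$ arising as branch conditions in the ITE structure of $\tf$, and the extracted regex must use exactly those predicates as its atoms. This is conceptually the same as the classical construction (state elimination / Arden's lemma) but carried out over the symbolic alphabet, as in \cite{DV21}; the only subtlety is correctly accumulating the ITE branch conditions along each transition to yield the predicate labels of the induced symbolic finite automaton, after which the classical algorithm applies unchanged. The other subtlety worth flagging is the non-nullability condition baked into $\ocl{R}$: since $\Lang{\ocl{R}} = \ocl{(\FLang{R} \setminus \{\epsilon\})}$, I rely on $R_{p,p}$ describing only \emph{nonempty} cycles, which is automatic from the decomposition.
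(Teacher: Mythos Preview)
Your argument is correct but takes a different route from the paper. The paper's proof is a one-line reduction via mintermization: use Lemma~\ref{lma:mt} to pass between $\M\in\NBA[\A]$ and the classical $\mt{\M}$ over the finite alphabet $\Sigma=\Minterms{\CondOf{\M}}$, invoke the classical B\"uchi/McNaughton theorem on $\mt{\M}$, and lift the resulting classical regexes over $\Sigma$ back to $\RE[\A]$ by treating each minterm symbol as its predicate in $\PA$ (and conversely for the other direction). You instead stay entirely in the symbolic world, invoking Kleene's theorem modulo $\A$ to extract $R_{p,p'}\in\RE[\A]$ directly from the ITE structure of $\tf$. Your route is more constructive and avoids the up-front exponential cost of computing $\Sigma$, which is in keeping with the paper's broader message; the paper's route is shorter because it piggybacks on the classical result wholesale.

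One caution on your ($\Leftarrow$) direction: obtaining an $\NBA[\A]$ for $\ocl{S}$ via Definition~\ref{def:M-RLTLp} and $\AElimName$ implicitly appeals to $\Lang{\M[\ocl{S}]}=\Lang{\ocl{S}}$, i.e.\ Theorem~\ref{thm:ELTL-omega}(2), which is stated \emph{after} Theorem~\ref{thm:omega}. There is no actual circularity (the $\ocl{R}$ base case of that later proof does not use Theorem~\ref{thm:omega}), but the forward reference is awkward. A cleaner self-contained fix is to build the $\NBA[\A]$ for $\ocl{S}$ directly from a symbolic NFA for $S$ by the standard loop-back construction (nondeterministically redirect transitions entering accepting states back to the initial state and make that state B\"uchi-accepting); this needs only the NFA half of Kleene modulo $\A$ and nothing from Section~\ref{sec:ltlere}. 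Your aside about ``adding $\top$-self-loops at each accepting state'' to obtain an $\NBA[\A]$ for $\FLang{R}\cdot\Do$ is also slightly garbled as written, though you do not actually rely on that automaton in the final concatenation step.
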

\begin{proof}
By using Lemma~\ref{lma:mt} and the Theorem in \cite{McNaughton66}.
\end{proof}

Next we show that $\RLTLp[\A]$ captures $\omega_{\A}$-regularity.
We use the following \emph{stepping} lemma of the transition function
of any $\ABA$, the proof of which is based on the definition of accepting runs.
The lifting of $\tfs{u}{\phi}$ below for $\phi\in\BCp{Q}$ 
lets $\leafof{\tf(\varphi\diamond\psi)}{a}\eqdef \leafof{\tf(\varphi)}{a}\diamond\leafof{\tf(\psi)}{a}$.

\begin{lma}
\label{lma:tf}
$\forall \M\,{\in}\,\ABA[\A],
\phi\,{\in}\,\BCp{Q_{\M}},
u\,{\in}\,\Ds,w\,{\in}\,\Do:uw\,{\in}\,\Lang[\M]{\phi} \IFF w\,{\in}\,\Lang[\M]{\tfs[\M]{u}{\phi}}$.
\end{lma}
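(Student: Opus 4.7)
The plan is to proceed by induction on $|u|$. The base case $u=\epsilon$ is immediate, since $\tfs[\M]{\epsilon}{\phi}=\phi$ by the lifted definition of $\hat{\tf}$, so both sides of the biconditional reduce to $w \in \Lang[\M]{\phi}$. For the inductive step, write $u=av$ with $a\in\D$ and $v\in\Ds$. The lifted definition of $\hat{\tf}$ -- where $\leafof{\tf(\cdot)}{a}$ is propagated through Boolean operations using Lemma~\ref{lma:TT} -- gives $\tfs[\M]{av}{\phi}=\tfs[\M]{v}{\leafof{\tf(\phi)}{a}}$, with $\leafof{\tf(\phi)}{a}\in\BCp{Q_{\M}}$. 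Applying the induction hypothesis at $v$, $w$, and state combination $\leafof{\tf(\phi)}{a}$ then reduces the claim to the following one-step property:
\[
avw \in \Lang[\M]{\phi}\;\IFF\; vw \in \Lang[\M]{\leafof{\tf(\phi)}{a}}.
\]

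The one-step property is proved by a secondary, structural induction on $\phi\in\BCp{Q_{\M}}$. The edge cases $\phi=\top$ and $\phi=\bot$ are immediate from $\tf(\top)=\top$, $\tf(\bot)=\bot$, $\Lang[\M]{\top}=\Do$, and $\Lang[\M]{\bot}=\emptyset$. For $\phi=q\in Q_{\M}$, I unwind the run definition: an accepting run of $avw$ from $q$ has root labelled $q$ and chooses some $X\in\DNF{\leafof{\tf(q)}{a}}$ (equivalently, a minimal satisfier per (\ref{eq:MinSat})) so that each immediate subtree indexed by $p\in X$ is an accepting run of $vw$ from $p$, and conversely any such family of subtrees glues into an accepting run of $avw$ from $q$. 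By the lifting of $\Lang[\M]{\_}$ from $Q_{\M}$ to $\BCp{Q_{\M}}$ in (\ref{eq:LABAq}), applied to the DNF $\bigvee_X\bigwedge_{p\in X}p$ of $\leafof{\tf(q)}{a}$, this is exactly $vw\in\Lang[\M]{\leafof{\tf(q)}{a}}$. For the Boolean cases $\phi=\phi_1\diamond\phi_2$ with $\diamond\in\{\lor,\land\}$, the lifting identity $\leafof{\tf(\phi_1\diamond\phi_2)}{a}=\leafof{\tf(\phi_1)}{a}\diamond\leafof{\tf(\phi_2)}{a}$ from Lemma~\ref{lma:TT} combines with (\ref{eq:LABAq}) and the inner inductive hypothesis to close the case.

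The only mildly delicate point is reconciling the two views of $\Lang[\M]{\phi}$ for $\phi\in\BCp{Q_{\M}}$: the run-theoretic view, where a single satisfying assignment of $\leafof{\tf(q)}{a}$ witnesses acceptance at each step, and the compositional view (\ref{eq:LABAq}) that distributes $\Lang[\M]{\_}$ through $\lor$ and $\land$. The paper's restriction of satisfying assignments to elements of $\DNF{\cdot}$ bridges the two, since the DNF representative of $\leafof{\tf(q)}{a}$ captures exactly the disjunction-of-conjunctions required by (\ref{eq:LABAq}). With this identification in place, the potentially infinite-tree structure of runs plays no role beyond the standard local argument that accepting-ness at the root decomposes into accepting-ness at each immediate subtree.
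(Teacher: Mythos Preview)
Your proposal is correct and follows essentially the same approach as the paper: induction on $|u|$ with a trivial base case, reduction of the inductive step to a one-step property, proof of that property for atomic states $q\in Q_{\M}$ by unwinding the definition of accepting runs in both directions, and lifting to $\BCp{Q_{\M}}$ via the distributivity of $\leafof{\tf(\cdot)}{a}$ over Boolean connectives (Lemma~\ref{lma:TT}) together with~(\ref{eq:LABAq}). The paper presents the lifting more tersely as a single sentence rather than as an explicit structural induction on $\phi$, but the content is the same.
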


Recall the construction of the $\ABA[\A]$ $\M[\phi]$ in
Definition~\ref{def:M-RLTLp}.
Theorem~\ref{thm:ELTL-omega}(1)`$\Rightarrow$' is proved by using
Theorem~\ref{thm:omega}. Theorem~\ref{thm:ELTL-omega}(2) (from
which also (1)`$\Leftarrow$' follows) is proved by extending the
inductive proof in \cite{Vardi95LTL} to handle suffix implications
and also makes use of Lemma~\ref{lma:tf},
while closure formulas are additional base cases that are essentially
based on derivatives of $\ERE$.

\begin{thm}[$\omega_{\A}$-Regularity of {$\RLTLp[\A]$}]
\label{thm:ELTL-omega}\ 
\begin{enumerate}
\item $L\subseteq\Do$ is $\omega_{\A}$-regular $\IFF$ $\exists \phi\in\RLTLp[\A]:L = \Lang{\phi}$
\item $\forall\phi\in\RLTLp[\A]:\Lang{\phi} = \Lang{\M[\phi]}$
\end{enumerate}
\end{thm}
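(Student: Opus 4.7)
The plan is to prove part (2) first and then derive part (1) from it, Theorem~\ref{thm:omega} (the McNaughton decomposition modulo $\A$), and alternation elimination (Theorem~\ref{thm:AE}).

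For part (2), I would proceed by structural induction on $\phi\in\RLTLp[\A]$, establishing both $\Lang{\phi}\subseteq\Lang{\M[\phi]}$ and $\Lang{\M[\phi]}\subseteq\Lang{\phi}$. The $\LTL$-only cases ($\alpha\in\PA$, $\land$, $\lor$, $\Next$, $\Until$, $\Release$) are essentially inherited from Theorem~\ref{thm:FA} together with Corollary~\ref{cor:FA}, because Definition~\ref{def:M-RLTLp} agrees with the $\LTL$ construction on these subformulas. For the genuinely new operators ($R\eimpl\psi$, $R\uimpl\psi$, $\wcl{R}$, $\nwcl{R}$, $\ocl{R}$), the backbone of the argument is to combine Theorem~\ref{thm:ELTL} (semantic correctness of $\deriv{\_}$) with Lemma~\ref{lma:tf} (stepping in an $\ABA$), so that one transition of $\M[\phi]$ mirrors one advance of $w$ in the semantics of $\phi$. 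The forward inclusion is then shown by top-down construction of an accepting run tree: at a node labelled $q$ reading $\ith{w}{i}$, the induction hypothesis justifies choosing a minimal satisfying assignment $X\in\DNF{\leafof{\tf(q)}{\ith{w}{i}}}$ and spawning children for every element of $X$; the reverse inclusion extracts a semantic witness for $\phi$ from an accepting run tree. The four new cases split cleanly: for $R\eimpl\psi$ the invariant is that a state $r\eimpl\psi$ (with $r$ a residual of $R$) denotes exactly those words possessing a nonempty prefix extending to an $r$-match whose remainder satisfies $\psi$, justified by (\ref{eq:eimpl-deriv}) and Lemma~\ref{lma:der}; $R\uimpl\psi$ is dual, with its membership in $\Acc{\RLTLp}$ absorbing the no-match case. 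For the closures I would reformulate (\ref{eq:wcl})--(\ref{eq:nwcl}) via Lemma~\ref{lma:MinPrefix} so as to expose the two situations captured by $\Acc{\RLTLp}$: a run either escapes through the nullability-based $\top$/$\bot$ leaves in (\ref{eq:wcl-deriv})--(\ref{eq:nwcl-deriv}), or stays forever in closure states that are accepting precisely when the underlying regex remains alive (for $\wcl{}$) or dead (for $\nwcl{}$). The $\ocl{R}$ case uses the unfolding (\ref{eq:ocl-deriv}) and the fact that an accepting run must revisit the $\ocl{R}$ state infinitely often, which produces the required decomposition of $w$ into nonempty $R$-matches.

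For part (1), the direction $(\Leftarrow)$ is a corollary of (2): given $L=\Lang{\phi}$ with $\phi\in\RLTLp[\A]$ we have $L=\Lang{\M[\phi]}$ for $\M[\phi]\in\ABA[\A]$, and Theorem~\ref{thm:AE} yields an equivalent $\NBA[\A]$, hence $L$ is $\omega_{\A}$-regular. For $(\Rightarrow)$ I would apply Theorem~\ref{thm:omega} to write $L=\bigcup_{i=1}^n\FLang{R_i}\conc\Lang{\ocl{S_i}}$ with $R_i,S_i\in\RE[\A]$, and encode each summand as $\phi_i \eqdef R_i\eimpl\Next\ocl{S_i}$, adding an extra disjunct $\ocl{S_i}$ whenever $R_i$ is nullable so as to cover the empty-prefix case (with $u=\epsilon$ in the decomposition $w=uv$). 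Unfolding (\ref{eq:eimpl}) confirms that $\Lang{\phi_i}=\FLang{R_i}\conc\Lang{\ocl{S_i}}$, whence $L=\Lang{\bigvee_{i=1}^n\phi_i}$ and $\bigvee_{i=1}^n\phi_i\in\RLTLp[\A]$.

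The main obstacle I anticipate is the acceptance-condition analysis for the closure operators in part (2): one must track how the alive/dead status of the underlying regex evolves along a derivation chain (especially after cleaning modulo $\A$) and ensure that membership in $\Acc{\RLTLp}$ aligns with the correct semantic alternative of Lemma~\ref{lma:MinPrefix} at every step. A secondary challenge is that Definition~\ref{def:M-RLTLp} introduces fresh subformulas of the form $r\eimpl\psi$ that are not strict subformulas of $\phi$, so the induction hypothesis for $\eimpl$/$\uimpl$ has to be phrased along derivatives of $R$ rather than the usual subformula ordering; this is ultimately underwritten by the finiteness of $\M[R]$ guaranteed by Lemma~\ref{lma:der}.
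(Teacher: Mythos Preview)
Your proposal matches the paper's proof closely: part (1)`$\Rightarrow$' via Theorem~\ref{thm:omega} with the encoding $R_i\eimpl\Next\ocl{S_i}$ (plus the nullability disjunct $\ocl{S_i}$), part (2) by induction extending Vardi's argument using Lemma~\ref{lma:tf} as the stepping tool, and part (1)`$\Leftarrow$' as a corollary of (2). One recalibration is worth making: the paper treats $\wcl{R}$, $\nwcl{R}$, $\ocl{R}$ as straightforward \emph{base cases} (they yield $\DBA$s/$\NBA$s directly by the fixpoint of $\derName$-derivatives, with the alive/dead acceptance sets read off from $\Acc{\RLTLp}$), so the obstacle you flag there is smaller than you expect; the real acceptance-condition work sits in the $\eimpl$ case, where the crucial observation --- dual to the one you state for $\uimpl$ --- is that every state $r\eimpl\psi$ is \emph{rejecting}, hence any accepting run visits such states only finitely often and must eventually take the $\deriv{\One{r}\land\psi}$ disjunct, which is exactly what produces the finite prefix witnessing (\ref{eq:eimpl}).
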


Although derivatives of $\ocl{R}$ work correctly in
Theorem~\ref{thm:ELTL}, and therefore also when considering
$\lnot\ocl{R}$ because the semantics is based on $\RLTL$,
\emph{the complemented derivatives arising from
$\lnot\deriv{\ocl{R}}$ would in general have incorrect
semantics as transitions of an $\ABA$}.

Moreover, sometimes $\omega$-closure $\ocl{R}$ can be
avoided through \emph{anchoring}: Let $\anchorpred$ be a new predicate called
an \emph{anchor}, not present in $\PA$, in the extended algebra $\A'$ with
$\D_{\A'}=\D\cup\{\anchorsymb\}$\footnote{If 
$\A$ is an SMT solver then $\anchorpred$ can be a fresh uninterpreted 
proposition, in which case $\anchorpred\mapsto\False$ when considering
any predicate $\alpha\in\PA$ and $\anchorpred\mapsto\True$ otherwise.}
such
that $\den[\A']{\anchorpred}=\{\anchorsymb\}$ and thus
$\den[\A']{\notA\anchorpred}=\D$.

\begin{ex}
\label{ex:aa}
Let $\alpha\in\PA$ be such that $\den{\alpha}=\{a\}$ and let
$\phi=\ocl{(\alpha{\conc}\alpha)}$. Then the following transition terms arise
similarly to $\phi$ in Example~\ref{ex:ocl} with $\phi$ as both the initial and the accepting state:
\[
\begin{array}{@{}l|l|l|l@{}}
q & \deriv{q}=\deriv{r\eimpl\Next\phi} & \One{r} & \DBA\;\M[\ocl{(\alpha{\conc}\alpha)}] \\\hline
\phi &
\deriv{\alpha\alpha\eimpl\Next\phi} \feq \ifthen{\alpha}{\alpha\eimpl\Next\phi} & \bot &
\multirow{2}{*}{
%\framebox{
\textrm{
%\mytikz{above}{below}{left}{right}{tikzpicture}
\mytikz{-1.5em}{-.7em}{-1em}{-.5em}{
    \begin{tikzpicture}[buchi]
      \small
      \begin{scope}[on above layer]
        \coordinate (init) at (0,0);
        \coordinate (center) at (1cm,0);
          \node[state,accept,right=0.2cm of init] (q0) {$\phi$};
          \node[state,right=1cm of center] (q2) {$\alpha\eimpl\Next\phi$};   
      \end{scope}
      \draw[move] (init) -- (q0);
%      \draw[move] (q0) -- node[below] {$\alpha$} (q2);
      \draw[macc] (q2) to[out=175,in=10] node[above] {$\lowerlabel[-.6em]{\alpha}$} (q0);
      \draw[macc] (q0) to[out=-10,in=185] node[below] {$\raiselabel[-1.8em]{\alpha}$} (q2);
    \end{tikzpicture}
}
}
%}
}
\\
\alpha\eimpl\Next\phi &
\deriv{\alpha\land\Next\phi}\lor(\der{\alpha}\eimpl\Next\phi) \feq \ifthen{\alpha}{\phi} & \alpha &
\end{array}
\]
If we were to complement the derivatives, we would obtain
the transitions:
\[
\begin{array}{@{}l|l|l@{}}
q & \deriv{q} & \textit{\textbf{Incorrect} $\DBA$}\\\hline
q_0=\lnot\phi &
\lnot\ifthen{\alpha}{\alpha\eimpl\Next\phi} = \ite{\alpha}{\lnot(\alpha\eimpl\Next\phi)}{\top}
&
\multirow{3}{*}{
%\framebox{
\textrm{
%\mytikz{above}{below}{left}{right}{tikzpicture}
\mytikz{-1.5em}{-1em}{-5em}{-1em}{
    \begin{tikzpicture}[buchi]
      \small
      \begin{scope}[on above layer]
        \coordinate (init) at (0,0);
        \coordinate (center) at (1.5cm,0);
          \node[state,right=0.2cm of init] (q0) {$q_0$};
          \node[state,accept,above=0.2cm of center] (q1) {$q_1$};
          \node[state,accept,right=1cm of center] (q2) {$\top$};   
      \end{scope}
      \draw[move] (init) -- (q0);
      \draw[move] (q0) -- node[above] {$\alpha$} (q1);
      \draw[move] (q0) -- node[below] {$\notA\alpha$} (q2);
      \draw[move] (q2) edge[loop above] node[right] {$\top$} (q2);
      \draw[macc] (q1) -- node[above] {$\notA\alpha$} (q2);
      \draw[macc] (q1) to[out=155,in=90] node[left] {$\alpha\;$} (q0);
    \end{tikzpicture}
}
}
%}
}
\\
q_1=\lnot(\alpha\eimpl\Next\phi) & \lnot\ifthen{\alpha}{\phi} \feq \ite{\alpha}{\lnot\phi}{\top} &
\\
& & 
\end{array}
\]
Then
$q_1=\lnot(\alpha\eimpl\Next\phi)\equiv\alpha\uimpl\lnot\Next\phi$ would be
considered as an \emph{accepting} state and thus \emph{incorrectly} accept
$\ocl{a}$ because $\ocl{a}\models\ocl{(\alpha{\conc}\alpha)}$.  On the other
hand, for $\nwcl{(\alpha{\conc}\alpha)\st{\conc}\anchorpred}$
we get the transitions:
\[
\begin{array}{@{}l|l|l@{}}
s & \deriv{s} & \DBA\;\M[\nwcl{(\alpha{\conc}\alpha)\st{\conc}\anchorpred}]\\\hline
s_0=\nwcl{(\alpha\alpha)\st\anchorpred} &
\ite{\alpha}{\nwcl{\alpha(\alpha\alpha)\st\anchorpred}}{\ite{\anchorpred}{\nwcl{\eps}}{\nwcl{\bot}}} &
\multirow{4}{*}{
%\framebox{
\textrm{
%\mytikz{above}{below}{left}{right}{tikzpicture}
\mytikz{-1.5em}{-.5em}{-1em}{-1em}{
    \begin{tikzpicture}[buchi]
      \small
      \begin{scope}[on above layer]
        \coordinate (init) at (0,0);
        \coordinate (center) at (1cm,0);
          \node[state,right=0.2cm of init] (s0) {$s_0$};
          \node[state,right=2cm of center] (s1) {$s_1$};
          \node[state,below=0.4cm of s0] (s2) {$\bot$};
          \node[state,accept,below=0.4cm of s1] (top) {$\top$};   
       \end{scope}
      \draw[move] (init) -- (s0);
      \draw[macc] (s1) to[out=170,in=10] node[above] {$\lowerlabel[-.6em]{\alpha}$} (s0);
      \draw[macc] (s0) to[out=-10,in=190] node[above] {{$\lowerlabel[-.6em]{\alpha}$}} (s1);
      \draw[move] (s0) -- node[left] {$\notA\alpha{\andA}\anchorpred$} (s2);
      \draw[move] (top) edge[loop right] node[below] {$\top$} (top);
      \draw[macc] (s2)  edge[loop right] node[below] {$\top$} (s2);
      \draw[move] (s1) -- node[right] {$\notA\alpha$} (top);
%      \draw[move] (s0) -- node[left] {$\notA\alpha{\andA}\notA\anchorpred$} (top);
      \draw[macc] (s0) to[out=-45,in=180] node[above] {$\notA\alpha{\andA}\notA\anchorpred$} (top);
    \end{tikzpicture}
}
}
%}
}
\\
&\feq
\ite{\alpha}{\nwcl{\alpha(\alpha\alpha)\st\anchorpred}}{\ite{\anchorpred}{\bot}{\top}}&
\\
s_1=\nwcl{\alpha(\alpha\alpha)\st\anchorpred}
&
\ite{\alpha}{\nwcl{(\alpha\alpha)\st\anchorpred}}{\nwcl{\bot}} &\\
&\feq
\ite{\alpha}{\nwcl{(\alpha\alpha)\st\anchorpred}}{\top} &
\end{array}
\]
where we also show $\bot$ for once.
The conjunction $\Globally\notA\anchorpred\land\nwcl{(\alpha\alpha)\st{\conc}\anchorpred}$
restricts all transitions to $\D$:
\[
\begin{array}{@{}l|l|l|l@{}}
q & \deriv{q} & \DBA[\A']\;\M[\Globally\notA\anchorpred\land\nwcl{(\alpha\alpha)\st{\conc}\anchorpred}]
& \DBA[\A]\;\M[\lnot(\ocl{(\alpha\alpha)})] \\\hline
q_0=\Globally\notA\anchorpred\land s_0 &
\ite{\alpha}{\Globally\notA\anchorpred\land s_1}{
\ifthen{\notA\anchorpred}{\Globally\notA\anchorpred}} &
\multirow{3}{*}{
\textrm{
%\mytikz{above}{below}{left}{right}{tikzpicture}
\mytikz{-2em}{-1em}{-5em}{-1em}{
    \begin{tikzpicture}[buchi]
      \small
      \begin{scope}[on above layer]
        \coordinate (init) at (0,0);
        \coordinate (center) at (1.5cm,0);
          \node[state,right=0.2cm of init] (q0) {$q_0$};
          \node[state,above=0.2cm of center] (q1) {$q_1$};
          \node[state,accept,right=1.5cm of center] (q2) {$q_2$};   
      \end{scope}
      \draw[move] (init) -- (q0);
      \draw[move] (q0) -- node[above] {$\alpha$} (q1);
      \draw[move] (q0) -- node[below] {$\notA\alpha{\andA}\notA\anchorpred$} (q2);
      \draw[move] (q2) edge[loop above] node[right] {$\notA\anchorpred$} (q2);
      \draw[macc] (q1) -- node[above] {$\notA\alpha{\andA}\notA\anchorpred$} (q2);
      \draw[macc] (q1) to[out=155,in=90] node[left] {$\alpha\;$} (q0);
    \end{tikzpicture}}
}
}
&
\multirow{3}{*}{
\textrm{
%\mytikz{above}{below}{left}{right}{tikzpicture}
\mytikz{-2em}{-1em}{-5em}{-1em}{
    \begin{tikzpicture}[buchi]
      \small
      \begin{scope}[on above layer]
        \coordinate (init) at (0,0);
        \coordinate (center) at (1.5cm,0);
          \node[state,right=0.2cm of init] (q0) {$q_0$};
          \node[state,above=0.2cm of center] (q1) {$q_1$};
          \node[state,accept,right=1cm of center] (q2) {$\top$};   
      \end{scope}
      \draw[move] (init) -- (q0);
      \draw[move] (q0) -- node[above] {$\alpha$} (q1);
      \draw[move] (q0) -- node[below] {$\notA\alpha$} (q2);
      \draw[move] (q2) edge[loop above] node[right] {$\top$} (q2);
      \draw[macc] (q1) -- node[above] {$\notA\alpha$} (q2);
      \draw[macc] (q1) to[out=155,in=90] node[left] {$\alpha\;$} (q0);
    \end{tikzpicture}}
}
}
\\
q_1=\Globally\notA\anchorpred\land s_1
&
\ite{\alpha}{\Globally\notA\anchorpred\land s_0}{
\ifthen{\notA\anchorpred}{\Globally\notA\anchorpred}} & &
\\
q_2=\Globally\notA\anchorpred & \ifthen{\notA\anchorpred}{\Globally\notA\anchorpred} & &
\end{array}
\]
where $\notA\anchorpred$ is obsolete in $\A$ (becomes $\top$).
In particular observe that $\ocl{a}\notin\Lang{\M[\lnot(\ocl{(\alpha\alpha)})]}$.
\end{ex}

\section{Related Work}
\label{sec:related}

\paragraph*{Derivatives}

Transition regexes for extended regular
expressions \cite{StanfordVB21}, a symbolic generalization of
\emph{Brzozowski derivatives} \cite{Brz64}, is one of the inspirations
behind our work here. We view transition terms for LTL 
as a symbolic generalization of \emph{Vardi's derivatives for LTL} \cite{Vardi95LTL}.
Another source of inspiration in this context was \cite{DV21}
to lift the theory of automata modulo theories to $\omega$-languages.
LTL based \emph{monitoring} also uses derivatives \cite{SRA03,HR01},
although not in the generalized form of symbolic derivatives modulo
$\A$, but for a finite set of atomic propositions.
%This work uses the decision procedure from \cite{Hsiang85} to
%canonicalize propositional formulas through a rewrite-system that is
%Church-Rosser and terminating (modulo associativity and commutativity
%of Boolean operators, including exclusive-OR).
The relationship between the tableaux method for
LTL \cite{Wolper85} and applying the concept of
linear factors from \cite{Ant95} to LTL were
studied in \cite{Thiemann18}, implying a fundamental connection
between the two constructions.

\paragraph*{Alternation elimination}
Alternation elimination in regexes in \cite{StanfordVB21} through
symbolic regex derivatives amounts to incremental unfolding into NFAs
modulo $\A$ ($\A$ here is a \emph{Unicode character theory}) -- as a
symbolic unfolding of \emph{Antimirov derivatives} \cite{Ant95}.
The work in \cite{antimirovMosses95} uses Horn-equational reasoning
as an alternative to derivatives for reasoning about regular expressions that use
intersection and also allows infinite alphabets.
Symbolic regex derivatives are used in the core of the regex decision
procedure in Z3 \cite{BM08}.  The alternation elimination
algorithm \cite{MH84} is further studied in-depth \cite{BokerKR10}
with an improved lower bound %of $O(n2^n)$ for the class
for \emph{ordered $\ABA$s}.  An open question related
to \cite[Theorem~4]{BokerKR10}, proving state bound $2^{|\D|+n}$, is
its variant in the symbolic case when $\D$ can be \emph{infinite}.

Non-emptiness of $\ABA$ resulting from LTL is studied by Bloem, Ravi, and
Somenzi through classification into \emph{weak} and
\emph{terminal} cases where the problem is easier than in the general case \cite{BRS99}.

Fritz presents an LTL to $\NBA$ procedure \cite{Fritz03} based on
simulation relations for $\ABA$s, while starting with an $\ABA$ (based on
the Vardi construction) the procedure then computes simulation
relations on-the-fly, using delayed simulations and
$\epsilon$-transitions.  In the final phase the $\ABA$ is translated into
an $\NBA$ via \cite{MH84}.

The Gerth {et.al.} algorithm \cite{GerthPVW95} uses \emph{tableau} to
translate LTL into $\NBA$. The construction first constructs a
\emph{Generalized B\"uchi automaton} (GBA) in tableau using the expansion
rules of Vardi derivatives, that is subsequently transformed, via
a well-known automata-theoretic encoding into $\NBA$ that preserves that
\emph{all acceptance conditions} of the GBA are visited infinitely
often.  Tableau based techniques for LTL were initially studied by
Wolper \cite{Wolper81,Wolper83,Wolper85}.  A further extension of
the tableau based technique for LTL is introduced in \cite{Couvreur99}
using on-the-fly expansion of \emph{transition B\"uchi automata}.  The
key technique there is also rooted in what we call here Vardi
derivatives, that are called \emph{fundamental identities} of Boolean
variables in that context, that reflect how the variables for the
subformulas are created inductively.  
In regard to \emph{tableau modulo $\A$} in general,
it is an open and active research area, part of a general
effort to combine first-order deduction with modulo theories
with many open challenges \cite{Burel20}.

Vardi \cite[Theorem 14, Proof]{Vardi94} is the first LTL to
$\ABA$ construction defined in terms of a
step-wise unwinding essentially as derivatives. This
construction is not symbolic, as it uses the next element to directly
compute a Boolean combination of successor states.
\cite{TsayV21} give a full construction of LTL to
\emph{symbolic alternating co-B\"uchi automata.}  While aspects of the
construction are similar to the one in our work, by being based on a
symbolic representation of $\tfc(q,a)$ -- a key difference is in the
representation of $\tf(q)$ as a transition term providing a natural
separation of concerns between evaluation of transitions from their
target state formulas, that moreover works with \emph{any
effective Boolean algebra $\A$}.

Gastin and Oddoux \cite{GastinO01} modify Vardi's original
construction to produce very weak alternating co-B\"uchi automata
instead, that then are transformed into GBAs.
They take a step towards a symbolic representation by representing
transitions as relations, although treatment of the alphabet is still
non-symbolic. This representation allows them to develop
simplifications based on relational reasoning for eliminating implied
transitions and equivalent states \emph{on-the-fly}. In the context of
our work, since these simplifications operate directly on the
representation of the transition relation, the direct correspondence
between formulas and states is lost.

Wulf et al. \cite{WulfDMR08} define \emph{symbolic alternating B\"uchi
automata} (sABW) where transition relations are Boolean combinations
of literals and successor states. They further develop incremental
satisfiability and model checking methods using BDDs \cite{Bryant86} both as the
alphabet theory and to represent sets of states. They do not give the
construction from LTL to sABW, but instead refer to the presentations
in \cite{Vardi95LTL,GastinO01}.

Muller, Saoudi and Schupp \cite{MullerSS88} were the first to state
and prove the theorem that LTL can be translated to B\"uchi
automata. The proof, however, does not use an inductive unwinding of
LTL formula but composes automata from subformulas instead and is very
different from derivatives -- the construction uses weak alternating
automata over trees and a reduction \cite{MullerSS86} to B\"uchi
automata.

The ITE aspect of transition terms is not preserved in the above works
and the view of states as formulas, even if maintainable in some form
in a GBA, gets then lost in the translation to $\NBA$.

\paragraph*{Extensions of LTL with Background Theories}
\emph{LTL with constraints} \cite{Demri07} is a fragment of first-order
LTL, where LTL with \emph{Presburger arithmetic} \cite{Demri06} is a
typical extension of LTL.  While many decidable fragments exist, such
as CLTL$(\mathcal{D})$ \cite{Demri07}, constraint extensions in
general lead to undecidability.  Further extensions of LTL over
infinite domains are introduced in \cite{Kupferman_atva12} and
subsequently formalized with \emph{generalized register automata}
in \cite{Kupferman_atva13}.  In \cite{Kupferman_lpar18},
\emph{LTL with arithmetic or LTLA}
is studied further with a focus on its decidable
and undecidable fragments, where
the model-checking problem of
the existential fragment is shown to be in PSPACE and
the problem is studied also for hierarchical systems.

The synthesis problem of
\emph{nondeterministic looping word automata with arithmetic}
is studied in \cite{Kupferman_sofsem20}
by using games, where the languages are
also infinite words over an infinite alphabet,
in this case over rationals paired with a finite alphabet.

Recent work in \cite{GGG22} extends LTL with modulo theories over
\emph{finite} strings, the theory is also generally
\emph{undecidable}.

In the above works derivatives are not being used and the extensions
are in general orthogonal to what we propose here as $\LTL[\A]$ that maintains
decidability and algorithms modulo arbitrary effective Boolean algebra $\A$,
with an SMT solver as a prime example of
a general purpose effective Boolean algebra.
In some aspect $\LTL[\A]$ 
bridges a gap between classical LTL and certain decidable fragments of
theory specific extensions of LTL.

\paragraph{PSL and SPOT}
PSL \cite{PSL} is the \emph{IEEE 1850 Property Specification
Language}.  Its primary purpose is for formal specification of
concurrent systems.  SPOT \cite{spotRef} is a state-of-the-art verification tool that
supports a large subset of PSL.  $\RLTL$ covers a core subset of
PSL that is consistent with the semantics of suffix implications in
SPOT, but \emph{differs} in one important aspect in regard to \emph{weak
closure} (and its negation), namely that, for any nullable regex $R$,
$\wcl{R}\equiv\top$.  For a PSL formula $\wcl{R}$ the corresponding
weak closure in $\RLTL$ is $\wcl{R\inter\top\plus}$.  This
difference is driven by the semantics of derivatives that
effectively \emph{enforces} the law $\wcl{R}\equiv\top$
whenever $R$ is nullable, in order to to maintain an algebraically
well-behaved system of rules.  This also means that some basic rewrite
rules in SPOT (such as $\wcl{r\st}\equiv\wcl{r}$) are \emph{invalid}
for formulas in $\RLTL$.

$\RLTL$ also supports regex \emph{complement} that comes naturally
because of the built-in duality law of transition regexes:
$\compl\ite{\alpha}{f}{g}=\ite{\alpha}{\compl f}{\compl
g}$ \cite[Lemma~4.2]{StanfordVB21}. It is
difficult and highly impractical to support $\compl$ by other means,
because it would in general require \emph{determinization} of symbolic
finite automata that is avoided by the duality law that propagates
$\compl$ \emph{lazily}.

A key difference is that $\RLTL$ lifts (a core
subset of) PSL as to be modulo theories: while in PSL the atomic regexes
$\alpha$ are Boolean combinations of \emph{propositions}, in $\RLTL$
$\alpha$ can be any SMT formula.

Finally, further operators can be
supported by simply defining their
symbolic derivatives.  E.g., for regular
expression \emph{fusion} $(R\,{\fuse}\,S)$,
$\FLang{R\,{\fuse}\,S}= \{xay\mid x,y\in\Ds,a\in\D,xa\in\FLang{R},ay\in\FLang{S}\}$,
\[
\der{R\,{\fuse}\,S}\eqdef\ifthen{\One{R}}{\der{S}}\union(\der{R}\,{\fuse}\,S)
\qquad \Null{R\,{\fuse}\,S}\eqdef\False
\]
where again (\ref{eq:unary}) lifts $\lambda r.r\,{\fuse}\,S$ to transition regexes (extended with fusion).
Additional rewrite rules, such as $\eps\fuse S\feq \bot$, are also added.
Then, for example, for $\alpha,\beta\in\PA$ and $R=\alpha\st\fuse\beta\st$,
we get the following derivative, that we compare side-by-side with
the derivative of $S=\alpha\st{\conc}(\alpha{\andA}\beta){\conc}\beta\st$,
\[
\begin{array}{rcl}
\der{R} &\feq&
\ite{\alpha}{\ite{\beta}{(\beta\st\union R)}{R}}{\bot}
\\
\der{S} &\feq&
\ite{\alpha}{\ite{\alpha{\andA}\beta}{(\beta\st\union S)}{S}}{\bot}
\end{array}
\]
where $R$ and $S$ are clearly equivalent regexes, which is also reflected in their derivatives.

\section{Future Work}
\label{sec:future}
\paragraph*{Optimizations}
There are several layers of optimizations that can be
applied to the apporach presented here. 
All leaves in INF can be maintained as
\emph{antichains} based on the property of
maintaining all satisfiers in a minimal form -- recall that~(\ref{eq:MinSat})
is the antichain of $\DNF{\phi}$ with respect to $\subseteq$.
Use of antichains is a powerful general technique
also used in classical LTL \cite{WulfDMR08,FJR09}.

The construction of $\M[\wcl{R}]$ can be incremental just as in the
case of $\M[R]$ and can moreover eliminate dead states on-the-fly,
e.g., by using the algorithm in \cite{SV23}.  $\M[\wcl{R}]$ can also
be based on an \emph{NFA} of $R$ or symbolic generalization
of \emph{partial derivatives} \cite{Ant95}. NFAs require only minor
modifications for support in $\RLTL$ but can be more economical at
least in the case of $\RE[\A]$ ($\ERE[\A]$ without $\inter$ and
$\compl$). Use of NFAs avoids lazy \emph{determinization} that happens
implicitly in Lemma~\ref{lma:der} that is based on a symbolic
generalization of derivatives in \cite{Brz64}.  DFAs can be beneficial
when working with
\emph{negated weak closures} to avoid conjunctions
(and alternation elimination) that would otherwise arise
upon negation of transition terms.

For $\M[\lnot\ocl{R}]$ general $\NBA$ complementation of $\M[\ocl{R}]$
can be supported by first lifting the result that $\NBA$s are closed
under complement \cite{Buchi60} (see \cite[Theorem~4]{Kupf18}) as to
be modulo $\A$, through mintermization, and then develop symbolic
algorithms for complementing $\M[\ocl{R}]$ modulo $\A$.

At the \emph{transition} level: ITEs can potentially be
represented more efficiently by \emph{generic BDDs} \cite{DV17}, with
leaves as terminals, ordering the predicates in $\PA$ and maintaining
that order internally in Boolean operations,
or by utilizing \emph{Shannon expansions} \cite{Shannon49} in general when
working with ITEs.

At the \emph{automata} level: \emph{simulation}
algorithms \cite{GBS02,FritzWilke02,FritzWilke05,Fritz03,EWS05,MC13}
have been studied to reduce the state space, that have moreover been
improved with \emph{antichains} \cite{ACHMV10} and \emph{bisimulations
up to congruence} \cite{Pous13}.  Some of these algorithms have also
been lifted to \emph{modulo $\A$} in the case of
deterministic \cite{DV14,LICS16} and
nondeterministic \cite{DAntoniV17,HLSVV18} finite automata, where the
underlying technique is to work with \emph{symmetric differences of
predicates in $\A$ to avoid explicit iteration over the alphabet} as
in the classical Hopcroft DFA minimization \cite{Hop71} and
Paige-Tarjan relational coarsest partition \cite{PT87} algorithms --
both being essentially \emph{bisimulation} algorithms for DFAs and
NFAs, respectively.  It is therefore plausible that an analogous
treatment of predicates in $\A$ can be extended for simulation,
bisimulation, and further $\NBA$ algorithms studied in \cite{Pous21}
when lifted to modulo $\A$.

\paragraph*{The deterministic case}
Symbolic derivatives might give new insights for $\DBA$
algorithms \cite{Kurshan87,FinSch05,BaierKatoen08,BabiakKRS12,baarir.14.forte,KupfermanR10,tourneur.17.misc,DieMusWal15}
when lifted to modulo $\A$.

\paragraph*{Other extensions}
Symbolic derivatives can be developed
with laws that provide incremental/lazy unfolding of \emph{counters}, that
may be applicable to \emph{LTL with
counting} \cite{Laroussinie2010CountingL}.
The \emph{past-time} operator does not increase expressivity of
LTL \cite{GPSS80} -- it is nevertheless practically very
useful \cite{LPZ85,LS02,KP95,KPV12}.  Symbolic derivatives
are inherently ``forward looking'',
so understanding how to efficiently handle past-time with them is intriguing.
%The notion of \emph{fair correctness} \cite{VV06} in the context of
%LTL and ABA would also be interesting to consider modulo $\A$ which
%also brings up the question of \emph{past-time}.

Finally, a larger open problem is if the derivative based view can also be
developed modulo $\A$ for \emph{Computation Tree Logic} (CTL) or even
CTL* (a hybrid of LTL and CTL) \cite{Piterman2018}.  \emph{Infinite trees} rather than
words are the core elements in the semantics in this case.
Derivatives for finite trees have recently been studied in \cite{SLD21}.

\section{Conclusion}
\label{sec:conclude}

We have shown how symbolic transition terms and derivatives can be used to
define a realizable symbolic semantics for 
(alternating) B\"uchi automata and linear temporal logic (LTL).
The semantics is parameterized by an effective Boolean algebra
for the base alphabetic domain, which enables it to apply to
$\omega$-languages and infinite alphabets in an algebraically
well-defined and precise manner. This framework allows syntactic
rewrite rules for LTL extended with regular expressions (RLTL)
to be applied \emph{on-the-fly} during alternation
elimination, where they simultaneously respect the semantics of RLTL
formulas and their alternating B\"uchi automata. Similarly to the new
alternation elimination algorithm, we believe there is a rich
landscape of further optimizations and
algorithms yet to be discovered.

%----------------------------------------------------------------------

\bibliographystyle{ACM-Reference-Format}
%\bibliography{bib}

\begin{thebibliography}{84}

%%% ====================================================================
%%% NOTE TO THE USER: you can override these defaults by providing
%%% customized versions of any of these macros before the \bibliography
%%% command.  Each of them MUST provide its own final punctuation,
%%% except for \shownote{}, \showDOI{}, and \showURL{}.  The latter two
%%% do not use final punctuation, in order to avoid confusing it with
%%% the Web address.
%%%
%%% To suppress output of a particular field, define its macro to expand
%%% to an empty string, or better, \unskip, like this:
%%%
%%% \newcommand{\showDOI}[1]{\unskip}   % LaTeX syntax
%%%
%%% \def \showDOI #1{\unskip}           % plain TeX syntax
%%%
%%% ====================================================================

\ifx \showCODEN    \undefined \def \showCODEN     #1{\unskip}     \fi
\ifx \showDOI      \undefined \def \showDOI       #1{#1}\fi
\ifx \showISBNx    \undefined \def \showISBNx     #1{\unskip}     \fi
\ifx \showISBNxiii \undefined \def \showISBNxiii  #1{\unskip}     \fi
\ifx \showISSN     \undefined \def \showISSN      #1{\unskip}     \fi
\ifx \showLCCN     \undefined \def \showLCCN      #1{\unskip}     \fi
\ifx \shownote     \undefined \def \shownote      #1{#1}          \fi
\ifx \showarticletitle \undefined \def \showarticletitle #1{#1}   \fi
\ifx \showURL      \undefined \def \showURL       {\relax}        \fi
% The following commands are used for tagged output and should be
% invisible to TeX
\providecommand\bibfield[2]{#2}
\providecommand\bibinfo[2]{#2}
\providecommand\natexlab[1]{#1}
\providecommand\showeprint[2][]{arXiv:#2}

\bibitem[Abdulla et~al\mbox{.}(2010)]%
        {ACHMV10}
\bibfield{author}{\bibinfo{person}{Parosh~Aziz Abdulla},
  \bibinfo{person}{Yu-Fang Chen}, \bibinfo{person}{Luk{\'a}{\v{s}} Hol{\'i}k},
  \bibinfo{person}{Richard Mayr}, {and} \bibinfo{person}{Tom{\'a}{\v{s}}
  Vojnar}.} \bibinfo{year}{2010}\natexlab{}.
\newblock \showarticletitle{When Simulation Meets Antichains}. In
  \bibinfo{booktitle}{\emph{Tools and Algorithms for the Construction and
  Analysis of Systems (TACAS)}} \emph{(\bibinfo{series}{LNCS},
  Vol.~\bibinfo{volume}{6015})}, \bibfield{editor}{\bibinfo{person}{Javier
  Esparza} {and} \bibinfo{person}{Rupak Majumdar}} (Eds.).
  \bibinfo{publisher}{Springer}, \bibinfo{pages}{158--174}.
\newblock
\urldef\tempurl%
\url{https://doi.org/10.1007/978-3-642-12002-2\_14}
\showURL{%
\tempurl}


\bibitem[Antimirov(1996)]%
        {Ant95}
\bibfield{author}{\bibinfo{person}{Valentin Antimirov}.}
  \bibinfo{year}{1996}\natexlab{}.
\newblock \showarticletitle{Partial Derivatives of Regular Expressions and
  Finite Automata Constructions}.
\newblock \bibinfo{journal}{\emph{Theoretical Computer Science}}
  \bibinfo{volume}{155}, \bibinfo{number}{2} (\bibinfo{year}{1996}),
  \bibinfo{pages}{291--319}.
\newblock


\bibitem[Antimirov and Mosses(1995)]%
        {antimirovMosses95}
\bibfield{author}{\bibinfo{person}{Valentin~M. Antimirov} {and}
  \bibinfo{person}{Peter~D. Mosses}.} \bibinfo{year}{1995}\natexlab{}.
\newblock \showarticletitle{Rewriting extended regular expressions}.
\newblock \bibinfo{journal}{\emph{Theoretical Computer Science}}
  \bibinfo{volume}{143} (\bibinfo{year}{1995}), \bibinfo{pages}{51--72}.
\newblock
\urldef\tempurl%
\url{https://doi.org/10.1016/0304-3975(95)80010-7}
\showDOI{\tempurl}


\bibitem[Attou et~al\mbox{.}(2021)]%
        {SLD21}
\bibfield{author}{\bibinfo{person}{Samira Attou}, \bibinfo{person}{Ludovic
  Mignot}, {and} \bibinfo{person}{Djelloul Ziadi}.}
  \bibinfo{year}{2021}\natexlab{}.
\newblock \showarticletitle{Bottom-Up derivatives of tree expressions}.
\newblock \bibinfo{journal}{\emph{RAIRO-Theor. Inf. Appl.}}
  \bibinfo{volume}{55}, \bibinfo{number}{4} (\bibinfo{year}{2021}).
\newblock
\urldef\tempurl%
\url{https://doi.org/10.1051/ita/2021008}
\showDOI{\tempurl}


\bibitem[Baarir and Duret-Lutz(2014)]%
        {baarir.14.forte}
\bibfield{author}{\bibinfo{person}{Souheib Baarir} {and}
  \bibinfo{person}{Alexandre Duret-Lutz}.} \bibinfo{year}{2014}\natexlab{}.
\newblock \showarticletitle{Mechanizing the Minimization of Deterministic
  Generalized {B\"u}chi Automata}. In \bibinfo{booktitle}{\emph{Proceedings of
  the 34th IFIP International Conference on Formal Techniques for Distributed
  Objects, Components and Systems (FORTE'14)}} \emph{(\bibinfo{series}{Lecture
  Notes in Computer Science}, Vol.~\bibinfo{volume}{8461})}.
  \bibinfo{publisher}{Springer}, \bibinfo{pages}{266--283}.
\newblock
\urldef\tempurl%
\url{https://doi.org/10.1007/978-3-662-43613-4\_17}
\showDOI{\tempurl}


\bibitem[Babiak et~al\mbox{.}(2012)]%
        {BabiakKRS12}
\bibfield{author}{\bibinfo{person}{Tom{\'{a}}s Babiak},
  \bibinfo{person}{Mojm{\'{\i}}r Kret{\'{\i}}nsk{\'{y}}},
  \bibinfo{person}{Vojtech Reh{\'{a}}k}, {and} \bibinfo{person}{Jan Strejcek}.}
  \bibinfo{year}{2012}\natexlab{}.
\newblock \showarticletitle{{LTL} to {B{\"{u}}chi} Automata Translation: Fast
  and More Deterministic}. In \bibinfo{booktitle}{\emph{Tools and Algorithms
  for the Construction and Analysis of Systems - 18th International Conference,
  {TACAS} 2012, Held as Part of the European Joint Conferences on Theory and
  Practice of Software, {ETAPS} 2012, Tallinn, Estonia, March 24 - April 1,
  2012. Proceedings}} \emph{(\bibinfo{series}{Lecture Notes in Computer
  Science}, Vol.~\bibinfo{volume}{7214})},
  \bibfield{editor}{\bibinfo{person}{Cormac Flanagan} {and}
  \bibinfo{person}{Barbara K{\"{o}}nig}} (Eds.). \bibinfo{publisher}{Springer},
  \bibinfo{pages}{95--109}.
\newblock
\urldef\tempurl%
\url{https://doi.org/10.1007/978-3-642-28756-5\_8}
\showDOI{\tempurl}


\bibitem[Baier and Katoen(2008)]%
        {BaierKatoen08}
\bibfield{author}{\bibinfo{person}{C. Baier} {and} \bibinfo{person}{J.-P.
  Katoen}.} \bibinfo{year}{2008}\natexlab{}.
\newblock \bibinfo{booktitle}{\emph{Principles of Model Checking}}.
\newblock \bibinfo{publisher}{MIT Press}.
\newblock


\bibitem[Bloem et~al\mbox{.}(1999)]%
        {BRS99}
\bibfield{author}{\bibinfo{person}{Roderick Bloem}, \bibinfo{person}{Kavita
  Ravi}, {and} \bibinfo{person}{Fabio Somenzi}.}
  \bibinfo{year}{1999}\natexlab{}.
\newblock \showarticletitle{Efficient Decision Procedures for Model Checking of
  Linear Time Logic Properties}. In \bibinfo{booktitle}{\emph{Computer Aided
  Verification}} \emph{(\bibinfo{series}{LNCS}, Vol.~\bibinfo{volume}{1633})},
  \bibfield{editor}{\bibinfo{person}{Nicolas Halbwachs} {and}
  \bibinfo{person}{Doron Peled}} (Eds.). \bibinfo{publisher}{Springer},
  \bibinfo{pages}{222--235}.
\newblock
\urldef\tempurl%
\url{https://doi.org/10.1007/3-540-48683-6\_21}
\showURL{%
\tempurl}


\bibitem[Boker et~al\mbox{.}(2010)]%
        {BokerKR10}
\bibfield{author}{\bibinfo{person}{Udi Boker}, \bibinfo{person}{Orna
  Kupferman}, {and} \bibinfo{person}{Adin Rosenberg}.}
  \bibinfo{year}{2010}\natexlab{}.
\newblock \showarticletitle{Alternation Removal in {B{\"{u}}chi} Automata}. In
  \bibinfo{booktitle}{\emph{Automata, Languages and Programming, 37th
  International Colloquium, {ICALP} 2010, Bordeaux, France, July 6-10, 2010,
  Proceedings, Part {II}}} \emph{(\bibinfo{series}{Lecture Notes in Computer
  Science}, Vol.~\bibinfo{volume}{6199})},
  \bibfield{editor}{\bibinfo{person}{Samson Abramsky}, \bibinfo{person}{Cyril
  Gavoille}, \bibinfo{person}{Claude Kirchner},
  \bibinfo{person}{Friedhelm~Meyer auf~der Heide}, {and}
  \bibinfo{person}{Paul~G. Spirakis}} (Eds.). \bibinfo{publisher}{Springer},
  \bibinfo{pages}{76--87}.
\newblock
\urldef\tempurl%
\url{https://doi.org/10.1007/978-3-642-14162-1\_7}
\showDOI{\tempurl}


\bibitem[Bonchi and Pous(2013)]%
        {Pous13}
\bibfield{author}{\bibinfo{person}{Filippo Bonchi} {and}
  \bibinfo{person}{Damien Pous}.} \bibinfo{year}{2013}\natexlab{}.
\newblock \showarticletitle{Checking NFA Equivalence with Bisimulations up to
  Congruence}.
\newblock \bibinfo{journal}{\emph{ACM SIGPLAN Notices (POPL'13)}}
  \bibinfo{volume}{48}, \bibinfo{number}{1} (\bibinfo{year}{2013}),
  \bibinfo{pages}{457--468}.
\newblock
\urldef\tempurl%
\url{https://doi.org/10.1145/2480359.2429124}
\showURL{%
\tempurl}


\bibitem[Bryant(1986)]%
        {Bryant86}
\bibfield{author}{\bibinfo{person}{Randal~E. Bryant}.}
  \bibinfo{year}{1986}\natexlab{}.
\newblock \showarticletitle{Graph-Based Algorithms for Boolean Function
  Manipulation}.
\newblock \bibinfo{journal}{\emph{IEEE Trans. Comput.}} \bibinfo{volume}{35},
  \bibinfo{number}{8} (\bibinfo{year}{1986}), \bibinfo{pages}{677--691}.
\newblock


\bibitem[Brzozowski(1964)]%
        {Brz64}
\bibfield{author}{\bibinfo{person}{Janusz~A. Brzozowski}.}
  \bibinfo{year}{1964}\natexlab{}.
\newblock \showarticletitle{Derivatives of regular expressions}.
\newblock \bibinfo{journal}{\emph{JACM}}  \bibinfo{volume}{11}
  (\bibinfo{year}{1964}), \bibinfo{pages}{481--494}.
\newblock


\bibitem[B\"uchi(1960)]%
        {Buchi60}
\bibfield{author}{\bibinfo{person}{J.~R. B\"uchi}.}
  \bibinfo{year}{1960}\natexlab{}.
\newblock \showarticletitle{On a decision method in restricted second order
  arithmetic}. In \bibinfo{booktitle}{\emph{Internat. Congr. on Logic,
  Methodology and Philosophy of Science}},
  \bibfield{editor}{\bibinfo{person}{E.~Nagel et~al.}} (Ed.).
  \bibinfo{publisher}{Stanford Univ. Press}, \bibinfo{pages}{1--11}.
\newblock


\bibitem[Burel et~al\mbox{.}(2020)]%
        {Burel20}
\bibfield{author}{\bibinfo{person}{Guillaume Burel}, \bibinfo{person}{Guillaume
  Bury}, \bibinfo{person}{Rapha\"el Cauderlier}, \bibinfo{person}{David
  Delahaye}, \bibinfo{person}{Pierre Halmagrand}, {and}
  \bibinfo{person}{Olivier Hermant}.} \bibinfo{year}{2020}\natexlab{}.
\newblock \showarticletitle{First-Order Automated Reasoning with Theories: When
  Deduction Modulo Theory Meets Practice}.
\newblock \bibinfo{journal}{\emph{Journal of Automated Reasoning}}
  \bibinfo{volume}{64} (\bibinfo{year}{2020}), \bibinfo{pages}{1001--1050}.
\newblock


\bibitem[Choueka(1974)]%
        {Cho74}
\bibfield{author}{\bibinfo{person}{Y. Choueka}.}
  \bibinfo{year}{1974}\natexlab{}.
\newblock \showarticletitle{Theories of automata on $\omega$-tapes: A
  simplified approach}.
\newblock \bibinfo{journal}{\emph{J. Computer and System Sciences}}
  \bibinfo{volume}{8} (\bibinfo{year}{1974}), \bibinfo{pages}{117--141}.
\newblock


\bibitem[Clarke et~al\mbox{.}(1999)]%
        {CGP99}
\bibfield{author}{\bibinfo{person}{Edmund~M. Clarke}, \bibinfo{person}{Orna
  Grumberg}, {and} \bibinfo{person}{Doron~A. Peled}.}
  \bibinfo{year}{1999}\natexlab{}.
\newblock \bibinfo{booktitle}{\emph{Model Checking}}.
\newblock \bibinfo{publisher}{MIT Press}.
\newblock


\bibitem[Courcoubetis et~al\mbox{.}(1992)]%
        {CVWY92}
\bibfield{author}{\bibinfo{person}{C. Courcoubetis}, \bibinfo{person}{M.
  Vardi}, \bibinfo{person}{P. Wolper}, {and} \bibinfo{person}{M. Yannakakis}.}
  \bibinfo{year}{1992}\natexlab{}.
\newblock \showarticletitle{Memory-Effcient Algorithms for the Verification of
  Temporal Properties}.
\newblock \bibinfo{journal}{\emph{Formal Methods in System Design}}
  \bibinfo{volume}{1} (\bibinfo{year}{1992}), \bibinfo{pages}{275--288}.
\newblock
\urldef\tempurl%
\url{https://doi.org/10.1007/BF00121128}
\showDOI{\tempurl}


\bibitem[Couvreur(1999)]%
        {Couvreur99}
\bibfield{author}{\bibinfo{person}{Jean{-}Michel Couvreur}.}
  \bibinfo{year}{1999}\natexlab{}.
\newblock \showarticletitle{On-the-Fly Verification of Linear Temporal Logic}.
  In \bibinfo{booktitle}{\emph{FM'99 - Formal Methods, World Congress on Formal
  Methods in the Development of Computing Systems, Toulouse, France, September
  20-24, 1999, Proceedings, Volume {I}}} \emph{(\bibinfo{series}{Lecture Notes
  in Computer Science}, Vol.~\bibinfo{volume}{1708})},
  \bibfield{editor}{\bibinfo{person}{Jeannette~M. Wing}, \bibinfo{person}{Jim
  Woodcock}, {and} \bibinfo{person}{Jim Davies}} (Eds.).
  \bibinfo{publisher}{Springer}, \bibinfo{pages}{253--271}.
\newblock
\urldef\tempurl%
\url{https://doi.org/10.1007/3-540-48119-2\_16}
\showDOI{\tempurl}


\bibitem[D'Antoni et~al\mbox{.}(2018)]%
        {DAntoniKW16}
\bibfield{author}{\bibinfo{person}{Loris D'Antoni}, \bibinfo{person}{Zachary
  Kincaid}, {and} \bibinfo{person}{Fang Wang}.}
  \bibinfo{year}{2018}\natexlab{}.
\newblock \showarticletitle{A Symbolic Decision Procedure for Symbolic
  Alternating Finite Automata}.
\newblock \bibinfo{journal}{\emph{Electronic Notes in Theoretical Computer
  Science}}  \bibinfo{volume}{336} (\bibinfo{year}{2018}),
  \bibinfo{pages}{79--99}.
\newblock


\bibitem[D'Antoni and Veanes(2014)]%
        {DV14}
\bibfield{author}{\bibinfo{person}{Loris D'Antoni} {and}
  \bibinfo{person}{Margus Veanes}.} \bibinfo{year}{2014}\natexlab{}.
\newblock \showarticletitle{Minimization of Symbolic Automata}.
\newblock \bibinfo{journal}{\emph{ACM SIGPLAN Notices -- POPL'14}}
  \bibinfo{volume}{49}, \bibinfo{number}{1} (\bibinfo{year}{2014}),
  \bibinfo{pages}{541--553}.
\newblock
\urldef\tempurl%
\url{https://doi.org/10.1145/2535838.2535849}
\showDOI{\tempurl}


\bibitem[D'Antoni and Veanes(2016)]%
        {LICS16}
\bibfield{author}{\bibinfo{person}{Loris D'Antoni} {and}
  \bibinfo{person}{Margus Veanes}.} \bibinfo{year}{2016}\natexlab{}.
\newblock \showarticletitle{Minimization of Symbolic Tree Automata}. In
  \bibinfo{booktitle}{\emph{LICS '16}} (New York, NY, USA).
  \bibinfo{publisher}{IEEE}, \bibinfo{address}{New York, NY, USA},
  \bibinfo{pages}{873--882}.
\newblock
\showISBNx{978-1-4503-4391-6}
\urldef\tempurl%
\url{https://doi.org/10.1145/2933575.2933578}
\showDOI{\tempurl}


\bibitem[D'Antoni and Veanes(2017a)]%
        {DAntoniV17}
\bibfield{author}{\bibinfo{person}{Loris D'Antoni} {and}
  \bibinfo{person}{Margus Veanes}.} \bibinfo{year}{2017}\natexlab{a}.
\newblock \showarticletitle{Forward Bisimulations for Nondeterministic Symbolic
  Finite Automata}. In \bibinfo{booktitle}{\emph{TACAS'17}}
  \emph{(\bibinfo{series}{LNCS})}. \bibinfo{publisher}{Springer},
  \bibinfo{pages}{518--534}.
\newblock
\urldef\tempurl%
\url{https://doi.org/10.1007/978-3-662-54577-5\_30}
\showDOI{\tempurl}


\bibitem[D'Antoni and Veanes(2017b)]%
        {DV17}
\bibfield{author}{\bibinfo{person}{Loris D'Antoni} {and}
  \bibinfo{person}{Margus Veanes}.} \bibinfo{year}{2017}\natexlab{b}.
\newblock \showarticletitle{Monadic second-order logic on finite sequences}.
\newblock \bibinfo{journal}{\emph{ACM SIGPLAN Notices -- POPL'17}}
  \bibinfo{volume}{52}, \bibinfo{number}{1} (\bibinfo{year}{2017}),
  \bibinfo{pages}{232--245}.
\newblock
\urldef\tempurl%
\url{https://doi.org/10.1145/3093333.3009844}
\showURL{%
\tempurl}


\bibitem[D'Antoni and Veanes(2021)]%
        {DV21}
\bibfield{author}{\bibinfo{person}{Loris D'Antoni} {and}
  \bibinfo{person}{Margus Veanes}.} \bibinfo{year}{2021}\natexlab{}.
\newblock \showarticletitle{Automata Modulo Theories}.
\newblock \bibinfo{journal}{\emph{Commun. ACM}} \bibinfo{volume}{64},
  \bibinfo{number}{5} (\bibinfo{date}{May} \bibinfo{year}{2021}),
  \bibinfo{pages}{86--95}.
\newblock


\bibitem[de~Moura and Bj{\o}rner(2008)]%
        {BM08}
\bibfield{author}{\bibinfo{person}{Leonardo de Moura} {and}
  \bibinfo{person}{Nikolaj Bj{\o}rner}.} \bibinfo{year}{2008}\natexlab{}.
\newblock \showarticletitle{{{Z3}: An Efficient {SMT} Solver}}. In
  \bibinfo{booktitle}{\emph{TACAS'08}} \emph{(\bibinfo{series}{LNCS})}.
  \bibinfo{publisher}{Springer}, \bibinfo{pages}{337--340}.
\newblock


\bibitem[Demri(2006)]%
        {Demri06}
\bibfield{author}{\bibinfo{person}{St\'ephane Demri}.}
  \bibinfo{year}{2006}\natexlab{}.
\newblock \showarticletitle{Linear-time temporal logics with Presburger
  constraints: an overview}.
\newblock \bibinfo{journal}{\emph{Journal of Applied Non-Classical Logics}}
  \bibinfo{volume}{16}, \bibinfo{number}{3--4} (\bibinfo{year}{2006}),
  \bibinfo{pages}{311--347}.
\newblock
\urldef\tempurl%
\url{https://doi.org/10.3166/jancl.16.311-347}
\showDOI{\tempurl}


\bibitem[Demri and D'Souza(2007)]%
        {Demri07}
\bibfield{author}{\bibinfo{person}{S. Demri} {and} \bibinfo{person}{D.
  D'Souza}.} \bibinfo{year}{2007}\natexlab{}.
\newblock \showarticletitle{An automata-theoretic approach to constraint LTL}.
\newblock \bibinfo{journal}{\emph{Information and Computation}}
  \bibinfo{volume}{205}, \bibinfo{number}{3} (\bibinfo{year}{2007}),
  \bibinfo{pages}{380--415}.
\newblock
\urldef\tempurl%
\url{https://doi.org/10.1016/j.ic.2006.09.006}
\showURL{%
\tempurl}


\bibitem[Diekert et~al\mbox{.}(2015)]%
        {DieMusWal15}
\bibfield{author}{\bibinfo{person}{Volker Diekert}, \bibinfo{person}{Anca
  Muscholl}, {and} \bibinfo{person}{Igor Walukiewicz}.}
  \bibinfo{year}{2015}\natexlab{}.
\newblock \showarticletitle{A Note on Monitors and {B\"uchi} Automata}. In
  \bibinfo{booktitle}{\emph{{ICTAC} 2015}} \emph{(\bibinfo{series}{LNCS},
  Vol.~\bibinfo{volume}{9399})}. \bibinfo{publisher}{Springer},
  \bibinfo{pages}{39--57}.
\newblock
\urldef\tempurl%
\url{http://dx.doi.org/10.1007/978-3-319-25150-9\_3}
\showURL{%
\tempurl}


\bibitem[Duret-Lutz(2022)]%
        {spotRef}
\bibfield{author}{\bibinfo{person}{Alexandre Duret-Lutz}.}
  \bibinfo{year}{2022}\natexlab{}.
\newblock \bibinfo{booktitle}{\emph{Spot's Temporal Logic Formulas}}.
\newblock
\newblock
\shownote{Manual for Spot 2.11.3}.


\bibitem[Eisner and Fisman(2006)]%
        {PSL}
\bibfield{author}{\bibinfo{person}{Cindy Eisner} {and} \bibinfo{person}{Dana
  Fisman}.} \bibinfo{year}{2006}\natexlab{}.
\newblock \bibinfo{booktitle}{\emph{A Practical Introduction to PSL}}.
\newblock \bibinfo{publisher}{Springer}.
\newblock
\urldef\tempurl%
\url{https://doi.org/10.1007/978-0-387-36123-9}
\showDOI{\tempurl}


\bibitem[Emerson and Lei(1985)]%
        {EL85a}
\bibfield{author}{\bibinfo{person}{E.~A. Emerson} {and} \bibinfo{person}{C.-L.
  Lei}.} \bibinfo{year}{1985}\natexlab{}.
\newblock \showarticletitle{Temporal model checking under generalized fairness
  constraints}. In \bibinfo{booktitle}{\emph{18th Hawaii International
  Conference on System Sciences}}. \bibinfo{pages}{277--288}.
\newblock


\bibitem[Emerson and Lei(1987)]%
        {EL85b}
\bibfield{author}{\bibinfo{person}{E.~A. Emerson} {and} \bibinfo{person}{C.-L.
  Lei}.} \bibinfo{year}{1987}\natexlab{}.
\newblock \showarticletitle{Modalities for model checking: branching time logic
  strikes back}.
\newblock \bibinfo{journal}{\emph{Science of Computer Programming}}
  \bibinfo{volume}{8}, \bibinfo{number}{3} (\bibinfo{year}{1987}),
  \bibinfo{pages}{275--306}.
\newblock
\urldef\tempurl%
\url{https://doi.org/10.1016/0167-6423(87)90036-0}
\showURL{%
\tempurl}
\newblock
\shownote{Original conference version appears in POPL'85}.


\bibitem[Etessami et~al\mbox{.}(2005)]%
        {EWS05}
\bibfield{author}{\bibinfo{person}{Kousha Etessami}, \bibinfo{person}{Thomas
  Wilke}, {and} \bibinfo{person}{Rebecca~A. Schuller}.}
  \bibinfo{year}{2005}\natexlab{}.
\newblock \showarticletitle{Fair simulation relations, parity games, and state
  space reduction for {B\"uchi} automata}.
\newblock \bibinfo{journal}{\emph{SIAM J. Comput.}} \bibinfo{volume}{34},
  \bibinfo{number}{5} (\bibinfo{date}{May} \bibinfo{year}{2005}),
  \bibinfo{pages}{1159–1175}.
\newblock


\bibitem[Faran and Kupferman(2018)]%
        {Kupferman_lpar18}
\bibfield{author}{\bibinfo{person}{Rachel Faran} {and} \bibinfo{person}{Orna
  Kupferman}.} \bibinfo{year}{2018}\natexlab{}.
\newblock \showarticletitle{LTL with Arithmetic and its Applications in
  Reasoning about Hierarchical Systems}. In \bibinfo{booktitle}{\emph{LPAR-22.
  22nd International Conference on Logic for Programming, Artificial
  Intelligence and Reasoning}} \emph{(\bibinfo{series}{EPiC Series in
  Computing}, Vol.~\bibinfo{volume}{57})}. \bibinfo{publisher}{EasyChair},
  \bibinfo{pages}{343--362}.
\newblock
\urldef\tempurl%
\url{https://doi.org/10.29007/wpg3}
\showDOI{\tempurl}


\bibitem[Faran and Kupferman(2020)]%
        {Kupferman_sofsem20}
\bibfield{author}{\bibinfo{person}{Rachel Faran} {and} \bibinfo{person}{Orna
  Kupferman}.} \bibinfo{year}{2020}\natexlab{}.
\newblock \showarticletitle{On Synthesis of Specifications with Arithmetic}. In
  \bibinfo{booktitle}{\emph{SOFSEM 2020: Theory and Practice of Computer
  Science}} \emph{(\bibinfo{series}{LNCS ARCoSS},
  Vol.~\bibinfo{volume}{12011})}. \bibinfo{publisher}{Springer},
  \bibinfo{pages}{161--173}.
\newblock


\bibitem[Filiot et~al\mbox{.}(2009)]%
        {FJR09}
\bibfield{author}{\bibinfo{person}{E. Filiot}, \bibinfo{person}{N. Jin}, {and}
  \bibinfo{person}{J.-F. Raskin}.} \bibinfo{year}{2009}\natexlab{}.
\newblock \showarticletitle{An antichain algorithm for {LTL} realizability}. In
  \bibinfo{booktitle}{\emph{CAV'09}} \emph{(\bibinfo{series}{LNCS},
  Vol.~\bibinfo{volume}{5643})}. \bibinfo{publisher}{Springer},
  \bibinfo{pages}{263--277}.
\newblock


\bibitem[Finkbeiner and Schewe(2005)]%
        {FinSch05}
\bibfield{author}{\bibinfo{person}{B. Finkbeiner} {and} \bibinfo{person}{S.
  Schewe}.} \bibinfo{year}{2005}\natexlab{}.
\newblock \showarticletitle{Uniform distributed synthesis}. In
  \bibinfo{booktitle}{\emph{LICS}}. \bibinfo{pages}{321--330}.
\newblock


\bibitem[Fritz(2003)]%
        {Fritz03}
\bibfield{author}{\bibinfo{person}{C. Fritz}.} \bibinfo{year}{2003}\natexlab{}.
\newblock \showarticletitle{Constructing {B\"uchi} automata from linear
  temporal logic using simulation relations for alternating {B\"uchi}
  automata}. In \bibinfo{booktitle}{\emph{Proc. 8th IAA}}
  \emph{(\bibinfo{series}{LNCS}, Vol.~\bibinfo{volume}{2759})}.
  \bibinfo{publisher}{Springer}, \bibinfo{pages}{35--48}.
\newblock


\bibitem[Fritz and Wilke(2002)]%
        {FritzWilke02}
\bibfield{author}{\bibinfo{person}{C. Fritz} {and} \bibinfo{person}{T. Wilke}.}
  \bibinfo{year}{2002}\natexlab{}.
\newblock \showarticletitle{State space reductions for alternating {B\"uchi}
  automata: Quotienting by simulation equivalences}. In
  \bibinfo{booktitle}{\emph{In Proc. 22nd FST \& TCS}}
  \emph{(\bibinfo{series}{LNCS}, Vol.~\bibinfo{volume}{2556})}.
  \bibinfo{publisher}{LNCS}, \bibinfo{pages}{157--169}.
\newblock


\bibitem[Fritz and Wilke(2005)]%
        {FritzWilke05}
\bibfield{author}{\bibinfo{person}{Carsten Fritz} {and} \bibinfo{person}{Thomas
  Wilke}.} \bibinfo{year}{2005}\natexlab{}.
\newblock \showarticletitle{Simulation Relations for Alternating {B\"{u}chi}
  Automata}.
\newblock \bibinfo{journal}{\emph{Theor. Comput. Sci.}} \bibinfo{volume}{338},
  \bibinfo{number}{1–3} (\bibinfo{date}{June} \bibinfo{year}{2005}),
  \bibinfo{pages}{275--314}.
\newblock
\urldef\tempurl%
\url{https://doi.org/10.1016/j.tcs.2005.01.016}
\showURL{%
\tempurl}


\bibitem[Gabbay et~al\mbox{.}(1980)]%
        {GPSS80}
\bibfield{author}{\bibinfo{person}{D. Gabbay}, \bibinfo{person}{A. Pnueli},
  \bibinfo{person}{S. Shelah}, {and} \bibinfo{person}{J. Stavi}.}
  \bibinfo{year}{1980}\natexlab{}.
\newblock \showarticletitle{On the temporal analysis of fairness}. In
  \bibinfo{booktitle}{\emph{Proc. 7th ACM Symp. on Principles of Programming
  Languages (POPL)}}. \bibinfo{publisher}{ACM}, \bibinfo{pages}{163--173}.
\newblock


\bibitem[Gastin and Oddoux(2001)]%
        {GastinO01}
\bibfield{author}{\bibinfo{person}{Paul Gastin} {and} \bibinfo{person}{Denis
  Oddoux}.} \bibinfo{year}{2001}\natexlab{}.
\newblock \showarticletitle{Fast {LTL} to {B{\"{u}}chi} Automata Translation}.
  In \bibinfo{booktitle}{\emph{Computer Aided Verification, 13th International
  Conference, {CAV} 2001, Paris, France, July 18-22, 2001, Proceedings}}
  \emph{(\bibinfo{series}{Lecture Notes in Computer Science},
  Vol.~\bibinfo{volume}{2102})},
  \bibfield{editor}{\bibinfo{person}{G{\'{e}}rard Berry},
  \bibinfo{person}{Hubert Comon}, {and} \bibinfo{person}{Alain Finkel}} (Eds.).
  \bibinfo{publisher}{Springer}, \bibinfo{pages}{53--65}.
\newblock
\urldef\tempurl%
\url{https://doi.org/10.1007/3-540-44585-4\_6}
\showDOI{\tempurl}


\bibitem[Geatti et~al\mbox{.}(2022)]%
        {GGG22}
\bibfield{author}{\bibinfo{person}{Luca Geatti}, \bibinfo{person}{Alessandro
  Gianola}, {and} \bibinfo{person}{Nicola Gigante}.}
  \bibinfo{year}{2022}\natexlab{}.
\newblock \bibinfo{title}{Linear Temporal Logic Modulo Theories over Finite
  Traces (Extended Version)}.
\newblock
\newblock
\urldef\tempurl%
\url{https://doi.org/10.48550/ARXIV.2204.13693}
\showDOI{\tempurl}
\newblock
\shownote{Extended version of paper at the 31st International Joint Conference
  on Artificial Intelligence (IJCAI-ECAI 2022)}.


\bibitem[Gerth et~al\mbox{.}(1995)]%
        {GerthPVW95}
\bibfield{author}{\bibinfo{person}{Rob Gerth}, \bibinfo{person}{Doron~A.
  Peled}, \bibinfo{person}{Moshe~Y. Vardi}, {and} \bibinfo{person}{Pierre
  Wolper}.} \bibinfo{year}{1995}\natexlab{}.
\newblock \showarticletitle{Simple on-the-fly automatic verification of linear
  temporal logic}. In \bibinfo{booktitle}{\emph{Protocol Specification, Testing
  and Verification XV, Proceedings of the Fifteenth {IFIP} {WG6.1}
  International Symposium on Protocol Specification, Testing and Verification,
  Warsaw, Poland, June 1995}} \emph{(\bibinfo{series}{{IFIP} Conference
  Proceedings}, Vol.~\bibinfo{volume}{38})},
  \bibfield{editor}{\bibinfo{person}{Piotr Dembinski} {and}
  \bibinfo{person}{Marek Sredniawa}} (Eds.). \bibinfo{publisher}{Chapman {\&}
  Hall}, \bibinfo{pages}{3--18}.
\newblock


\bibitem[Grumberg et~al\mbox{.}(2012)]%
        {Kupferman_atva12}
\bibfield{author}{\bibinfo{person}{Orna Grumberg}, \bibinfo{person}{Orna
  Kupferman}, {and} \bibinfo{person}{Sarai Sheinvald}.}
  \bibinfo{year}{2012}\natexlab{}.
\newblock \showarticletitle{Model Checking Systems and Specifications with
  Parameterized Atomic Propositions}. In \bibinfo{booktitle}{\emph{Automated
  Technology for Verification and Analysis (ATVA'12)}}
  \emph{(\bibinfo{series}{LNCS}, Vol.~\bibinfo{volume}{7561})}.
  \bibinfo{publisher}{Springer}, \bibinfo{pages}{122--136}.
\newblock
\urldef\tempurl%
\url{https://doi.org/10.1007/978-3-642-33386-6\_11}
\showDOI{\tempurl}


\bibitem[Grumberg et~al\mbox{.}(2013)]%
        {Kupferman_atva13}
\bibfield{author}{\bibinfo{person}{Orna Grumberg}, \bibinfo{person}{Orna
  Kupferman}, {and} \bibinfo{person}{Sarai Sheinvald}.}
  \bibinfo{year}{2013}\natexlab{}.
\newblock \showarticletitle{An Automata-Theoretic Approach to Reasoning about
  Parameterized Systems and Specifications}. In
  \bibinfo{booktitle}{\emph{Automated Technology for Verification and Analysis
  (ATVA'13)}} \emph{(\bibinfo{series}{LNCS}, Vol.~\bibinfo{volume}{8172})}.
  \bibinfo{publisher}{Springer}, \bibinfo{pages}{397--411}.
\newblock
\urldef\tempurl%
\url{https://doi.org/10.1007/978-3-319-02444-8\_28}
\showDOI{\tempurl}


\bibitem[Gurumurthy et~al\mbox{.}(2002)]%
        {GBS02}
\bibfield{author}{\bibinfo{person}{Sankar Gurumurthy},
  \bibinfo{person}{Roderick Bloem}, {and} \bibinfo{person}{Fabio Somenzi}.}
  \bibinfo{year}{2002}\natexlab{}.
\newblock \showarticletitle{Fair Simulation Minimization}. In
  \bibinfo{booktitle}{\emph{Computer Aided Verification}}
  \emph{(\bibinfo{series}{LNCS}, Vol.~\bibinfo{volume}{2404})},
  \bibfield{editor}{\bibinfo{person}{Ed~Brinksma} {and}
  \bibinfo{person}{Kim~Guldstrand Larsen}} (Eds.).
  \bibinfo{publisher}{Springer}, \bibinfo{pages}{610--623}.
\newblock
\urldef\tempurl%
\url{https://doi.org/10.1007/3-540-45657-0\_51}
\showURL{%
\tempurl}


\bibitem[Havelund and Ro{\c{s}}u(2001)]%
        {HR01}
\bibfield{author}{\bibinfo{person}{Klaus Havelund} {and}
  \bibinfo{person}{Grigore Ro{\c{s}}u}.} \bibinfo{year}{2001}\natexlab{}.
\newblock \showarticletitle{Monitoring Programs using Rewriting}. In
  \bibinfo{booktitle}{\emph{International Conference on Automated Software
  Engineering (ASE’01)}}. \bibinfo{publisher}{IEEE}.
\newblock


\bibitem[Hol{\'i}k et~al\mbox{.}(2018)]%
        {HLSVV18}
\bibfield{author}{\bibinfo{person}{Luk{\'a}{\v{s}} Hol{\'i}k},
  \bibinfo{person}{Ond{\v{r}}ej Leng{\'a}l}, \bibinfo{person}{Juraj
  S{\'i}{\v{c}}}, \bibinfo{person}{Margus Veanes}, {and}
  \bibinfo{person}{Tom{\'a}{\v{s}} Vojnar}.} \bibinfo{year}{2018}\natexlab{}.
\newblock \showarticletitle{Simulation Algorithms for Symbolic Automata}. In
  \bibinfo{booktitle}{\emph{Automated Technology for Verification and
  Analysis}}, \bibfield{editor}{\bibinfo{person}{Shuvendu~K. Lahiri} {and}
  \bibinfo{person}{Chao Wang}} (Eds.). \bibinfo{publisher}{Springer},
  \bibinfo{pages}{109--125}.
\newblock


\bibitem[Hopcroft(1971)]%
        {Hop71}
\bibfield{author}{\bibinfo{person}{John Hopcroft}.}
  \bibinfo{year}{1971}\natexlab{}.
\newblock \showarticletitle{An $n$log$n$ algorithm for minimizing states in a
  finite automaton}. In \bibinfo{booktitle}{\emph{Theory of machines and
  computations, Proc. Internat. Sympos., Technion, Haifa, 1971}},
  \bibfield{editor}{\bibinfo{person}{Z.~Kohavi}} (Ed.).
  \bibinfo{pages}{189--196}.
\newblock


\bibitem[Kuperberg et~al\mbox{.}(2021)]%
        {Pous21}
\bibfield{author}{\bibinfo{person}{Denis Kuperberg}, \bibinfo{person}{Laureline
  Pinault}, {and} \bibinfo{person}{Damien Pous}.}
  \bibinfo{year}{2021}\natexlab{}.
\newblock \showarticletitle{Coinductive Algorithms for {B{\"u}chi} Automata}.
\newblock \bibinfo{journal}{\emph{Fundam. Informaticae}} \bibinfo{volume}{180},
  \bibinfo{number}{4} (\bibinfo{year}{2021}), \bibinfo{pages}{351--373}.
\newblock
\urldef\tempurl%
\url{https://doi.org/10.3233/FI-2021-2046}
\showURL{%
\tempurl}


\bibitem[Kupferman(2018)]%
        {Kupf18}
\bibfield{author}{\bibinfo{person}{Orna Kupferman}.}
  \bibinfo{year}{2018}\natexlab{}.
\newblock \bibinfo{booktitle}{\emph{Automata Theory and Model Checking}}.
\newblock \bibinfo{publisher}{Springer}, Chapter~4, \bibinfo{pages}{107--151}.
\newblock
\urldef\tempurl%
\url{https://doi.org/10.1007/978-3-319-10575-8\_4}
\showDOI{\tempurl}


\bibitem[Kupferman and Pnueli(1995)]%
        {KP95}
\bibfield{author}{\bibinfo{person}{Orna Kupferman} {and} \bibinfo{person}{Amir
  Pnueli}.} \bibinfo{year}{1995}\natexlab{}.
\newblock \showarticletitle{Once and for all}.
\newblock \bibinfo{journal}{\emph{Proceedings of Tenth Annual IEEE Symposium on
  Logic in Computer Science (LICS'95)}} (\bibinfo{year}{1995}),
  \bibinfo{pages}{25--35}.
\newblock


\bibitem[Kupferman et~al\mbox{.}(2012)]%
        {KPV12}
\bibfield{author}{\bibinfo{person}{Orna Kupferman}, \bibinfo{person}{Amir
  Pnueli}, {and} \bibinfo{person}{Moshe~Y. Vardi}.}
  \bibinfo{year}{2012}\natexlab{}.
\newblock \showarticletitle{Once and for All}.
\newblock \bibinfo{journal}{\emph{J. Comput. Syst. Sci.}} \bibinfo{volume}{78},
  \bibinfo{number}{3} (\bibinfo{date}{May} \bibinfo{year}{2012}),
  \bibinfo{pages}{981--996}.
\newblock
\urldef\tempurl%
\url{https://doi.org/10.1016/j.jcss.2011.08.006}
\showURL{%
\tempurl}


\bibitem[Kupferman and Rosenberg(2010)]%
        {KupfermanR10}
\bibfield{author}{\bibinfo{person}{Orna Kupferman} {and} \bibinfo{person}{Adin
  Rosenberg}.} \bibinfo{year}{2010}\natexlab{}.
\newblock \showarticletitle{The Blowup in Translating {LTL} to Deterministic
  Automata}. In \bibinfo{booktitle}{\emph{Model Checking and Artificial
  Intelligence - 6th International Workshop, MoChArt 2010, Atlanta, GA, USA,
  July 11, 2010, Revised Selected and Invited Papers}}
  \emph{(\bibinfo{series}{Lecture Notes in Computer Science},
  Vol.~\bibinfo{volume}{6572})}, \bibfield{editor}{\bibinfo{person}{Ron van~der
  Meyden} {and} \bibinfo{person}{Jan{-}Georg Smaus}} (Eds.).
  \bibinfo{publisher}{Springer}, \bibinfo{pages}{85--94}.
\newblock
\urldef\tempurl%
\url{https://doi.org/10.1007/978-3-642-20674-0\_6}
\showDOI{\tempurl}


\bibitem[Kurshan(1987)]%
        {Kurshan87}
\bibfield{author}{\bibinfo{person}{R.~P. Kurshan}.}
  \bibinfo{year}{1987}\natexlab{}.
\newblock \showarticletitle{Complementing Deterministic {B\"uchi} Automata in
  Polynomial Time}.
\newblock \bibinfo{journal}{\emph{J. Comput. System Sci.}}
  \bibinfo{volume}{35} (\bibinfo{year}{1987}), \bibinfo{pages}{59--71}.
\newblock


\bibitem[Laroussinie et~al\mbox{.}(2010)]%
        {Laroussinie2010CountingL}
\bibfield{author}{\bibinfo{person}{François Laroussinie},
  \bibinfo{person}{Antoine Meyer}, {and} \bibinfo{person}{Eudes Petonnet}.}
  \bibinfo{year}{2010}\natexlab{}.
\newblock \showarticletitle{Counting {LTL}}.
\newblock \bibinfo{journal}{\emph{2010 17th International Symposium on Temporal
  Representation and Reasoning}} (\bibinfo{year}{2010}),
  \bibinfo{pages}{51--58}.
\newblock


\bibitem[Laroussinie and Schnoebelen(2002)]%
        {LS02}
\bibfield{author}{\bibinfo{person}{N.~M.~F. Laroussinie} {and}
  \bibinfo{person}{P. Schnoebelen}.} \bibinfo{year}{2002}\natexlab{}.
\newblock \showarticletitle{Temporal logic with forgettable past}. In
  \bibinfo{booktitle}{\emph{Proc. 17th IEEE Symp. on Logic in Computer Science
  (LICS)}}. \bibinfo{publisher}{IEEE}, \bibinfo{pages}{383--392}.
\newblock


\bibitem[Lichtenstein et~al\mbox{.}(1985)]%
        {LPZ85}
\bibfield{author}{\bibinfo{person}{O. Lichtenstein}, \bibinfo{person}{A.
  Pnueli}, {and} \bibinfo{person}{L. Zuck}.} \bibinfo{year}{1985}\natexlab{}.
\newblock \showarticletitle{The glory of the past}. In
  \bibinfo{booktitle}{\emph{Logics of Programs}} \emph{(\bibinfo{series}{LNCS},
  Vol.~\bibinfo{volume}{193})}. \bibinfo{publisher}{Springer},
  \bibinfo{pages}{196--218}.
\newblock


\bibitem[Mayr and Clemente(2013)]%
        {MC13}
\bibfield{author}{\bibinfo{person}{Richard Mayr} {and} \bibinfo{person}{Lorenzo
  Clemente}.} \bibinfo{year}{2013}\natexlab{}.
\newblock \showarticletitle{Advanced Automata Minimization}.
\newblock \bibinfo{journal}{\emph{SIGPLAN Not. POPL'13}} \bibinfo{volume}{48},
  \bibinfo{number}{1} (\bibinfo{date}{Jan} \bibinfo{year}{2013}),
  \bibinfo{pages}{63–74}.
\newblock
\urldef\tempurl%
\url{https://doi.org/10.1145/2480359.2429079}
\showURL{%
\tempurl}


\bibitem[McNaughton(1966)]%
        {McNaughton66}
\bibfield{author}{\bibinfo{person}{Robert McNaughton}.}
  \bibinfo{year}{1966}\natexlab{}.
\newblock \showarticletitle{Testing and Generating Infinite Sequences by a
  Finite Automaton}.
\newblock \bibinfo{journal}{\emph{Inf. Control.}} \bibinfo{volume}{9},
  \bibinfo{number}{5} (\bibinfo{year}{1966}), \bibinfo{pages}{521--530}.
\newblock
\urldef\tempurl%
\url{https://doi.org/10.1016/S0019-9958(66)80013-X}
\showDOI{\tempurl}


\bibitem[Miyano and Hayashi(1984)]%
        {MH84}
\bibfield{author}{\bibinfo{person}{S. Miyano} {and} \bibinfo{person}{T.
  Hayashi}.} \bibinfo{year}{1984}\natexlab{}.
\newblock \showarticletitle{Alternating finite automata on $\omega$-words}.
\newblock \bibinfo{journal}{\emph{Theoretical Computer Science}}
  \bibinfo{volume}{32} (\bibinfo{year}{1984}), \bibinfo{pages}{321--330}.
\newblock


\bibitem[Muller et~al\mbox{.}(1986)]%
        {MullerSS86}
\bibfield{author}{\bibinfo{person}{D. Muller}, \bibinfo{person}{A. Saoudi},
  {and} \bibinfo{person}{P. Schupp}.} \bibinfo{year}{1986}\natexlab{}.
\newblock \showarticletitle{Alternating automata, the weak monadic theory of
  the tree and its complexity}. In \bibinfo{booktitle}{\emph{ICALP}}.
\newblock


\bibitem[Muller et~al\mbox{.}(1988)]%
        {MullerSS88}
\bibfield{author}{\bibinfo{person}{D. Muller}, \bibinfo{person}{A. Saoudi},
  {and} \bibinfo{person}{P. Schupp}.} \bibinfo{year}{1988}\natexlab{}.
\newblock \showarticletitle{Weak alternating automata give a simple explanation
  of why most temporal and dynamic logics are decidable in exponential time}.
  In \bibinfo{booktitle}{\emph{[1988] Proceedings. Third Annual Symposium on
  Logic in Computer Science}}. \bibinfo{pages}{422--427}.
\newblock
\urldef\tempurl%
\url{https://doi.org/10.1109/LICS.1988.5139}
\showDOI{\tempurl}


\bibitem[Paige and Tarjan(1987)]%
        {PT87}
\bibfield{author}{\bibinfo{person}{Robert Paige} {and}
  \bibinfo{person}{Robert~E. Tarjan}.} \bibinfo{year}{1987}\natexlab{}.
\newblock \showarticletitle{Three Partition Refinement Algorithms}.
\newblock \bibinfo{journal}{\emph{SIAM J. Comput.}} \bibinfo{volume}{16},
  \bibinfo{number}{6} (\bibinfo{year}{1987}), \bibinfo{pages}{973--989}.
\newblock


\bibitem[Piterman and Pnueli(2018)]%
        {Piterman2018}
\bibfield{author}{\bibinfo{person}{Nir Piterman} {and} \bibinfo{person}{Amir
  Pnueli}.} \bibinfo{year}{2018}\natexlab{}.
\newblock \showarticletitle{Temporal Logic and Fair Discrete Systems}.
\newblock In \bibinfo{booktitle}{\emph{Handbook of Model Checking}},
  \bibfield{editor}{\bibinfo{person}{Edmund~M. Clarke},
  \bibinfo{person}{Thomas~A. Henzinger}, \bibinfo{person}{Helmut Veith}, {and}
  \bibinfo{person}{Roderick Bloem}} (Eds.). \bibinfo{publisher}{Springer},
  \bibinfo{pages}{27--73}.
\newblock
\urldef\tempurl%
\url{https://doi.org/10.1007/978-3-319-10575-8\_2}
\showURL{%
\tempurl}


\bibitem[Pnueli(1977)]%
        {Pnueli77}
\bibfield{author}{\bibinfo{person}{Amir Pnueli}.}
  \bibinfo{year}{1977}\natexlab{}.
\newblock \showarticletitle{The Temporal Logic of Programs}. In
  \bibinfo{booktitle}{\emph{18th Annual Symposium on Foundations of Computer
  Science, Providence, Rhode Island, USA, 31 October - 1 November 1977}}.
  \bibinfo{publisher}{{IEEE} Computer Society}, \bibinfo{pages}{46--57}.
\newblock
\urldef\tempurl%
\url{https://doi.org/10.1109/SFCS.1977.32}
\showDOI{\tempurl}


\bibitem[Rabin(1970)]%
        {Rabin70}
\bibfield{author}{\bibinfo{person}{M.O. Rabin}.}
  \bibinfo{year}{1970}\natexlab{}.
\newblock \showarticletitle{Weakly definable relations and special automata}.
  In \bibinfo{booktitle}{\emph{Proc. Symp. Math. Logic and Foundations of Set
  Theory}}, \bibfield{editor}{\bibinfo{person}{Y.~Bar-Hilel}} (Ed.).
  \bibinfo{publisher}{North Holland}, \bibinfo{pages}{1--23}.
\newblock


\bibitem[Sen et~al\mbox{.}(2003)]%
        {SRA03}
\bibfield{author}{\bibinfo{person}{Koushik Sen}, \bibinfo{person}{Grigore
  Ro{\c{s}}u}, {and} \bibinfo{person}{Gul Agha}.}
  \bibinfo{year}{2003}\natexlab{}.
\newblock \showarticletitle{Generating Optimal Linear Temporal Logic Monitors
  by Coinduction}. In \bibinfo{booktitle}{\emph{Advances in Computing Science
  -- ASIAN 2003}} \emph{(\bibinfo{series}{LNCS}, Vol.~\bibinfo{volume}{2896})},
  \bibfield{editor}{\bibinfo{person}{Vijay~A. Saraswat}} (Ed.).
  \bibinfo{publisher}{Springer}, \bibinfo{pages}{260--275}.
\newblock


\bibitem[Shannon(1949)]%
        {Shannon49}
\bibfield{author}{\bibinfo{person}{Claude.~E. Shannon}.}
  \bibinfo{year}{1949}\natexlab{}.
\newblock \showarticletitle{The synthesis of two-terminal switching circuits}.
\newblock \bibinfo{journal}{\emph{The Bell System Technical Journal}}
  \bibinfo{volume}{28}, \bibinfo{number}{1} (\bibinfo{year}{1949}),
  \bibinfo{pages}{59--98}.
\newblock
\urldef\tempurl%
\url{https://doi.org/10.1002/j.1538-7305.1949.tb03624.x}
\showDOI{\tempurl}


\bibitem[Somenzi and Bloem(2000)]%
        {cavSomenziB00}
\bibfield{author}{\bibinfo{person}{Fabio Somenzi} {and}
  \bibinfo{person}{Roderick Bloem}.} \bibinfo{year}{2000}\natexlab{}.
\newblock \showarticletitle{Efficient {B{\"{u}}chi} Automata from {LTL}
  Formulae}. In \bibinfo{booktitle}{\emph{Computer Aided Verification, 12th
  International Conference, {CAV} 2000, Chicago, IL, USA, July 15-19, 2000,
  Proceedings}} \emph{(\bibinfo{series}{Lecture Notes in Computer Science},
  Vol.~\bibinfo{volume}{1855})}, \bibfield{editor}{\bibinfo{person}{E.~Allen
  Emerson} {and} \bibinfo{person}{A.~Prasad Sistla}} (Eds.).
  \bibinfo{publisher}{Springer}, \bibinfo{pages}{248--263}.
\newblock
\urldef\tempurl%
\url{https://doi.org/10.1007/10722167\_21}
\showDOI{\tempurl}


\bibitem[Stanford and Veanes(2023)]%
        {SV23}
\bibfield{author}{\bibinfo{person}{Caleb Stanford} {and}
  \bibinfo{person}{Margus Veanes}.} \bibinfo{year}{2023}\natexlab{}.
\newblock \bibinfo{title}{Incremental Dead State Detection in Logarithmic
  Time}.
\newblock
\newblock
\urldef\tempurl%
\url{https://doi.org/10.48550/arXiv.2301.05308}
\showDOI{\tempurl}
\newblock
\shownote{Extended and preliminary version of paper in CAV'2023}.


\bibitem[Stanford et~al\mbox{.}(2021)]%
        {StanfordVB21}
\bibfield{author}{\bibinfo{person}{Caleb Stanford}, \bibinfo{person}{Margus
  Veanes}, {and} \bibinfo{person}{Nikolaj~S. Bj{\o}rner}.}
  \bibinfo{year}{2021}\natexlab{}.
\newblock \showarticletitle{Symbolic Boolean derivatives for efficiently
  solving extended regular expression constraints}. In
  \bibinfo{booktitle}{\emph{{PLDI} '21: 42nd {ACM} {SIGPLAN} International
  Conference on Programming Language Design and Implementation, Virtual Event,
  Canada, June 20-25, 2021}}, \bibfield{editor}{\bibinfo{person}{Stephen~N.
  Freund} {and} \bibinfo{person}{Eran Yahav}} (Eds.).
  \bibinfo{publisher}{{ACM}}, \bibinfo{pages}{620--635}.
\newblock
\urldef\tempurl%
\url{https://doi.org/10.1145/3453483.3454066}
\showDOI{\tempurl}


\bibitem[Sulzmann and Thiemann(2018)]%
        {Thiemann18}
\bibfield{author}{\bibinfo{person}{Martin Sulzmann} {and}
  \bibinfo{person}{Peter Thiemann}.} \bibinfo{year}{2018}\natexlab{}.
\newblock \showarticletitle{LTL Semantic Tableaux and Alternating
  $\omega$-automata via Linear Factors}. In
  \bibinfo{booktitle}{\emph{Theoretical Aspects of Computing -- ICTAC 2018}}
  \emph{(\bibinfo{series}{LNCS}, Vol.~\bibinfo{volume}{11187})},
  \bibfield{editor}{\bibinfo{person}{Bernd Fischer} {and}
  \bibinfo{person}{Tarmo Uustalu}} (Eds.). \bibinfo{publisher}{Springer},
  \bibinfo{pages}{11--34}.
\newblock
\urldef\tempurl%
\url{https://doi.org/10.1007/978-3-030-02508-3\_2}
\showDOI{\tempurl}


\bibitem[Thomas(1990)]%
        {ThomasAutomataOnInfiniteObjects}
\bibfield{author}{\bibinfo{person}{Wolfgang Thomas}.}
  \bibinfo{year}{1990}\natexlab{}.
\newblock \showarticletitle{Automata on Infinite Objects}.
\newblock In \bibinfo{booktitle}{\emph{Formal Models and Semantics}},
  \bibfield{editor}{\bibinfo{person}{Jan {Van Leeuwen}}} (Ed.).
  \bibinfo{publisher}{Elsevier}, Chapter~4, \bibinfo{pages}{133--191}.
\newblock
\urldef\tempurl%
\url{https://doi.org/10.1016/B978-0-444-88074-1.50009-3}
\showDOI{\tempurl}


\bibitem[Tourneur and Duret-Lutz(2017)]%
        {tourneur.17.misc}
\bibfield{author}{\bibinfo{person}{Vincent Tourneur} {and}
  \bibinfo{person}{Alexandre Duret-Lutz}.} \bibinfo{year}{2017}\natexlab{}.
\newblock \bibinfo{title}{Fixes for two equations in \emph{The Blow-Up in
  Translating LTL to Deterministic Automata} by {K}upferman and {R}osenberg}.
\newblock \bibinfo{howpublished}{Online PDF.}.
\newblock
\newblock
\shownote{The original authors have reviewed the fixes and agreed with them.}.


\bibitem[Tsay and Vardi(2021)]%
        {TsayV21}
\bibfield{author}{\bibinfo{person}{Yih{-}Kuen Tsay} {and}
  \bibinfo{person}{Moshe~Y. Vardi}.} \bibinfo{year}{2021}\natexlab{}.
\newblock \showarticletitle{From Linear Temporal Logics to {B{\"{u}}chi}
  Automata: The Early and Simple Principle}. In \bibinfo{booktitle}{\emph{Model
  Checking, Synthesis, and Learning - Essays Dedicated to Bengt Jonsson on The
  Occasion of His 60th Birthday}} \emph{(\bibinfo{series}{Lecture Notes in
  Computer Science}, Vol.~\bibinfo{volume}{13030})},
  \bibfield{editor}{\bibinfo{person}{Ernst{-}R{\"{u}}diger Olderog},
  \bibinfo{person}{Bernhard Steffen}, {and} \bibinfo{person}{Wang Yi}} (Eds.).
  \bibinfo{publisher}{Springer}, \bibinfo{pages}{8--40}.
\newblock
\urldef\tempurl%
\url{https://doi.org/10.1007/978-3-030-91384-7\_2}
\showDOI{\tempurl}


\bibitem[Vardi(1994)]%
        {Vardi94}
\bibfield{author}{\bibinfo{person}{Moshe~Y. Vardi}.}
  \bibinfo{year}{1994}\natexlab{}.
\newblock \showarticletitle{Nontraditional applications of automata theory}. In
  \bibinfo{booktitle}{\emph{Theoretical Aspects of Computer Software}},
  \bibfield{editor}{\bibinfo{person}{Masami Hagiya} {and}
  \bibinfo{person}{John~C. Mitchell}} (Eds.). \bibinfo{publisher}{Springer
  Berlin Heidelberg}, \bibinfo{address}{Berlin, Heidelberg},
  \bibinfo{pages}{575--597}.
\newblock
\showISBNx{978-3-540-48383-0}


\bibitem[Vardi(1995)]%
        {Vardi95LTL}
\bibfield{author}{\bibinfo{person}{Moshe~Y. Vardi}.}
  \bibinfo{year}{1995}\natexlab{}.
\newblock \showarticletitle{An Automata-Theoretic Approach to Linear Temporal
  Logic}. In \bibinfo{booktitle}{\emph{Logics for Concurrency - Structure
  versus Automata (8th Banff Higher Order Workshop, Banff, Canada, August 27 -
  September 3, 1995, Proceedings)}} \emph{(\bibinfo{series}{Lecture Notes in
  Computer Science}, Vol.~\bibinfo{volume}{1043})},
  \bibfield{editor}{\bibinfo{person}{Faron Moller} {and}
  \bibinfo{person}{Graham~M. Birtwistle}} (Eds.).
  \bibinfo{publisher}{Springer}, \bibinfo{pages}{238--266}.
\newblock
\urldef\tempurl%
\url{https://doi.org/10.1007/3-540-60915-6\_6}
\showDOI{\tempurl}


\bibitem[Vardi and Wolper(1986)]%
        {VardiW86}
\bibfield{author}{\bibinfo{person}{Moshe~Y. Vardi} {and}
  \bibinfo{person}{Pierre Wolper}.} \bibinfo{year}{1986}\natexlab{}.
\newblock \showarticletitle{An Automata-Theoretic Approach to Automatic Program
  Verification (Preliminary Report)}. In \bibinfo{booktitle}{\emph{Proceedings
  of the Symposium on Logic in Computer Science {(LICS} '86), Cambridge,
  Massachusetts, USA, June 16-18, 1986}}. \bibinfo{publisher}{{IEEE} Computer
  Society}, \bibinfo{pages}{332--344}.
\newblock


\bibitem[Wolper(1981)]%
        {Wolper81}
\bibfield{author}{\bibinfo{person}{Pierre Wolper}.}
  \bibinfo{year}{1981}\natexlab{}.
\newblock \showarticletitle{Temporal Logic Can Be More Expressive}. In
  \bibinfo{booktitle}{\emph{22nd Annual Symposium on Foundations of Computer
  Science, Nashville, Tennessee, USA, 28-30 October 1981}}.
  \bibinfo{publisher}{{IEEE} Computer Society}, \bibinfo{pages}{340--348}.
\newblock
\urldef\tempurl%
\url{https://doi.org/10.1109/SFCS.1981.44}
\showDOI{\tempurl}


\bibitem[Wolper(1983)]%
        {Wolper83}
\bibfield{author}{\bibinfo{person}{Pierre Wolper}.}
  \bibinfo{year}{1983}\natexlab{}.
\newblock \showarticletitle{Temporal logic can be more expressive}.
\newblock \bibinfo{journal}{\emph{Information and Control}}
  \bibinfo{volume}{56}, \bibinfo{number}{1} (\bibinfo{year}{1983}),
  \bibinfo{pages}{72--99}.
\newblock
\urldef\tempurl%
\url{https://doi.org/10.1016/S0019-9958(83)80051-5}
\showURL{%
\tempurl}


\bibitem[Wolper(1985)]%
        {Wolper85}
\bibfield{author}{\bibinfo{person}{P. Wolper}.}
  \bibinfo{year}{1985}\natexlab{}.
\newblock \showarticletitle{The tableau method for temporal logic: An
  overview}.
\newblock \bibinfo{journal}{\emph{Logique et Analyse}}
  \bibinfo{volume}{110--111} (\bibinfo{year}{1985}), \bibinfo{pages}{119--136}.
\newblock


\bibitem[Wulf et~al\mbox{.}(2008)]%
        {WulfDMR08}
\bibfield{author}{\bibinfo{person}{Martin~De Wulf}, \bibinfo{person}{Laurent
  Doyen}, \bibinfo{person}{Nicolas Maquet}, {and}
  \bibinfo{person}{Jean{-}Fran{\c{c}}ois Raskin}.}
  \bibinfo{year}{2008}\natexlab{}.
\newblock \showarticletitle{Antichains: Alternative Algorithms for {LTL}
  Satisfiability and Model-Checking}. In \bibinfo{booktitle}{\emph{Tools and
  Algorithms for the Construction and Analysis of Systems, 14th International
  Conference, {TACAS} 2008, Held as Part of the Joint European Conferences on
  Theory and Practice of Software, {ETAPS} 2008, Budapest, Hungary, March
  29-April 6, 2008. Proceedings}} \emph{(\bibinfo{series}{Lecture Notes in
  Computer Science}, Vol.~\bibinfo{volume}{4963})},
  \bibfield{editor}{\bibinfo{person}{C.~R. Ramakrishnan} {and}
  \bibinfo{person}{Jakob Rehof}} (Eds.). \bibinfo{publisher}{Springer},
  \bibinfo{pages}{63--77}.
\newblock
\urldef\tempurl%
\url{https://doi.org/10.1007/978-3-540-78800-3\_6}
\showDOI{\tempurl}


\end{thebibliography}
%%% -*-BibTeX-*-
%%% Do NOT edit. File created by BibTeX with style
%%% ACM-Reference-Format-Journals [18-Jan-2012].

%----------------------------------------------------------------------
\newpage

\appendix

\section{Proofs}

%---------------------------------------------------------------

\begin{proof}[Proof of Corollary~\ref{cor:Product}]
Let $\M=\N[1]\land\N[2]=(\A,Q,\Qi,\tf,F)$ as
the $\ABA$ in the algorithm and let $\N=\AElim{\M}$.
We use that all leaves of $\tf_1$ and $\tf_2$ are disjunctions of states, i.e.,
in INF, all leaves are sets whose members are singleton sets.
Initially $S_0 \leftarrow \Fin{\DNF{\Qi[1]\inter\Qi[2]}}$.
Therefore $S_0$ consists of states of the form
$\Fin{\setset{q^0_1,q^0_2}}=\Pair{\set{q^0_1,q^0_2}\setminus F}{\set{q^0_1,q^0_2}\cap F}$,
where $q^0_1\in \Qi[1]$ and $q^0_2\in \Qi[2]$.
The invariant $U\cap F=\emptyset$ holds.
We now consider $q=\Pair{U}{V}$ in the algorithm where $f$ is computed and
show that all leaves of $f$ are sets of states of one of the four cases with
$q_1\in Q_1$ and $q_2\in Q_2$:
\[
1)\, q {=}\Pair{\set{q_1,q_2}}{\emptyset} \bor
2)\, q {=}\Pair{\emptyset}{\set{q_1,q_2}} \bor
3)\, q {=}\Pair{\set{q_1}}{\set{q_2}} \bor
4)\, q {=} \Pair{\set{q_2}}{\set{q_1}}.
\]

\noindent
\emph{Cases 1 and 2}:
$f = \tfINF(\set{q_1,q_2})\aprod\set{\emptyset} =
g\aprod\set{\emptyset}$ where
$g=\INF{\tf_1(q_1)\land\tf_2(q_2)}$ 
and so in $g$ each leaf $\bvarphi$ is a set of sets
of the form $X = \{p_1,p_2\}$ for some $p_1\in Q_1$ and $p_2\in Q_2$.
Then $\Pair{X\setminus F}{\emptyset\cup(X\cap F)}$ clearly corresonds to one of the four cases.

\noindent
\emph{Cases 3 and 4}: Let $\{i,j\}=\{1,2\}$. Then
$f= \tfINF(q_i)\aprod\tfINF(q_j)= \INF{\tf_i(q_i)}\aprod\INF{\tf_j(q_j)}$
where each member $X$ of a leaf in $\INF{\tf_i(q_i)}$ is a singleton $\set{p_i}$ with $p_i\in Q_i$,
and each member $Y$ of a leaf in $\INF{\tf_j(q_j)}$ is a singleton $\set{p_j}$ with $p_j\in Q_j$.
It follows that $\Pair{X{\setminus} F}{Y\cup(X\cap F)}$ belongs to one of the cases 2, 3 or 4,
because if $p_i\in F_i$ then $X{\setminus}F=\emptyset$ and $X\cap F = X$ else
$X{\setminus} F= X$ and $X\cap F = \emptyset$.

In each case the invariant $(X{\setminus}F)\cap F=\emptyset$ is clearly maintained.
\end{proof}

%---------------------------------------------------------------

\begin{proof}[Proof of Theorem~\ref{thm:FA}]
Let $I$ be a
finite subset of $\PA$.  In the following, it is convenient to
identify each minterm $\alpha\in\Minterms{I}$ by the unique subset
$I_{\alpha}$ of $I$ such that $\alpha\equiv {{\textrm{\Large\&}_{\gamma\in I_\alpha}}
\gamma\andA{\textrm{\Large\&}_{\gamma\in I\setminus I_\alpha}}\notA\gamma}$.
Let $\CondOf{\phi}$ denote the set of all predicates in $\PA$
that occur in $\phi\in\LTL[\A]$.

We first translate $\phi$ into classical LTL as follows.
Let $I=\CondOf{\phi}$,
%\footnote{The converse, that all
%members of $I$ occur in $\phi$ is obvious, whereas $\tf_\phi$ may make
%some conditions obsolete through cleaning, as is the case in
%Example~\ref{ex:smt2}, so assume $I$ is computed here \emph{before} any cleaning takes
%place.}
and for each $\iota\in I$ let $p_\iota$ be a fresh proposition in
$P=\{p_\iota\}_{\iota\in I}$. Let $\Sigma=\Minterms{I}$.  Map each
minterm $\alpha\in \Sigma$ uniquely to $\pset{\alpha}=\{p_\iota\in
P\mid \iota\in I_\alpha\}$.
Thus, $p_\iota\in \pset{\alpha}$ iff $\iota$ is a positive literal of $\alpha$.

Let $\pconj{\alpha}$ denote the conjunction
$\bigwedge_{p\in\pset{\alpha}}p\land\bigwedge_{p\in
P\setminus\pset{\alpha}}\lnot p$ that encodes the minterm $\alpha$
propositionally.
Now, for all
$\iota\in I$, let
$\pform{\iota}= \bigvee_{\alpha\in \Sigma,\alpha\IMP\iota}\pconj{\alpha}$
denote the propositional encoding of $\iota$,
where $\alpha\IMP\iota$ means that $\den{\alpha}\subseteq\den{\iota}$
that is equivalent to $\alpha{\andA}\iota\nequiv\bot$ by definition of minterms.
Observe that
$\pform{\iota}=\bot$ precisely when $\iota$ is \emph{unsatisfiable} in $\A$.

For clarity let $\models_P$ stand for the classical semantics of LTL over $P$,
while $\models$ is defined in (\ref{eq:M1}--\ref{eq:M7}).
As earlier, 
for all $a\in\D$ let $\mt{a}\in\Sigma$ denote the minterm such that $a\in\den{\mt{a}}$.
It follows that
\[
\pset{\mt{a}} \models_P
\pform{\iota}\;\IFF\; (\mt{a}\IMP\iota)
\;\IFF\; a\in\den{\iota} \;\IFF\; a\models\iota
\]
For $a\in\D$ and $w\in\Do$, 
lift $\mt{a}$ to $\mt{w}\in\Sigma^\omega$
and  lift
$\pset{\mt{\alpha}}$ to $\pset{\mt{w}}\in(2^P)^\omega$.
Finally, let $\pform{\phi}$ be the conversion of $\phi$ to classical LTL
by replacing each predicate $\iota\in I$ in $\phi$ by $\pform{\iota}$.
Then
\[
w \models \phi \;\IFF\; \pset{\mt{w}}\models_P\pform{\phi} \;\stackrel{\textrm{V95}}{\IFF}\;
\pset{\mt{w}} \in \Lang{\M[\pform{\phi}]}
\;\stackrel{(\star)}{\IFF}\;
\mt{w}\in \Lang{\mt{\M[\phi]}}
\;\stackrel{\textrm{Lma~\ref{lma:mt}}}{\IFF}\;
w\in \Lang{\M[\phi]}
\]
where V95 is \cite[Theorem~24]{Kupf18}; $(\star)$ follows on one
hand from renaming of the alphabet symbols, each symbol
$\pset{\alpha}\in 2^P$ is uniformly renamed to the symbol
$\alpha\in\Sigma$; on the other hand, $\mt{\M[\phi]}$ introduces
$\Sigma$ through mintermization of $I$ where the semantics is
preserved modulo renaming by definition of $\pform{\iota}$
that precisely encodes 
$|_{\alpha\in\Sigma,\alpha\IMP\iota}\alpha$
that is the mintermized equivalent form of $\iota$ in $\A$ as
the disjunction of all the minterms contained in $\iota$.
\end{proof}
The construction of $\pform{\iota}$ above can also use \emph{bit-blasting}
where minterms are directly encoded as \emph{independent} bits by propositions
$\{p_\alpha\}_{\alpha\in\Sigma}$. This alternative encoding
may be of practical interest is an actual conversion to the classical setting.

The following example illustrates the constructions used in the proof
of Theorem~\ref{thm:FA}.
\begin{ex}
Let $\A$ be a Boolean algebra of predicates over integers
and $\phi=(x{<}1)\Release(0{<}x)$ with $x$ of type \emph{integer}.
Then $I=\{x{<}1,0{<}x\}$, $P=\{p_{x{<}1},p_{0{<}x}\}$ 
and
$\Sigma=\{x{<}1{\andA}\notA0{<}x, 0{<}x{\andA}\notA x{<}1\}$.
The formulas $\pform{\iota}$ for $\iota\in I$ are
$\pform{x{<}1}=p_{x{<}1}{\land}\lnot p_{0{<}x}$ and
$\pform{0{<}x}=p_{0{<}x}{\land}\lnot p_{x{<}1}$.
And 
thus $\pform{\phi}=(p_{x{<}1}{\land}\lnot p_{0{<}x})\Release (p_{0{<}x}{\land}\lnot p_{x{<}1})$.

If we instead consider $x$ as a \emph{real} then
$\Sigma$ also includes $0{<}x{\andA}x{<}1$ while
$\notA 0{<}x{\andA}\notA x{<}1$ is still unsatisfiable.
Then
$\pform{x{<}1}=p_{x{<}1}{\land}\lnot p_{0{<}x}\lor p_{x{<}1}{\land}p_{0{<}x}$ that simplifies
to $\pform{x{<}1}=p_{x{<}1}$, and similarly we get that  $\pform{0{<}x}=p_{0{<}x}$,
in which case $\pform{\phi}=p_{x{<}1}\Release p_{0{<}x}$.
\end{ex}

%---------------------------------------------------------------

\begin{proof}[Proof of Lemma~\ref{lma:UR}]
We show (1), the proof of (2) is analogous.

\noindent
\emph{Direction $(\Rightarrow)$}: Let $w\models \varphi\Until\psi$.
Fix $j\in\Nat$ such that (\ref{eq:M6}) holds.
If $j=0$ then $\Rest[0]{w} = w \models \psi$ and we are done.
If $j > 0$ then, for all $i<j$, $\Rest[i]{w}\models\varphi$ and
$\Rest[j]{w}=\Rest[j-1]{(\Rest{w})}\models\psi$. In particular, $\Rest[0]{w}=w\models\varphi$ and
there exists $k=j-1\in\Nat$ such that
$\Rest[k]{(\Rest{w})}\models\psi$ and
for all $i<k$, $\Rest[i]{(\Rest{w})}\models\varphi$. So, by (\ref{eq:M6}),
$\Rest{w}\models\varphi\Until\psi$.

\noindent
\emph{Direction $(\Leftarrow)$}: If $w\models\psi$ then  $w\models \varphi\Until\psi$
follows immediately from (\ref{eq:M6}).
If $w\models\varphi$ and $\Rest{w}\models\varphi\Until\psi$ then
there exists $k\in\Nat$ such that, by (\ref{eq:M6}),
$\Rest[k]{(\Rest{w})}\models\psi$ and
for all $i<k$, $\Rest[i]{(\Rest{w})}\models\varphi$.
It follows that for $j=k+1$, $\Rest[j]{w}\models\psi$
and
for all $i<j$, $\Rest[i]{w}\models\varphi$.
So  $w\models \varphi\Until\psi$ by (\ref{eq:M6}).
\end{proof}

%---------------------------------------------------------------

\begin{proof}[Proof of Lemma~\ref{lma:tf}]
Fix $\M=(\A,Q,\Qi,\tf,F)$ and $w\in\Do$.
We prove the following statement by induction over the length of $u\in\Ds$:
\[
\forall
\phi\,{\in}\,\BCp{Q}: uw\,{\in}\,\Lang[\M]{\phi} \IFF w\,{\in}\,\Lang[\M]{\tfs{u}{\phi}}
\]
The empty word case holds by definition because $\tfs{\epsilon}{\phi}=\phi$.
For the nonempty word case we first prove the following statement:
\begin{equation}
\label{eq:tf}
\forall q\in Q,av\in\Do:av\,{\in}\,\Lang[\M]{q} \IFF v\,{\in}\,\Lang[\M]{\leafof{\tf(q)}{a}}
\end{equation}
\textit{Direction $\Rightarrow$ of (\ref{eq:tf})}:
Let $av\in \Lang[\M]{q}$.  So there exists an accepting
run $\tau$ for $av$ from $q$.
This implies that that there exists
$X\in\leafof{\tfINF(q)}{a}$ such that, for each $p\in X$,
$\tau$ has has an immediate subtree $\tau_p$ accepting a run for
$v$ from $p$.  In other words,
\[
v\in\bigcap_{p\in X}\Lang[\M]{p}=
\Lang[\M]{\bigwedge X}\subseteq\Lang[\M]{\leafof{\tfINF(q)}{a}}=\Lang[\M]{\leafof{\tf(q)}{a}}
\]
\textit{Direction $\Leftarrow$ of (\ref{eq:tf})}: Let $v\in \Lang[\M]{\leafof{\tf(q)}{a}}$.
So $v\in\Lang[\M]{\leafof{\tfINF(q)}{a}}$ and therefore
there exists $X\in\leafof{\tfINF(q)}{a}$ such that
$v\in\Lang[\M]{\bigwedge X}$. So there exist accepting
runs $\tau_p$ for $v$ from $p$ for each $p\in X$.
Let $\tau$ be the infinite tree with root labelled by $q$
having $\tau_p$, for $p\in X$, as its immediate subtrees.
Then $\tau$ is an accepting run for $av$ from $q$ and thus  $av\in \Lang[\M]{q}$.

\noindent
Next we lift (\ref{eq:tf}) to $\BCp{Q}$ where $\tf$ is lifted similarly (as with derivatives) and
use Lemma~\ref{lma:TT}.
\begin{equation}
\label{eq:tf2}
\forall \phi\in\BCp{Q},av\in\Do:av\,{\in}\,\Lang[\M]{\phi} \IFF v\,{\in}\,\Lang[\M]{\leafof{\tf(\phi)}{a}}
\end{equation}

\noindent
We now complete the induction case of the main statement.
\[
auw \in \Lang[\M]{\phi} 
\stackrel{(\ref{eq:tf2})}{\IFF} uw \in \Lang[\M]{\leafof{\tf(\phi)}{a}}
\stackrel{\textrm{IH}}{\IFF} w \in \Lang[\M]{\tfs{u}{\leafof{\tf(\phi)}{a}}} 
\IFF 
w \in \Lang[\M]{\tfs{au}{\phi}} 
\]
The lemma follows by the induction principle.
\end{proof}

%--------------------------------------------------------------------

\begin{proof}[Proof of Theorem~\ref{thm:ELTL-omega}] We first prove (1)`$\Rightarrow$',
and then (2) which imples (1)`$\Leftarrow$'.

\paragraph*{Proof of `$\Rightarrow$' of (1)}
Let $L=\Lang{\M}$ for some $\M\in\ABA[\A]$ and
let $R_i,S_i\in\RE[\A]$, for $1\leq i\leq n$ be given by Theorem~\ref{thm:omega}.
Let
\[
\begin{array}{l}
\phi=\bigvee_{i=1}^n\phi_i\;\textrm{where}\;
\phi_i=\ITEBrace{\Null{R_i}}{\ocl{S_i}\lor (R_i\eimpl\Next\ocl{S_i})}{R_i\eimpl\Next\ocl{S_i}}
\end{array}
\]
Fix $\phi_i$ and assume that $R_i$ is nullable. Then
$w \models \phi_i$ iff $w\models \ocl{S_i}$ 
or there exists $u\in\Ds$ and $v\in\Do$ such that $w=uv$ and
$u\ith{v}{0}\in\FLang{R_i}$ and $v\models\Next\ocl{S_i}$
iff there is $u\in\Dp$ and $v\in\Do$ such that
$u\in\FLang{R_i}$ and $v\models\ocl{S_i}$.
So $w\models\phi_i$ iff
$w\in\FLang{R_i}{\conc}\Lang{\ocl{S_i}}$.
The case when $R_i$ is not nullable is analogous.
It follows that $L=\Lang{\phi}$.

\paragraph*{Proof of (2)}
The proof is by extending the
original proof in \cite{Vardi95LTL} with the additional constructs.
We do not repeat the proof of the $\LTL$ fragment for the
modulo $\A$ case because it would repeat very similar arguments.
Let $\M=\M[\phi]$.

In the case of $\wcl{R}$ we get the $\DBA$ $\M[\wcl{R}]$ by fixpoint construction of derivatives
whose \emph{accepting} states are all the states $\wcl{r}$ such that $r$ is alive.
In the case of $\ocl{R}$ we get an $\NBA$ $\M[\ocl{R}]$ similarly but whose \emph{only accepting}
state is $\ocl{R}$. 
For \emph{negated weak closure} $\nwcl{R}$ the construction of  $\M[\nwcl{R}]$
is analogous with $\M[\wcl{R}]$ except that in $\M[\nwcl{R}]$
a state $\nwcl{r}$ is accepting iff $r$ is dead.
It follows in all those cases that $\Lang{\phi} = \Lang{\M[\phi]}$.

The main cases are $\eimpl$ and $\uimpl$ because they involve induction over
$\RLTLp$ formulas by building on the proof in \cite{Vardi95LTL}.
By construction of $\M$ we have that $\deriv{q}\feq\tf_{\M}(q)$ and we let $\tf=\tf_{\M}$ below.
We also use that
$\forall R\in\ERE[\A], a\in\D,u\in\Ds:a\in\den{\One{\ders{u}{R}}}\IFF ua\in\FLang{R}$.

Let $w\in\Do$. 
Let $\Prefix{i}{w}$
denote the finite prefix of $w$ up to (including) the element $\ith{w}{i}$.
\[
\begin{array}{@{}r@{\;}c@{\;}l@{}}
w\in\Lang[\M]{R{\eimpl}\phi} &\stackrel{(\ref{eq:eimpl-deriv}),\textrm{Lma~\ref{lma:tf}}}{\IFF}&
\Rest{w}\in\Lang[\M]{\leafof{(\tf(\One{R}\land\phi)\lor(\der{R}{\eimpl}\phi))}{\ith{w}{0}}}\\
&\stackrel{\textrm{Lma~\ref{lma:TT}}}{\IFF}&
(\ith{w}{0}{\in}\FLang{R}\band \Rest{w}{\in}\Lang[\M]{\tfs{\ith{w}{0}}{\phi}}) \bor
\Rest{w}{\in}\Lang[\M]{\ders{\ith{w}{0}}{R}{\eimpl}\phi} \\
&\stackrel{\textrm{Lma~\ref{lma:tf}}}{\IFF}&
(\ith{w}{0}{\in}\FLang{R}\band w\in\Lang[\M]{\phi}) \bor \Rest{w}\in\Lang[\M]{\ders{\ith{w}{0}}{R}{\eimpl}\phi} \\
%&\stackrel{\textrm{Lma~\ref{lma:tf}}}{\IFF}&
%(\ith{w}{0}{\in}\FLang{R}\band w\in\Lang[\M]{\phi}) \bor \\
%&& 
%(\Prefix{1}{w}{\in}\FLang{R}\band\Rest[1]{w}\in\Lang[\M]{\phi})
%\bor
%\Rest[2]{w}\in\Lang[\M]{\ders{\ith{w}{0}\ith{w}{1}}{R}{\eimpl}\phi} \\
&\stackrel{(\textrm{Lma~\ref{lma:tf}})^k}{\IFF}&
\bigvee_{i<k}(\Prefix{i}{w}\in\FLang{R} \band \Rest[i]{w}\in\Lang[\M]{\phi}) \\
&&
\bor \Rest[k]{w}\in\Lang[\M]{\ders{\Prefix{k{-}1}{w}}{R}\eimpl\phi} \\
&\stackrel{\textrm{(e)}}{\IFF}&
\exists i\geq 0: \Prefix{i}{w}\in\FLang{R} \band \Rest[i]{w}\in\Lang[\M]{\phi} \\
&\stackrel{\textrm{IH}}{\IFF}&
\exists i\geq 0: \Prefix{i}{w}\in\FLang{R} \band \Rest[i]{w}\models\phi \\
&\stackrel{(\ref{eq:eimpl})}{\IFF}&
w \models R{\eimpl}\phi
\end{array}
\]
In $(\textrm{Lma~\ref{lma:tf}})^k$ we repeat the previous unfolding a finite
number $k$ times.  In (e) we use the key property that all
states $r\eimpl\phi$ in $\M$ are \emph{rejecting}
and can therefore only be visited \emph{finitely} often in any accepting
run for $w$. Thus,
the disjunct with $\Rest[k]{w}$ acts as $\bot$ for large enough $k$. 
\[
\begin{array}{@{}r@{\;}c@{\;}l@{}}
w\in\Lang[\M]{R{\uimpl}\phi} &\stackrel{(\ref{eq:uimpl-deriv}),\textrm{Lma~\ref{lma:tf}}}{\IFF}&
\Rest{w}\in\Lang[\M]{\leafof{(\tf(\One{R}\limplies\phi)\land(\der{R}{\uimpl}\phi))}{\ith{w}{0}}}\\
&\stackrel{\textrm{Lma~\ref{lma:TT}}}{\IFF}&
({\ith{w}{0}\,{\in}\,\FLang{R}}\Rightarrow {\Rest{w}\,{\in}\,\Lang[\M]{\tfs{\ith{w}{0}}{\phi}}}) \band
\Rest{w}\,{\in}\,\Lang[\M]{\ders{\ith{w}{0}}{R}{\uimpl}\phi} \\
&\stackrel{\textrm{Lma~\ref{lma:tf}}}{\IFF}&
(\ith{w}{0}\,{\in}\,\FLang{R}\Rightarrow w\,{\in}\,\Lang[\M]{\phi})
\band \Rest{w}\,{\in}\,\Lang[\M]{\ders{\ith{w}{0}}{R}{\uimpl}\phi} \\
&\stackrel{(\textrm{Lma~\ref{lma:tf}})^m}{\IFF}&
\bigwedge_{i<m}(\Prefix{i}{w}\,{\in}\,\FLang{R} \Rightarrow \Rest[i]{w}\,{\in}\,\Lang[\M]{\phi}) \\
&&
\band \Rest[m]{w}\in\Lang[\M]{\ders{\Prefix{m{-}1}{w}}{R}\uimpl\phi} \\
&\stackrel{(\textrm{u})}{\IFF}&
\forall i\geq 0: \Prefix{i}{w}\in\FLang{R} \Rightarrow \Rest[i]{w}\in\Lang[\M]{\phi} \\
&\stackrel{\textrm{IH}}{\IFF}&
\forall i\geq 0: \Prefix{i}{w}\in\FLang{R} \Rightarrow \Rest[i]{w}\models\phi \\
&\stackrel{(\ref{eq:uimpl})}{\IFF}&
w \models R{\uimpl}\phi
\end{array}
\]
In $(\textrm{Lma~\ref{lma:tf}})^m$ we repeat the unfolding any finite
number $m$ times.  In (u) we use the key property that all
states $r\uimpl\phi$ in $\M$ are \emph{accepting}
and  do not interfere with the first conjunct. The second
conjunct with $\Rest[m]{w}$ acts as $\top$ in the limit.

The statement (2) follows by the induction principle
and the theorem follows.
\end{proof}

\end{document}